\newtheorem{theorem}{Theorem}
\newtheorem{definition}{Definition}
\newtheorem{lemma}{Lemma}
\newtheorem{corollary}{Corollary}
\DeclareMathOperator*{\argmax}{arg\,max}
\DeclareMathOperator*{\argmin}{arg\,min}
\newcommand{\vecnorm}[1]{\lVert #1 \rVert}
\newcommand{\e}{\mathrm{e}}
\begin{document}
%
\title{Confined Orthogonal Matching Pursuit for Sparse Random Combinatorial Matrices}
%
%
%

%

%

\author{{Xinwei Zhao, Jinming Wen, Hongqi Yang, and Xiao Ma}
\thanks{This work was supported by the National Key R\&D Program of China (No.  2021YFA1000500), the National Natural Science Foundation of China (Nos. 62471506, 12271215, 12326378 and 11871248), and the Guangdong Provincial Key Laboratory of Information Security Technology (No.  2023B1212060026).~(\emph{Corresponding author:~Xiao Ma})

X.~Zhao, H.~Yang, and X.~Ma are with the School of Computer Science and Engineering, Sun Yat-sen University, Guangzhou, 510006, China, and X.~Ma is also with Guangdong Key Laboratory of Information Security Technology, Sun Yat-sen University, Guangzhou, 510006, China.~(zhaoxw9@mail2.sysu.edu.cn;~mcsyhq@mail.sysu.edu.cn;~maxiao@mail.sysu.edu.cn)

J.~Wen is with the School of Mathematics, Jilin University, Jilin, 130012, China.~(jinming.wen@mail.mcgill.ca)


}
}

\maketitle

\begin{abstract}
Orthogonal matching pursuit~(OMP) is a commonly used greedy algorithm for recovering sparse signals from compressed measurements. In this paper, we introduce a variant of the OMP algorithm to reduce the complexity of reconstructing a class of $K$-sparse signals $\boldsymbol{x} \in \mathbb{R}^{n}$ from measurements $\boldsymbol{y} = \boldsymbol{A}\boldsymbol{x}$. In particular, $\boldsymbol{A} \in \{0,1\}^{m \times n}$ is a sparse random combinatorial matrix with independent columns, where each column is chosen uniformly among the vectors with exactly $d~(d \leq m/2)$ ones. The proposed algorithm, referred to as the confined OMP algorithm, leverages the properties of the sparse signal $\boldsymbol{x}$ and the measurement matrix $\boldsymbol{A}$ to reduce redundancy in $\boldsymbol{A}$, thereby requiring fewer column indices to be identified. To this end, we first define a confined set $\Gamma$~(see Definition~\ref{def:conf_set} in Sec.~\ref{subsec:confined_pro}) with $|\Gamma| \leq n$ and then prove that the support of $\boldsymbol{x}$ is a subset of $\Gamma$ with probability 1 if the distributions of nonzero components of $\boldsymbol{x}$ satisfy a certain condition. During the process of the confined OMP algorithm, the possibly chosen column indices are strictly confined to the confined set $\Gamma$. We further develop the lower bound on the probability of exact recovery of $\boldsymbol{x}$ using the confined OMP algorithm. The derived bound shows that, if the spark of $\boldsymbol{A}$ is greater than $K$, then $m = 2\e K \ln (n - K)$ measurements are sufficient to ensure the probability of recovering a $K$-sparse signal with the proposed algorithm is at least $1 - \frac{1}{n-K}$, where $\e$ represents Euler’s number. Furthermore, the obtained theoretical results can be used to optimize the column degree $d$ of $\boldsymbol{A}$. Finally, experimental results show that the confined OMP algorithm is more efficient in reconstructing a class of sparse signals compared to the OMP algorithm.

\end{abstract}

\begin{IEEEkeywords}
Compressed sensing, exact recovery probability, orthogonal matching pursuit, sparse signal recovery, sparse random combinatorial matrices
\end{IEEEkeywords}

\section{Introduction}\label{introduction}
Compressed sensing~(CS) as a novel sampling theory~\cite{Candes2005Decoding,Donoho2006Compressed} has attracted much attention over the past twenty years. In CS, it is common to encounter the following linear model
\begin{equation}\label{equ:linear_mod}
\boldsymbol{y}  = \boldsymbol{A} \boldsymbol{x},
\end{equation}
where $\boldsymbol{A} \in \mathbb{R}^{m\times n}$ is a measurement matrix with $m < n$, $\boldsymbol{x} \in \mathbb{R}^{n}$ is an unknown $K$-sparse signal~(i.e., $\boldsymbol{x}$  has at most $K$ nonzero elements) and $\boldsymbol{y} \in \mathbb{R}^{m}$ is a known measurement vector. The CS recovery algorithm aims to recover $\boldsymbol{x}$ from~(\ref{equ:linear_mod}) by solving the following $l_0$-minimization problem
\begin{equation}\label{equ:L0_min}
  \min \vecnorm{\boldsymbol{x}}_0~~\mathrm{s.t.}~\boldsymbol{y}  = \boldsymbol{A} \boldsymbol{x},
\end{equation}
where the $l_0$ norm $\vecnorm{\boldsymbol{x}}_0$ is the number of nonzero elements in $\boldsymbol{x}$, i.e., $\vecnorm{\boldsymbol{x}}_0 \overset{\triangle}{=} |\{ i : x_i \neq 0 \}|$. Here, $x_i$ is the $i$-th element of $\boldsymbol{x}$ and $|\mathcal{S}|$ is the cardinality of a set $\mathcal{S}$. Unfortunately, as shown in~\cite{Foucart2013A}, the problem~(\ref{equ:L0_min}) is NP-hard in general. Two methods are commonly used for solving this problem. One focuses on the reconstruction of sparse signals by considering a convex relaxation of~(\ref{equ:L0_min}), such as solving the $l_1$-minimization problem~\cite{Candes2005Robust,Wainwright2009Sharp,Chen2001Atomic,Bradley2004Least}. There are also algorithms solving~(\ref{equ:L0_min}) directly, such as greedy algorithms~\cite{Tropp2007Signal,Needell2010ROMP,Dai2009SP,NEEDELL2009CoSaMP,Rubinstein2008Efficient,Liu2012OMMP,Wang2012GOMP,Kwon2014MMP,Wen2021BMP} and thresholding algorithms~\cite{Thomas2009Iterative,Foucart2011Hard,Tanner2013Normalized,Donoho1995Denoising,Beck2009A}.

\begin{algorithm}[t]
\caption{Orthogonal matching pursuit algorithm}
\label{alg:OMP}
\begin{algorithmic}
\STATE{Input: $\boldsymbol{y} \in \mathbb{R}^{m}$, $\boldsymbol{A} \in \mathbb{R}^{m \times n}$, $\varepsilon$, and $K$. }
\STATE{Initialize: $k =0$, $\boldsymbol{r}^{(0)} = \boldsymbol{y}$, and $\Lambda^{(0)} = \emptyset$. }
\WHILE{`$\vecnorm{\boldsymbol{r}^{(k)}}_2  > \varepsilon $ \textbf{and} $k < K$ are met' }
\STATE{$k=k+1$,}
\STATE{$t^{(k)}= \underset{i \in [n] \backslash \Lambda^{(k-1)}}\argmax |\boldsymbol{A}_i^{T}\boldsymbol{r}^{(k-1)} |$,~\textbf{(Identification)}}
\STATE{$\Lambda^{(k)} = \Lambda^{(k-1)} \cup \{t^{(k)} \}$,~\textbf{(Augmentation)}}
\STATE{$\hat{\boldsymbol{x}}_{\Lambda^{(k)}}^{(k)} = \underset{\boldsymbol{x} \in \mathbb{R}^{k}}\argmin \vecnorm{\boldsymbol{y} - \boldsymbol{A}_{\Lambda^{(k)}} \boldsymbol{x} }_2$,~\textbf{(Estimation)}}
\STATE{$\boldsymbol{r}^{(k)} = \boldsymbol{y} - \boldsymbol{A}_{\Lambda^{(k)}} \hat{\boldsymbol{x}}_{\Lambda^{(k)}}^{(k)}$.~\textbf{(Residual update)}}
\ENDWHILE
\STATE{Output: $\hat{\boldsymbol{x}}^{(k)}$. }
\end{algorithmic}
\end{algorithm}

Among greedy algorithms, the OMP algorithm~\cite{Tropp2007Signal} is one of the most commonly used algorithms. As a greedy algorithm, the OMP algorithm identifies the support (index set of nonzero elements) of the sparse signal $\boldsymbol{x}$ in an iterative manner and thus
iteratively performs local optimal updates. Specifically, the process of the OMP algorithm at each iteration can be summarized into four steps~\cite{Wang2012GOMP}~(see~Algorithm~\ref{alg:OMP} for details):
\begin{itemize}
  \item Identification: select the column of $\boldsymbol{A}$ maximally correlated with the residual $\boldsymbol{r}^{(k-1)}$.
  \item Augmentation: add the index of the chosen column into the estimated support set $\Lambda^{(k)}$.
  \item Estimation: estimate the values of elements whose indices are in the estimated support set $\Lambda^{(k)}$ by solving a least squares problem.
  \item Residual update: eliminate the vestige of columns in $\Lambda^{(k)}$ from the measurement vector $\boldsymbol{y}$, resulting in a new residual used for the next iteration.
  \item Stopping criteria: the iteration process is terminated when the number of iterations reaches the maximum value $K$ or the $l_2$ norm of the residual $\boldsymbol{r}^{(k)}$ falls below a preset threshold $\varepsilon$.
\end{itemize}
Among these, the computational complexity of the OMP algorithm is mainly dominated by the identification step and the estimation step. In order to enhance the computational efficiency and recovery performance of the OMP algorithm, there have been some studies on the modified OMP algorithm, mainly focusing on the identification step. For example, the generalized OMP algorithm~\cite{Wang2012GOMP}~(a.k.a. orthogonal multi-matching pursuit algorithm~\cite{Liu2012OMMP}) allows multiple indices maximally correlated with the residual to be chosen at each iteration so that fewer number of iterations are required. The existing methods seek to efficiently select the ``true" column indices from the redundant dictionary~(measurement matrix~$\boldsymbol{A}$). This raises a question: Is it possible to reduce the redundancy of the dictionary $\boldsymbol{A}$?

Before answering this question, we introduce some useful tools to characterize the performance of the recovery algorithm. In~\cite{Candes2005Decoding}, Cand\`{e}s and Tao introduced the concept of \emph{restricted isometry property}~(RIP) and showed that if $\boldsymbol{A}$ satisfies RIP with relatively small \emph{restricted isometry constant}~(RIC) $\delta_{2K}$, any $K$-sparse signal can be exactly recovered by solving a $l_1$-minimization problem. In particular, it has been proved in~\cite{Wen2017A} that $\delta_{K+1} < 1/\sqrt{K+1}$ is sufficient for the OMP algorithm to recover any $K$-sparse signal from~(\ref{equ:linear_mod}) in $K$ iterations. The \emph{mutual coherence}, denoted as $\mu_m$, is also an important parameter for $\boldsymbol{A}$, which indicates the maximum absolute correlation between normalized columns of $\boldsymbol{A}$. It has been shown in~\cite{Tropp2004Greed} that any $K$-sparse signal can be exactly recovered by the OMP algorithm if $K < (\mu_m^{-1} + 1 )/2$. In~\cite{Tropp2007Signal}, the authors developed a lower bound on the probability that any $K$-sparse signal can be exactly recovered from~(\ref{equ:linear_mod}) by using the OMP algorithm in $K$ iterations, where the matrix $\boldsymbol{A}$ in~(\ref{equ:linear_mod}) is a random Gaussian matrix. In~\cite{Wen2020Signal}, this lower bound is further tightened with the aid of prior information of $\boldsymbol{x}$. Unfortunately, the lower bound techniques proposed in~\cite{Tropp2007Signal} and~\cite{Wen2020Signal} are only suitable for the case of Gaussian matrices.

In CS, the construction of measurement matrices is also one of the main concerns. In general, the random measurement matrices can be classified into dense and sparse matrices. It has been verified that many dense matrices, such as Gaussian matrices and Fourier matrices, satisfy the RIP with overwhelming probability~\cite{Baraniuk2008A} and have provably good recovery performance. On the other hand, sparse random matrices also attract much attention~\cite{berinde2008sparse,Jafarpour2009Efficient,Khajehnejad2011Sparse,Dimakis2012LDPC,Liu2017Reconstruction,Lu2018Binary,Zhao2025Sparse} since the sparsity can enable the computation of the matrix-product to be remarkably efficient and save the storage space in practice~\cite{Gilbert2010Sparse}. In particular, sparse binary-valued measurement matrices are commonly used in some applications, including group testing~\cite{Asilomar2008Group}, DNA Microarrays~\cite{Parvaresh2008Recovering}, and single-pixel imaging~\cite{Duarte2008Single}. Furthermore, many studies~\cite{Dimakis2012LDPC,Liu2017Reconstruction,Lotfi2020Compressed} showed that the sparse binary-valued measurement matrices are as ``good" as the dense ones both in theory and in practice.

In this paper, we focus on a type of sparse binary measurement matrix known as a \emph{sparse random combinatorial matrix}. This matrix $\boldsymbol{A} \in \{0,1\}^{m\times n}$ is constructed such that each column is generated independently and uniformly at random from the set of all binary vectors of length $m$ that contain exactly $d$ ones, where $d \leq m/2$. In particular, sparse random combinatorial matrices have some favorable properties. It has been shown in~\cite{Ferber2022Singularity,aigner2022sparse} that an $m \times m$ sparse random combinatorial matrix is nonsingular with probability $1-o(1)$ for a sufficiently large $m$, if the degree $d$ of each column satisfies $(1+\gamma)\ln m \leq d \leq m/2$ for a constant $\gamma > 0$. Now we answer the previous question. Yes, much redundancy of $\boldsymbol{A}$ can be removed if $\boldsymbol{x}$ is a signal defined in~Definition~\ref{def:conf_signal}. Actually, this observation can be traced back to the work of Khajehnejad et al. in~\cite{Khajehnejad2011Sparse}. They showed that the redundancy of a sparse matrix constructed by the expander theory can be eliminated if nonzero elements of $\boldsymbol{x}$ are non-negative. Our work is more general and takes the observation in~\cite{Khajehnejad2011Sparse} as a special case. Specifically, the sparse signal we consider is not limited to be non-negative, but a class of signals defined in~Definition~\ref{def:conf_signal}, including the Gaussian sparse signal. Furthermore, the considered measurement matrix $\boldsymbol{A}$ in this paper is more general than that constructed by the expander theory in~\cite{Khajehnejad2011Sparse}. The contributions of this paper are summarized as follows.
\begin{itemize}
  \item We first define the confined set $\Gamma$ with $|\Gamma| \leq n$ and prove that the support of $\boldsymbol{x}$ defined in~Definition~\ref{def:conf_signal} is a subset of $\Gamma$ with probability 1. To theoretically clarify the effectiveness of removing the redundancy of $\boldsymbol{A}$, we present the expectations of the sparsity of $\boldsymbol{y}$ and the size of $\Gamma$.
  \item We propose a variant of the OMP algorithm, referred to as the confined OMP algorithm, by introducing the confined set $\Gamma$ into the identification step. The possibly chosen column indices are strictly confined to the confined set $\Gamma$. We further analyze the complexities of the OMP and confined OMP algorithms. The analysis results show that the identification efficiency of confined OMP algorithm is at least $\frac{nKd-K}{|\Gamma|Kd-K+n}$ times that of OMP algorithm. Furthermore, the experimental results show that the confined OMP algorithm achieves a large reduction in complexity if $K \ll m$.
  \item We develop a lower bound on the probability of exact recovery of $\boldsymbol{x}$ defined in~Definition~\ref{def:conf_signal} using the confined OMP algorithm over a sparse random combinatorial matrix. Theoretical results show that, if the spark of $\boldsymbol{A}$ is greater than $K$, $m = 2\e K \ln (n - K)$ measurements  are sufficient to guarantee that the probability of recovering a $K$-sparse signal using the proposed algorithm is no lower than $1 - \frac{1}{n-K}$. In the asymptotic regime where both $m$ and $n$ tend to infinity with $n = m^\tau~(\tau > 1)$, the proposed algorithm can exactly recover signals with sparsity $K = o\left(\frac{m}{\ln m}\right)$ with probability 1.
\end{itemize}

The paper is organized as follows. We define the confined set $\Gamma$ and present the proposed algorithm in~Sec.~\ref{sec:confined}. In~Sec.~\ref{sec:analysis}, the expectations of the sparsity of $\boldsymbol{y}$ and the size of $\Gamma$ are investigated, and then the recovery performance analysis of the confined OMP algorithm is provided. Experimental results are presented in~Sec.~\ref{sec:experiments} and~Sec.~\ref{sec:conclusions} concludes the paper.

\emph{Notation:} we use boldface lowercase letters to denote column vectors and boldface
uppercase letters to denote matrices. The support of $\boldsymbol{x}$ is denoted by $\Omega$ and the complement of $\Omega$ is $\Omega^c = [n]\backslash \Omega = \{i:i\in [n],i \notin \Omega\}$, where $[n]$ represents the set $\{1,2,\cdots,n\}$. Let $x_i$ and $\boldsymbol{A}_j$ be the $i$-th element of $\boldsymbol{x}$ and the $j$-th column of $\boldsymbol{A}$, respectively. We use $|\Gamma|$ to denote the cardinality of a set $\Gamma$ and $A_{i,j}$ to denote the element of $\boldsymbol{A}$ located at the $i$-th row and the $j$-th column. We denote by $\boldsymbol{x}_{\Lambda}$ the sub-vector of $\boldsymbol{x}$ that contains the entries of $\boldsymbol{x}$ indexed by the set $\Lambda$, and $\boldsymbol{A}_{\Lambda}$ the sub-matrix of $\boldsymbol{A}$ that contains the columns of $\boldsymbol{A}$ indexed by the set $\Lambda$. We use $\boldsymbol{A}^T$ to denote the transpose of $\boldsymbol{A}$ and $\mathbf{E}[X]$ to represent the expectation of $X$.

\section{Proposed Algorithm}\label{sec:confined}
This section will first introduce the inherent properties of random combinatorial matrices and sparse signals, and then present the details of the proposed algorithm.

\subsection{The Confined Set}\label{subsec:confined_pro}
For a real number $\epsilon > 0$, let $\mathcal{E} = \{i: |y_i| \leq \epsilon \}$ where $y_i$ is the $i$-th element of $\boldsymbol{y}$. The definition of the confined set is as follows.
\begin{definition}[The confined set]\label{def:conf_set}
The confined set $\Gamma$ is specified by defining its complement $\Gamma^c$ as $\Gamma^c = \bigcup_{i\in \mathcal{E}}\Gamma^c_{i}$ where $\Gamma_{i}^{c} = \{ j : A_{i,j} = 1,~|y_i| \leq \epsilon \}$. That is, $\Gamma = [n]\backslash \Gamma^c$.
\end{definition}

The following Theorem gives the probability of $\{ \Omega \subseteq \Gamma \}$.

\begin{theorem}[Lower bound on $\mathbb{P} \{ \Omega \subseteq \Gamma \}$]\label{thm:conf}
  Suppose that in~(\ref{equ:linear_mod}), $\boldsymbol{A} \in \{0,1\}^{m \times n}$ is a random combinatorial matrix with independent columns, where each column is chosen uniformly among the vectors with $d$ ones. Furthermore, the $K$ nonzero components of $\boldsymbol{x}$ are independent and identically distributed~(i.i.d.), with the same cumulative distribution function~(CDF) $F_X(x) = \mathbb{P} \{X \leq x \}$ and probability density function~(PDF) $f_X(x)$. Then, the probability of $\{ \Omega \subseteq \Gamma \}$ is lower bounded by
\begin{equation}\label{equ:conf}
 \begin{split}
  &\mathbb{P} \{ \Omega \subseteq \Gamma \} \geq \\
    &1- m \sum_{\ell = 1}^{K} \binom{K}{\ell}  \left( \frac{d}{m} \right)^{\ell} \left( 1 - \frac{d}{m} \right)^{K - \ell} \Big( F_X^{*\ell}(\epsilon) - F_X^{*\ell}(-\epsilon) \Big),
 \end{split}
\end{equation}
where $F_X^{*\ell}(x) = (\underbrace{F_X * F_X * \cdots * F_X}_{\ell \text{ times}})(x)$ and the asterisk $*$ denotes the convolution operation.
\end{theorem}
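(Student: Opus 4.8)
The plan is to pass to the complement event $\{\Omega \not\subseteq \Gamma\} = \{\Omega \cap \Gamma^c \neq \emptyset\}$ and bound its probability from above, since $\mathbb{P}\{\Omega \subseteq \Gamma\} = 1 - \mathbb{P}\{\Omega \cap \Gamma^c \neq \emptyset\}$. First I would unfold Definition~\ref{def:conf_set}: a support index $j \in \Omega$ lies in $\Gamma^c$ exactly when there is a row $i$ with $A_{i,j} = 1$ and $|y_i| \leq \epsilon$. Reorganizing the resulting double union over $(i,j)$ by rows, and writing $S_i \overset{\triangle}{=} \{j \in \Omega : A_{i,j} = 1\}$ for the set of support columns that row $i$ ``touches'', this yields the identity $\{\Omega \cap \Gamma^c \neq \emptyset\} = \bigcup_{i=1}^{m} \big\{ |y_i| \leq \epsilon,\ S_i \neq \emptyset \big\}$ (rows with $|y_i| > \epsilon$ contribute the empty event, consistent with the union over $\mathcal{E}$). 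A union bound over the $m$ rows then gives $\mathbb{P}\{\Omega \cap \Gamma^c \neq \emptyset\} \leq \sum_{i=1}^{m} \mathbb{P}\{|y_i| \leq \epsilon,\ S_i \neq \emptyset\}$, and since the rows are statistically interchangeable this is $m$ times a single-row probability.

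The second step is to evaluate that single-row probability by conditioning on $|S_i|$. Because $x_j = 0$ off the support, the $i$-th measurement satisfies $y_i = \sum_{j \in S_i} x_j$, so $y_i$ is a sum of exactly $|S_i|$ of the nonzero components. The two structural facts I would invoke are that (i) each column of $\boldsymbol{A}$ distributes its $d$ ones uniformly among the $m$ rows, so $\mathbb{P}\{A_{i,j} = 1\} = \binom{m-1}{d-1}/\binom{m}{d} = d/m$, and that (ii) the columns are independent, so for a fixed row $i$ the indicators $\{A_{i,j}=1\}_{j \in \Omega}$ are i.i.d.\ Bernoulli$(d/m)$; consequently $|S_i| \sim \mathrm{Binomial}(K, d/m)$, giving $\mathbb{P}\{|S_i| = \ell\} = \binom{K}{\ell}(d/m)^\ell (1 - d/m)^{K-\ell}$. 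Conditioning on $\{|S_i| = \ell\}$ with $\ell \geq 1$, and using that $\boldsymbol{A}$ is independent of the nonzero values which share the common law $X$, the conditional law of $y_i$ is that of a sum of $\ell$ i.i.d.\ copies of $X$, whose CDF is the $\ell$-fold convolution $F_X^{*\ell}$. Hence $\mathbb{P}\{|y_i| \leq \epsilon \mid |S_i| = \ell\} = F_X^{*\ell}(\epsilon) - F_X^{*\ell}(-\epsilon)$, where the continuity of the common distribution (it admits the PDF $f_X$) lets me write the probability of the closed interval $[-\epsilon,\epsilon]$ as this difference of CDF values.

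Summing over $\ell$ from $1$ to $K$ against the Binomial weights and multiplying by $m$ then reproduces exactly the claimed upper bound on $\mathbb{P}\{\Omega \cap \Gamma^c \neq \emptyset\}$, and subtracting from $1$ finishes the argument. I expect the Binomial and convolution bookkeeping to be routine; the genuinely delicate points are the reorganization of the union over $(i,j)$ into a union over rows, and the clean separation of the two sources of randomness — the combinatorial structure of $\boldsymbol{A}$, which governs $|S_i|$, versus the i.i.d.\ signal values, which govern the conditional law of $y_i$. Making the conditioning rigorous, namely that conditioning on the row pattern $S_i$ (a function of $\boldsymbol{A}$ alone) leaves the joint law of the relevant $x_j$ intact, is where I would be most careful, since it is precisely what licenses replacing the conditional distribution of $y_i$ by the $\ell$-fold convolution $F_X^{*\ell}$.
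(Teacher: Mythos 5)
Your proposal is correct and follows essentially the same route as the paper's own proof: complementation plus a union bound over the $m$ rows, conditioning on the number of support columns hitting a given row (your $|S_i|$, the paper's $L_i$), the Binomial$(K,d/m)$ law from column independence, and the $\ell$-fold convolution $F_X^{*\ell}$ for the conditional law of $y_i$. Your explicit remark that conditioning on $S_i$ (a function of $\boldsymbol{A}$ alone) leaves the law of the signal values intact is exactly the independence the paper uses implicitly, so there is nothing to repair.
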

\begin{proof}
See Appendix~\ref{sec:app0}.
\end{proof}

With Theorem~\ref{thm:conf}, we have the following two Corollaries.

\begin{corollary}\label{cor:conti}
If the PDF $f_X(x)$ of nonzero components of $\boldsymbol{x}$ is a continuous function, then the probability $\mathbb{P} \{ \Omega \subseteq \Gamma \} \to 1$ as $\epsilon \to 0$.
\end{corollary}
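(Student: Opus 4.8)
The plan is to read off the explicit lower bound furnished by Theorem~\ref{thm:conf} and show directly that its right-hand side tends to $1$ as $\epsilon \to 0$. Since $m$, $K$, and $d$ are held fixed and the sum in~(\ref{equ:conf}) runs over the finite range $1 \le \ell \le K$, it suffices to prove that each individual term $F_X^{*\ell}(\epsilon) - F_X^{*\ell}(-\epsilon)$ vanishes as $\epsilon \to 0$. The bounded binomial weights $\binom{K}{\ell}(d/m)^{\ell}(1-d/m)^{K-\ell}$ and the constant factor $m$ then cause no difficulty, and the interchange of the limit with the finite summation is immediate.

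The first step is to interpret $F_X^{*\ell}$ probabilistically. If $X_1,\dots,X_\ell$ are i.i.d.\ with common CDF $F_X$ and PDF $f_X$, then $F_X^{*\ell}$ is the CDF of the partial sum $S_\ell = X_1 + \cdots + X_\ell$, so that $F_X^{*\ell}(\epsilon) - F_X^{*\ell}(-\epsilon) = \mathbb{P}\{-\epsilon < S_\ell \le \epsilon\}$. As $\epsilon \downarrow 0$ the intervals $(-\epsilon,\epsilon]$ decrease to $\{0\}$, and by continuity of the probability measure along this decreasing family this quantity converges to $\mathbb{P}\{S_\ell = 0\}$, i.e.\ to the mass of the atom of $S_\ell$ at the origin.

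The core observation is that the continuity hypothesis on $f_X$ rules out any such atom. For every $\ell \ge 1$ the sum $S_\ell$ admits the density $f_X^{*\ell} = f_X * \cdots * f_X$, which is again a genuine probability density; hence $S_\ell$ is a continuous random variable and $F_X^{*\ell}$ is a continuous function. In particular $F_X^{*\ell}$ has no jump at $0$, so $F_X^{*\ell}(0^-) = F_X^{*\ell}(0)$ and $\mathbb{P}\{S_\ell = 0\} = 0$. Consequently each term in~(\ref{equ:conf}) tends to $0$, the whole subtracted sum tends to $0$, and the lower bound tends to $1$; since $\mathbb{P}\{\Omega \subseteq \Gamma\} \le 1$ always holds, a squeeze yields $\mathbb{P}\{\Omega \subseteq \Gamma\} \to 1$.

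The only point demanding real care — the ``main obstacle'' — is precisely the passage from continuity of $f_X$ to the absence of an atom of $S_\ell$ at $0$; this is where the hypothesis is genuinely used. I would emphasize that it is needed only for $\ell \ge 1$, which is exactly the range appearing in the sum, so no degenerate $\ell = 0$ term (which would contribute a jump at the origin) ever arises. Everything else is routine once the terms are identified as the atoms $\mathbb{P}\{S_\ell = 0\}$ of absolutely continuous laws.
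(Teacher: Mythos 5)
Your proof is correct and takes essentially the same route as the paper: both invoke the lower bound of Theorem~\ref{thm:conf} and show that each term $F_X^{*\ell}(\epsilon) - F_X^{*\ell}(-\epsilon)$ vanishes as $\epsilon \to 0$ because the $\ell$-fold convolution law has no jump (no atom) at the origin. Your write-up is simply a more rigorous expansion of the paper's one-line argument, and it even reveals (correctly) that absolute continuity of $X$ alone would suffice, continuity of $f_X$ being stronger than needed.
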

\begin{proof}
Obviously, $F_X^{*\ell}(x)$ is a continuous function provided that $f_X(x)$ is a continuous function. According to~Theorem~\ref{thm:conf}, we have $\underset{\epsilon \to 0}\lim \mathbb{P} \{ \Omega \subseteq \Gamma \} = 1$.
\end{proof}

\begin{corollary}\label{cor:discr}
If the values of nonzero components of $\boldsymbol{x}$ share the same polarity, then the probability $\mathbb{P} \{ \Omega \subseteq \Gamma \} \to 1$ as $\epsilon \to 0$.
\end{corollary}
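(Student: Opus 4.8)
The plan is to reduce the claim, via Theorem~\ref{thm:conf}, to showing that each convolution difference $F_X^{*\ell}(\epsilon) - F_X^{*\ell}(-\epsilon)$ vanishes as $\epsilon \to 0$. The lower bound in~(\ref{equ:conf}) is $1$ minus a finite sum of $K$ terms, each of the form $\binom{K}{\ell}\left(\frac{d}{m}\right)^{\ell}\left(1-\frac{d}{m}\right)^{K-\ell}$ multiplied by $F_X^{*\ell}(\epsilon) - F_X^{*\ell}(-\epsilon)$. Since the binomial factors are fixed constants lying in $[0,1]$ and the sum over $\ell$ is finite, it suffices to establish $\lim_{\epsilon \to 0}\big(F_X^{*\ell}(\epsilon) - F_X^{*\ell}(-\epsilon)\big) = 0$ for every $\ell \in \{1, \dots, K\}$.

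The first step is to give $F_X^{*\ell}$ a probabilistic reading. I would let $X_1, \dots, X_\ell$ be i.i.d.\ copies of the common distribution of the nonzero components and set $S_\ell = X_1 + \cdots + X_\ell$. By the definition of the convolution of CDFs, $F_X^{*\ell}$ is the CDF of $S_\ell$, so that $F_X^{*\ell}(\epsilon) - F_X^{*\ell}(-\epsilon) = \mathbb{P}\{-\epsilon < S_\ell \leq \epsilon\}$. The key observation is that the same-polarity hypothesis rules out any probability mass of $S_\ell$ at the origin, even though $F_X$ itself may be discontinuous. Because the nonzero entries are nonzero and share a common sign, either $X > 0$ almost surely or $X < 0$ almost surely; in the first case $S_\ell > 0$ almost surely and in the second $S_\ell < 0$ almost surely, so in either situation $\mathbb{P}\{S_\ell = 0\} = 0$. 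Since the events $\{-\epsilon < S_\ell \leq \epsilon\}$ decrease to $\bigcap_{\epsilon > 0}\{-\epsilon < S_\ell \leq \epsilon\} = \{S_\ell = 0\}$ as $\epsilon \downarrow 0$, continuity of the probability measure from above yields $\mathbb{P}\{-\epsilon < S_\ell \leq \epsilon\} \to \mathbb{P}\{S_\ell = 0\} = 0$, which is exactly what is needed. Assembling the $K$ vanishing terms then drives the bound in~(\ref{equ:conf}) to $1$.

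There is no genuinely hard step here; the only subtlety worth stating carefully is why the argument of Corollary~\ref{cor:conti} cannot simply be reused. That proof relied on continuity of $F_X^{*\ell}$, which fails for discrete signals such as the antipodal model where the nonzero entries take a single fixed value. The present plan sidesteps this by working with the event $\{S_\ell = 0\}$ directly: a fixed sign forces the $\ell$-fold sum strictly away from zero, so the jump of $F_X^{*\ell}$ at the origin is absent regardless of whether $F_X$ has jumps elsewhere. Thus the distinction between this corollary and the previous one lies entirely in replacing a continuity-of-distribution argument by a support argument, and the main conceptual point is to recognize that only the atom of $S_\ell$ at $0$ matters for the limit.
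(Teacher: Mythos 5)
Your proof is correct; note that the paper actually states Corollary~\ref{cor:discr} with no proof at all, presenting it as an immediate consequence of Theorem~\ref{thm:conf}. Your argument---reading $F_X^{*\ell}$ as the CDF of the $\ell$-fold sum $S_\ell$ of i.i.d.\ nonzero components, using the common sign to conclude $\mathbb{P}\{S_\ell = 0\} = 0$, and applying continuity from above to get $F_X^{*\ell}(\epsilon) - F_X^{*\ell}(-\epsilon) = \mathbb{P}\{-\epsilon < S_\ell \leq \epsilon\} \to 0$---is exactly the natural formalization of that implicit reasoning, and your closing remark correctly identifies why the continuity argument of Corollary~\ref{cor:conti} cannot be reused for discrete (e.g., flat) signals, which is precisely why the paper needs this separate statement.
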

The Corollaries indicate that by setting $\epsilon \to 0$, both Gaussian sparse signals and non-negative sparse signals can ensure $\mathbb{P} \{ \Omega \subseteq \Gamma   \} = 1$. We call these signals \emph{confined signals} in this paper. The definition of the confined signal is given as follows.
\begin{definition}[The confined signal]\label{def:conf_signal}
The confined signal is a $K$-sparse signal whose nonzero components are i.i.d. and, for $\ell = 1,2,\cdots,K$, their CDFs satisfy $F_X^{*\ell}(\epsilon)  - F_X^{*\ell}(-\epsilon) \to 0$ as $\epsilon \to 0$.
\end{definition}

\subsection{Confined OMP Algorithm}
The confined OMP is a modification of the OMP algorithm that introduces a confined set $\Gamma$. The key feature of the confined OMP algorithm is to introduce a confined set $\Gamma$ into the identification step such that the possibly chosen column indices are strictly confined to the confined set $\Gamma$. In the following, we assume that the signal $\boldsymbol{x}$ is a confined signal defined in~Definition~\ref{def:conf_signal} and has exactly $K$ nonzero elements. The details of the confined OMP algorithm are summarized in Algorithm~\ref{alg:COMP}. The necessary explanations of the confined OMP algorithm are as follows.
\begin{itemize}
  \item (\textbf{Preprocessing}) For a sufficiently small number $\epsilon$~(e.g., $\epsilon = 10^{-12}$), obtain the set $\mathcal{E} = \{i: |y_i| \leq \epsilon \}$ first. Then, obtain the confined set $\Gamma = [n]\backslash \bigcup_{i\in\mathcal{E}} \Gamma_i^{c}$, where $\Gamma_{i}^{c} = \{ j : A_{i,j} = 1,~|y_i| \leq \epsilon \}$. With $\Omega \subseteq \Gamma$, we have $\Omega = \Gamma$ if they have the same size. Thus, the identification is already done without the subsequent iterations.
  \item (\textbf{Identification}) If $|\Gamma| > K$, the iterative processing is executed. In each iteration, correlations between columns whose indices are in $\Gamma$ and the residual are compared. The column index corresponding to the maximal correlation is chosen as the new element of the estimated support set $\Lambda^{(k)}$.
\end{itemize}

\begin{algorithm}[t]
\caption{Confined orthogonal matching pursuit algorithm}
\label{alg:COMP}
\begin{algorithmic}
\STATE{Input: $\boldsymbol{y} \in \mathbb{R}^{m}$, $\boldsymbol{A} \in \{0,1\}^{m \times n}$, $\epsilon$, $\varepsilon$, and $K$. }
\STATE{Initialize: $k =0$, $\boldsymbol{r}^{(0)} = \boldsymbol{y}$, and $\Lambda^{(0)} = \emptyset$. }
\STATE{Preprocessing: $\mathcal{E} = \{i:|y_i| \leq \epsilon \}$ and $\Gamma = [n]\backslash \bigcup_{i\in\mathcal{E}} \Gamma_i^{c}$; }
\IF{$|\Gamma| = K$}
    \STATE  $\hat{\boldsymbol{x}}_{\Gamma} = \underset{\boldsymbol{x} \in \mathbb{R}^{|\Gamma|}}\argmin \vecnorm{\boldsymbol{y} - \boldsymbol{A}_{\Gamma} \boldsymbol{x} }_2$,
    \STATE \textbf{return} $\hat{\boldsymbol{x}}$.
\ENDIF
\WHILE{`$\vecnorm{\boldsymbol{r}^{(k)}}_2 > \varepsilon $ \textbf{and} $k < K$ are met' }
\STATE{$k=k+1$,}
\STATE{$t^{(k)}= \underset{i \in \Gamma \backslash \Lambda^{(k-1)} }\argmax |\boldsymbol{A}_i^{T}\boldsymbol{r}^{(k-1)} |$,~\textbf{(Identification)}}
\STATE{$\Lambda^{(k)} = \Lambda^{(k-1)} \cup \{t^{(k)} \}$,~\textbf{(Augmentation)}}
\STATE{$\hat{\boldsymbol{x}}_{\Lambda^{(k)}}^{(k)} = \underset{\boldsymbol{x} \in \mathbb{R}^{k}}\argmin \vecnorm{\boldsymbol{y} - \boldsymbol{A}_{\Lambda^{(k)}} \boldsymbol{x} }_2$,~\textbf{(Estimation)}}
\STATE{$\boldsymbol{r}^{(k)} = \boldsymbol{y} - \boldsymbol{A}_{\Lambda^{(k)}} \hat{\boldsymbol{x}}_{\Lambda^{(k)}}^{(k)}$.~\textbf{(Residual update)}}
\ENDWHILE
\RETURN{$\hat{\boldsymbol{x}}^{(K)}$. }
\end{algorithmic}
\end{algorithm}

We discuss the complexity of OMP algorithm and confined OMP algorithm as follows. It is known that the matrix-vector product can be divided into two steps: multiplication and addition. For a binary-valued matrix, the matrix-vector product only involves addition, no multiplication is required. If nonzero elements of $\boldsymbol{x}$ are real numbers, the addition requires at most $n(d-1)$ floating-point operations~(flops). Furthermore, the selection of the maximum inner-product value in the identification requires $n-1$ flops. Thus, the identification of OMP algorithm requires at most $Kn(d-1) + K(n-1)$ flops.

Thanks to $|\Gamma| \leq n$, the confined OMP algorithm requires fewer inner products in the identification step compared to the conventional OMP algorithm. Specifically, no inner product is required if $|\Gamma| = K$. Otherwise, the identification requires at most $K|\Gamma|(d-1)+ K(|\Gamma|-1)$ flops, where $K|\Gamma|(d-1)$ flops are required for the matrix-vector product operations and $K(|\Gamma|-1)$ flops for the selections of the maximum inner-product values. Furthermore, the preprocessing step involves additional operations. Specifically, there are extra $n$ flops to obtain $\mathcal{E}$ and the complexity of obtaining $\Gamma$ is negligible. Consider identification and preprocessing together, these steps require $n$ flops if $|\Gamma| = K$. Otherwise, at most $K|\Gamma|(d-1) + K(|\Gamma|-1) + n$ flops are required. Table~\ref{tab:complexity} summarizes the complexity of OMP algorithm and confined OMP algorithm in the identification step. The improvement of the proposed algorithm over the OMP algorithm is around $Kd$ times in terms of identification efficiency if $|\Gamma| = K$, and $\frac{Knd-K}{K|\Gamma|d-K+n}$ times otherwise.

\begin{table}[t]
\caption{Comparison of the complexity of the OMP algorithm and the confined OMP algorithm in the identification step.}\label{tab:complexity}
\centering
\begin{tabular}{|l|l|ll|}
\hline
Algorithm & \multicolumn{1}{c|}{OMP}  & \multicolumn{2}{l|}{confined OMP} \\ \hline
Condition  & \centering{ null}                     & \multicolumn{1}{l|}{$|\Gamma| = K$}   & $|\Gamma| > K$  \\ \hline
Flops      & $Knd - K$                        & \multicolumn{1}{l|}{$n$}     & $K|\Gamma|d-K+n$    \\ \hline
\end{tabular}
\end{table}

It can be foreseen that the confined OMP algorithm achieves a large reduction in complexity if $|\Gamma| \ll n$. Especially in case where $|\Gamma| = K$, the confined OMP algorithm eliminates the need for the identification. Even if $|\Gamma| = n$, the computational complexity of OMP algorithm and confined OMP algorithm are comparable, since the extra complexity introduced by the preprocessing step is negligible.

\subsection{Extension to the Generalized OMP Algorithm}
One advantage of the confined set is that it can be easily applied to many other sparse signal recovery algorithms. In this subsection, we use the generalized OMP~(gOMP) algorithm as an example for illustration.

The gOMP algorithm~\cite{Wang2012GOMP,Liu2012OMMP} is an efficient greedy recovery algorithm that allows selecting multiple column indices maximally correlated with the residual at each iteration, thereby reducing the required number of iterations. Similar to the confined OMP algorithm, the confined gOMP algorithm is a variant of gOMP algorithm that confines the support estimation to a reduced subset $\Gamma$ of the dictionary. Let $N$ be the number of column indices selected by the confined gOMP algorithm in the identification step of each iteration. The details of the confined gOMP algorithm are given in Algorithm~\ref{alg:GCOMP}. The main differences are summarized below.
\begin{itemize}
  \item (\textbf{Preprocessing}) If $|\Gamma| \leq \max\{K,N\}$, then the confined gOMP algorithm directly performs the least squares to estimate the sparse signal $\boldsymbol{x}$. Compared to the confined OMP algorithm, the confined gOMP algorithm allows for some extra redundancy in the preprocessing step when $K < N$.
  \item (\textbf{Identification}) In the identification step, if $N \leq |\Gamma \backslash \Lambda^{(k-1)}|$, column indices corresponding to the largest $N$ correlation in magnitude are selected to form a set $\mathcal{T}^{(k)}$. Otherwise, the remaining column indices in $\Gamma \backslash \Lambda^{(k-1)}$ are chosen.
  \item (\textbf{Stopping criteria}) The iterative process is terminated if $\vecnorm{\boldsymbol{r}^{(k)}}_2 \leq \varepsilon $, the iteration number reaches maximum $\min \{ K, \lfloor \frac{m}{N} \rfloor \}$, or the size of the estimated support set $\Lambda^{(k)}$ is equal to $|\Gamma|$.
\end{itemize}

\begin{algorithm}[t]
\caption{Confined generalized orthogonal matching pursuit algorithm}
\label{alg:GCOMP}
\begin{algorithmic}
\STATE{Input: $\boldsymbol{y} \in \mathbb{R}^{m}$, $\boldsymbol{A} \in \{0,1\}^{m \times n}$, $\epsilon$, $\varepsilon$, $N$, and $K$. }
\STATE{Initialize: $k =0$, $\boldsymbol{r}^{(0)} = \boldsymbol{y}$, and $\Lambda^{(0)} = \emptyset$. }
\STATE{Preprocessing: $\mathcal{E} = \{i:|y_i| \leq \epsilon \}$ and $\Gamma = [n]\backslash \bigcup_{i\in\mathcal{E}} \Gamma_i^{c}$; }
\IF{$|\Gamma| \leq \max\{K,N\}$}
    \STATE  $\hat{\boldsymbol{x}}_{\Gamma} = \underset{\boldsymbol{x} \in \mathbb{R}^{|\Gamma|}}\argmin \vecnorm{\boldsymbol{y} - \boldsymbol{A}_{\Gamma} \boldsymbol{x} }_2$,
    \STATE \textbf{return} $\hat{\boldsymbol{x}}$.
\ENDIF
\WHILE{`$\vecnorm{\boldsymbol{r}^{(k)}}_2 > \varepsilon $, $k < \min \{ K, \lfloor \frac{m}{N} \rfloor \}$, \textbf{and} $|\Lambda^{(k)}| < |\Gamma|$  are met' }
\STATE{$k=k+1$,}
\IF{$N \leq |\Gamma \backslash \Lambda^{(k-1)}|$}
    \STATE $\mathcal{T}^{(k)}= \underset{ \substack{ |\mathcal{T}^{(k)}| = N \\ i \in \Gamma \backslash \Lambda^{(k-1)} } }\argmax |\boldsymbol{A}_i^{T}\boldsymbol{r}^{(k-1)} |$,~\textbf{(Identification)}
\ELSE
    \STATE $\mathcal{T}^{(k)}= \Gamma \backslash \Lambda^{(k-1)}$,~\textbf{(Identification)}
\ENDIF
\STATE{$\Lambda^{(k)} = \Lambda^{(k-1)} \cup \mathcal{T}^{(k)}$,~\textbf{(Augmentation)}}
\STATE{$\hat{\boldsymbol{x}}_{\Lambda^{(k)}}^{(k)} = \underset{\boldsymbol{x} \in \mathbb{R}^{|\Lambda^{(k)}|}}\argmin \vecnorm{\boldsymbol{y} - \boldsymbol{A}_{\Lambda^{(k)}} \boldsymbol{x} }_2$,~\textbf{(Estimation)}}
\STATE{$\boldsymbol{r}^{(k)} = \boldsymbol{y} - \boldsymbol{A}_{\Lambda^{(k)}} \hat{\boldsymbol{x}}_{\Lambda^{(k)}}^{(k)}$.~\textbf{(Residual update)}}
\ENDWHILE
\RETURN{$\hat{\boldsymbol{x}}^{(K)}$. }
\end{algorithmic}
\end{algorithm}

\section{Analysis of The Proposed Algorithm}\label{sec:analysis}
In this section, we first investigate the expectations of the sparsity of $\boldsymbol{y}$ and the size of the confined set $\Gamma$. Then, the recovery performance of the confined OMP algorithm using sparse random combinatorial matrices is analyzed under a noiseless linear system. Finally, we investigate the recovery performance and robustness of the proposed algorithm in a noisy linear system.

\subsection{The Sparsity of $\boldsymbol{y}$}
Studying the sparsity of $\boldsymbol{y}$ helps to derive the subsequent Lemmas and Theorems. Let $\epsilon \to 0$ and $\mathcal{E} = \{i: |y_i| \leq \epsilon \}$. Then, for $k\in\{1,2,\cdots,K\}$, let $\nu^{(k)} = m - |\mathcal{E}|$ be the number of ``nonzero" elements of $\boldsymbol{y}= \boldsymbol{A}\boldsymbol{x}$, where $\boldsymbol{x}$ is the confined signal and has $k$ nonzero elements. Obviously, the value of $\nu^{(k)}$ ranges from $d$ to $\min \{kd,m\}$. The following Lemma gives the probability of $\nu^{(k)}$ for $k=1,2,\cdots,K$.

\begin{lemma}\label{Lemm:pro_mu}
For any integer $K$, it holds that $\mathbb{P} \{ \nu^{(1)} = d \} = 1$ and
\begin{equation}\label{equ:pro_nu}
\mathbb{P} \{ \nu^{(k)} = \upsilon \} = \sum_{ z = \max \{ \upsilon - d ,d\}}^{\min \{\upsilon, (k-1)d \} } \frac{ \binom{z}{\upsilon-z} \binom{m-z}{d-\upsilon+z }}{ \binom{m}{d} } \mathbb{P} \{ \nu^{(k-1)} = z \}
\end{equation}
for $k = 2,3,\cdots, K$ and $\upsilon = d,d+1,\cdots,\min \{kd,m\}$, where $\mathbb{P} \{ \nu^{(k-1)} = z \}$ can be calculated by~(\ref{equ:pro_nu}) recursively.
\end{lemma}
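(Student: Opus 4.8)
The plan is to strip the random signal values out of $\nu^{(k)}$, reducing it to a purely combinatorial quantity --- the number of rows covered by the $k$ support columns --- and then to obtain the recursion by revealing those columns one at a time.

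First I would dispose of the base case: if $\boldsymbol{x}$ has a single nonzero entry $x_{j}$, then $\boldsymbol{y}=x_{j}\boldsymbol{A}_{j}$, and since $x_{j}\neq 0$ while $\boldsymbol{A}_{j}$ has exactly $d$ ones, $\boldsymbol{y}$ has exactly $d$ nonzero entries for every realization, giving $\mathbb{P}\{\nu^{(1)}=d\}=1$. The crucial structural observation for $k\ge 2$ is a \emph{no-cancellation} property of confined signals: writing $y_{r}=\sum_{i:\,A_{r,j_i}=1}x_{j_i}$, if exactly $\ell\ge 1$ support columns are active in row $r$ then $y_{r}$ is a sum of $\ell$ i.i.d. copies of the nonzero value, whose distribution has CDF $F_X^{*\ell}$, so $\mathbb{P}\{|y_r|\le\epsilon\}=F_X^{*\ell}(\epsilon)-F_X^{*\ell}(-\epsilon)\to 0$ as $\epsilon\to0$ by Definition~\ref{def:conf_signal} (this is exactly the mechanism used in the proof of Theorem~\ref{thm:conf}). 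Consequently, in the limit $\epsilon\to0$, a row is ``nonzero'' almost surely if and only if at least one support column carries a $1$ there, so $\nu^{(k)}$ equals the size of the union $U^{(k)}=\bigcup_{i=1}^{k}\mathrm{supp}(\boldsymbol{A}_{j_i})$ of the row-supports, a quantity independent of the signal values.

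I would then build $U^{(k)}$ incrementally and use two invariances. Since the columns are independent and each is a uniformly random $d$-subset of $[m]$ whose law is invariant under row permutations, the conditional distribution of $|U^{(k)}|$ given the whole history depends only on $z:=|U^{(k-1)}|$; this Markov property is what permits a scalar recursion. Conditioned on a fixed covered set of $z$ rows, the event $|U^{(k)}|=\upsilon$ forces exactly $\upsilon-z$ of the new column's $d$ ones to land among the $m-z$ uncovered rows and the remaining $d-\upsilon+z$ among the $z$ covered rows, an event of hypergeometric probability $\binom{z}{d-\upsilon+z}\binom{m-z}{\upsilon-z}/\binom{m}{d}$. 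Combining this transition probability with $\mathbb{P}\{\nu^{(k-1)}=z\}$ through the law of total probability yields~(\ref{equ:pro_nu}); the summation range is fixed by feasibility, since a union of $k-1$ columns satisfies $d\le z\le (k-1)d$ while one further column adds between $0$ and $d$ new rows so that $z\le\upsilon\le z+d$, which together give $\max\{\upsilon-d,d\}\le z\le\min\{\upsilon,(k-1)d\}$.

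The step I expect to be the main obstacle is not the hypergeometric count but its two supporting justifications, which I would isolate as preliminary claims. The first is the rigorous $\epsilon\to0$ passage establishing that $\nu^{(k)}$ is almost surely the integer $|U^{(k)}|$; here one must control all $m$ rows simultaneously and check that the finitely many vanishing-probability events do not accumulate. The second is the Markov reduction, namely that the one-step kernel depends on the past only through $|U^{(k-1)}|$ rather than through the identity of the covered rows; this follows from row-exchangeability together with column independence, but it deserves an explicit statement before the recursion is assembled.
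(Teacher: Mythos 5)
Your proposal is correct and follows essentially the same route as the paper's own proof: the same base case, the same reduction of $\nu^{(k)}$ to the size of the union of the $k$ column supports (the paper phrases this as an element-wise OR of the columns), and the same Markov / law-of-total-probability decomposition with a hypergeometric transition kernel. You are in fact more careful than the paper, which simply asserts the OR-reduction without the $\epsilon\to 0$ no-cancellation argument and asserts the Markov property without invoking row-exchangeability; your two ``preliminary claims'' fill real gaps in the published argument.

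One discrepancy deserves explicit attention. Your kernel
$\binom{z}{d-\upsilon+z}\binom{m-z}{\upsilon-z}\big/\binom{m}{d}$
(place $\upsilon-z$ of the new column's ones on the $m-z$ uncovered rows and the remaining $d-\upsilon+z$ on the $z$ covered rows) is the correct transition probability, but it is \emph{not} the expression printed in the lemma and in the paper's appendix, which reads
$\binom{z}{\upsilon-z}\binom{m-z}{d-\upsilon+z}\big/\binom{m}{d}$.
The printed expression counts columns that place $\upsilon-z$ ones \emph{inside} the covered set, i.e., it is the probability of the event $\{\nu^{(k)}=2z+d-\upsilon \mid \nu^{(k-1)}=z\}$; the two expressions agree only in degenerate situations (e.g., $d=m/2$). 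Moreover, the support of your kernel (nonzero exactly when $\upsilon-d\leq z\leq \upsilon$ and $\upsilon\geq d$) matches the summation limits $\max\{\upsilon-d,d\}\leq z\leq\min\{\upsilon,(k-1)d\}$ stated in the lemma, whereas the printed kernel would additionally require $\upsilon\leq 2z$. So the binomial arguments in~(\ref{equ:pro_nu}) are evidently transposed by a typo, your derivation is the intended one, and you should state explicitly that your proof establishes the corrected formula rather than the formula as literally printed.
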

\begin{proof}
See Appendix~\ref{sec:app1}.
\end{proof}

Once the probability $\mathbb{P} \{ \nu^{(K)} = \upsilon \}$ is obtained, one can easily obtain the expectation of $\nu^{(K)}$.

\begin{lemma}[The expectation of $\nu^{(K)}$]\label{thm:exp_y}
Suppose that in~(\ref{equ:linear_mod}), $\boldsymbol{A} \in \{0,1\}^{m \times n}$ is a random combinatorial matrix with $d$ ones per column and the signal $\boldsymbol{x}$ is a confined signal. Then,
\begin{equation}\label{equ:exp_y}
\mathbf{E}_\upsilon [ \nu^{(K)} ]= \sum_{\upsilon = d}^{\min\{Kd,m\}} \upsilon \cdot  \mathbb{P} \{ \nu^{(K)} = \upsilon \},
\end{equation}
where $\mathbb{P} \{ \nu^{(K)} = \upsilon \}$ is given in~(\ref{equ:pro_nu}). Also, it can be easily calculated by
\begin{equation}\label{equ:expy_simple}
\mathbf{E}_\upsilon [ \nu^{(K)} ]= m\left( 1 - \left(1 - \frac{d}{m}\right)^K \right).
\end{equation}
\end{lemma}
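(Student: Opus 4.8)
The plan is to bypass the recursion of Lemma~\ref{Lemm:pro_mu} entirely for the closed form~(\ref{equ:expy_simple}) and instead compute $\mathbf{E}_\upsilon[\nu^{(K)}]$ by linearity of expectation over the $m$ rows. Equation~(\ref{equ:exp_y}) is nothing more than the definition of the expectation of the integer-valued random variable $\nu^{(K)}$ with respect to the distribution supplied by Lemma~\ref{Lemm:pro_mu}, so the substance of the argument lies in establishing~(\ref{equ:expy_simple}).

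First I would write $\nu^{(K)}$ as a sum of row indicators. Let $\Omega$ denote the support of $\boldsymbol{x}$ with $|\Omega| = K$, and observe that $y_i = \sum_{j\in\Omega} A_{i,j}\,x_j$. Setting $Z_i = \mathbbm{1}[\,|y_i| > \epsilon\,]$ gives $\nu^{(K)} = \sum_{i=1}^m Z_i$, so by linearity $\mathbf{E}_\upsilon[\nu^{(K)}] = \sum_{i=1}^m \mathbb{P}\{|y_i| > \epsilon\}$. The next step is to identify, in the limit $\epsilon \to 0$, exactly when a row is ``nonzero.'' The confined-signal property of Definition~\ref{def:conf_signal} guarantees that a sum of any $\ell \geq 1$ of the i.i.d.\ nonzero components is nonzero almost surely, since $F_X^{*\ell}(\epsilon) - F_X^{*\ell}(-\epsilon) \to 0$. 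Hence $y_i \neq 0$ if and only if at least one column of $\Omega$ places a one in row $i$, i.e.\ $\mathbb{P}\{|y_i| > \epsilon\} \to \mathbb{P}\{\exists\, j \in \Omega : A_{i,j} = 1\}$.

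Complementing this event and using that the columns of $\boldsymbol{A}$ are independent, I would factor
\begin{equation*}
\mathbb{P}\{A_{i,j} = 0 \text{ for all } j \in \Omega\} = \prod_{j \in \Omega} \mathbb{P}\{A_{i,j} = 0\}.
\end{equation*}
A single-column symmetry count gives $\mathbb{P}\{A_{i,j} = 1\} = \binom{m-1}{d-1}/\binom{m}{d} = d/m$ for every entry, so each factor equals $1 - d/m$ and the product is $(1 - d/m)^K$, independent of the row index $i$. Summing the identical per-row contribution $1 - (1 - d/m)^K$ over the $m$ rows then yields~(\ref{equ:expy_simple}).

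The one point requiring care---the main obstacle---is the almost-sure no-cancellation claim: one must invoke the confined-signal hypothesis to rule out the measure-zero event that a genuine sum of overlapping nonzero entries collapses to within $\epsilon$ of zero as $\epsilon \to 0$, which is precisely what Definition~\ref{def:conf_signal} supplies. Everything else reduces to linearity of expectation together with a routine symmetry computation, and one can optionally remark that~(\ref{equ:exp_y}) and~(\ref{equ:expy_simple}) must therefore agree, giving a consistency check on the recursion in Lemma~\ref{Lemm:pro_mu}.
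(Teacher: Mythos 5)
Your proof is correct. The paper offers no explicit proof of this lemma (the closed form (\ref{equ:expy_simple}) is simply asserted to be ``easily calculated''), and your argument---writing $\nu^{(K)}$ as a sum of row indicators, invoking the confined-signal property to rule out cancellation so that a row is ``nonzero'' exactly when at least one of the $K$ support columns hits it, and using column independence plus the symmetry count $\mathbb{P}\{A_{i,j}=1\}=d/m$ to get the per-row probability $1-(1-d/m)^{K}$---is precisely the natural derivation; it is also the computation the paper implicitly relies on elsewhere, e.g.\ the identity $\sum_{\ell=1}^{K}\binom{K}{\ell}\left(\frac{d}{m}\right)^{\ell}\left(1-\frac{d}{m}\right)^{K-\ell}=1-\left(1-\frac{d}{m}\right)^{K}$ appearing in (\ref{equ:app8_1}) in the proof of Theorem~\ref{thm:noise_confset}.
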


Fig.~\ref{fig:exp_y} shows empirical means and expectations of $\nu^{(K)}$ for different column degree $d$, where $\boldsymbol{A} \in \{0,1\}^{100\times 256}$ and $\boldsymbol{x}$ is a Gaussian sparse signal with exactly $K$ nonzero elements. It can be seen that the empirical results match well with their expectations.

\begin{figure}[t]
  \centering
  \includegraphics[width=7cm]{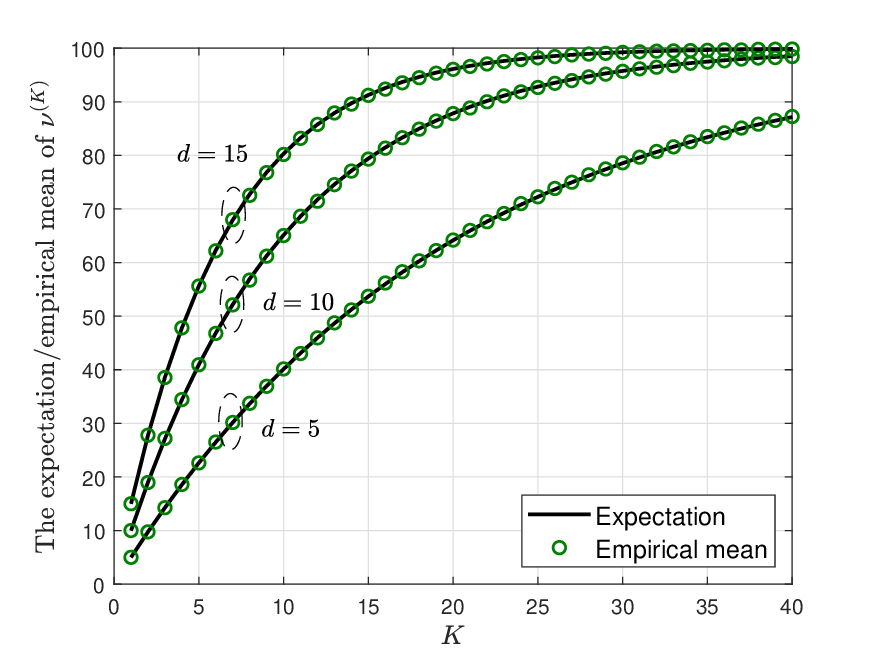}
  \caption{Empirical means and expectations of $\nu^{(K)}$ for different column degree $d$.}
  \label{fig:exp_y}
\end{figure}

\subsection{The Size of $\Gamma$}
To clarify the effectiveness of removing the redundancy of $\boldsymbol{A}$, it is necessary to consider the size of $\Gamma$ theoretically. Furthermore, the size of $\Gamma$ is crucial to the complexity and recovery performance analysis of the proposed algorithm. The following Theorem will give $\mathbf{E}[ |\Gamma| ]$.

\begin{theorem}[The expectation of $|\Gamma|$]\label{thm:exp_Gamma}
Suppose that in~(\ref{equ:linear_mod}), $\boldsymbol{A} \in \{0,1\}^{m \times n}$ is a random combinatorial matrix with $d$ ones per column and the signal $\boldsymbol{x}$ is a confined signal. Then,
\begin{equation}\label{equ:exp_Gamma}
\mathbf{E}[ |\Gamma| ]= K + (n-K) \cdot \mathbf{E}_{\upsilon} \left[ \frac{\binom{\nu^{(K)}}{d}}{\binom{m}{d}} \right],
\end{equation}
where $\mathbf{E}_{\upsilon} [ \cdot ]$ is given in~(\ref{equ:exp_y}).
\end{theorem}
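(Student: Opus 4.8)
The plan is to evaluate $\mathbf{E}[|\Gamma|]$ by linearity of expectation after writing $|\Gamma| = \sum_{j=1}^{n} \mathbbm{1}\{j \in \Gamma\}$, so that $\mathbf{E}[|\Gamma|] = \sum_{j=1}^{n} \mathbb{P}\{j \in \Gamma\}$. The first step is to re-express membership in $\Gamma$ in a form amenable to counting. Taking $\epsilon \to 0$, the set $\mathcal{E}$ becomes exactly the index set of the zero rows of $\boldsymbol{y}$, and from Definition~\ref{def:conf_set} a column index $j$ lies in $\Gamma^c$ precisely when $\boldsymbol{A}_j$ has a $1$ in at least one row of $\mathcal{E}$. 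Complementing, $j \in \Gamma$ if and only if the support of $\boldsymbol{A}_j$ is contained in the set of the $\nu^{(K)} = m - |\mathcal{E}|$ nonzero rows of $\boldsymbol{y}$. This is the combinatorial characterization I will use throughout.

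Next I would split the sum over $j$ into the $K$ support indices $j \in \Omega$ and the $n-K$ non-support indices $j \in \Omega^c$. For the support indices, since $\boldsymbol{x}$ is a confined signal, Corollary~\ref{cor:conti} and Corollary~\ref{cor:discr} guarantee $\mathbb{P}\{\Omega \subseteq \Gamma\} = 1$; hence each $j \in \Omega$ satisfies $\mathbb{P}\{j \in \Gamma\} = 1$, and these indices contribute exactly $K$ to the sum.

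For a non-support index $j \in \Omega^c$, I would condition on $\nu^{(K)}$. The key point is that $\boldsymbol{y}$, and therefore its zero pattern $\mathcal{E}$ and the count $\nu^{(K)}$, is a function only of the columns $\{\boldsymbol{A}_i : i \in \Omega\}$ and the nonzero values $\{x_i : i \in \Omega\}$; because the columns of $\boldsymbol{A}$ are independent, the column $\boldsymbol{A}_j$ with $j \in \Omega^c$ is independent of $\mathcal{E}$ and remains uniformly distributed over the $\binom{m}{d}$ supports of size $d$. Conditioned on $\nu^{(K)} = \upsilon$, the number of such supports lying entirely within the $\upsilon$ nonzero rows is $\binom{\upsilon}{d}$, so $\mathbb{P}\{j \in \Gamma \mid \nu^{(K)} = \upsilon\} = \binom{\upsilon}{d}/\binom{m}{d}$, which depends on the zero pattern only through its size. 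Averaging over $\upsilon$ with the distribution of $\nu^{(K)}$ from Lemma~\ref{Lemm:pro_mu} gives $\mathbb{P}\{j \in \Gamma\} = \mathbf{E}_{\upsilon}\!\left[\binom{\nu^{(K)}}{d}/\binom{m}{d}\right]$, and summing the $n-K$ identical terms yields the second summand. Combining the two contributions gives~(\ref{equ:exp_Gamma}).

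I expect the main obstacle to be justifying the conditional independence cleanly: I must argue that the event $\{j \in \Gamma\}$ for a non-support column factors through $\boldsymbol{A}_j$ and the zero pattern of $\boldsymbol{y}$ alone, and that $\boldsymbol{A}_j$ stays uniform after conditioning on $\nu^{(K)}$, so that the hypergeometric-style count $\binom{\upsilon}{d}/\binom{m}{d}$ is valid regardless of which particular rows happen to be nonzero. Everything else is bookkeeping via linearity of expectation and the tower property.
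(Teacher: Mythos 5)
Your proposal is correct and follows essentially the same route as the paper's proof: both split the count into the $K$ support indices (which lie in $\Gamma$ with probability 1 because $\boldsymbol{x}$ is confined) and the $n-K$ non-support indices, each of which lies in $\Gamma$ with conditional probability $\binom{\upsilon}{d}/\binom{m}{d}$ given $\nu^{(K)}=\upsilon$, then average via the law of total probability. Your write-up is in fact slightly more careful than the paper's, since you make explicit both the indicator/linearity-of-expectation decomposition and the column-independence argument that keeps $\boldsymbol{A}_j$ uniform after conditioning on the zero pattern of $\boldsymbol{y}$, which the paper leaves implicit.
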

\begin{proof}
See~Appendix~\ref{sec:app2}.
\end{proof}

\begin{figure}[t]
  \centering
  \includegraphics[width=7cm]{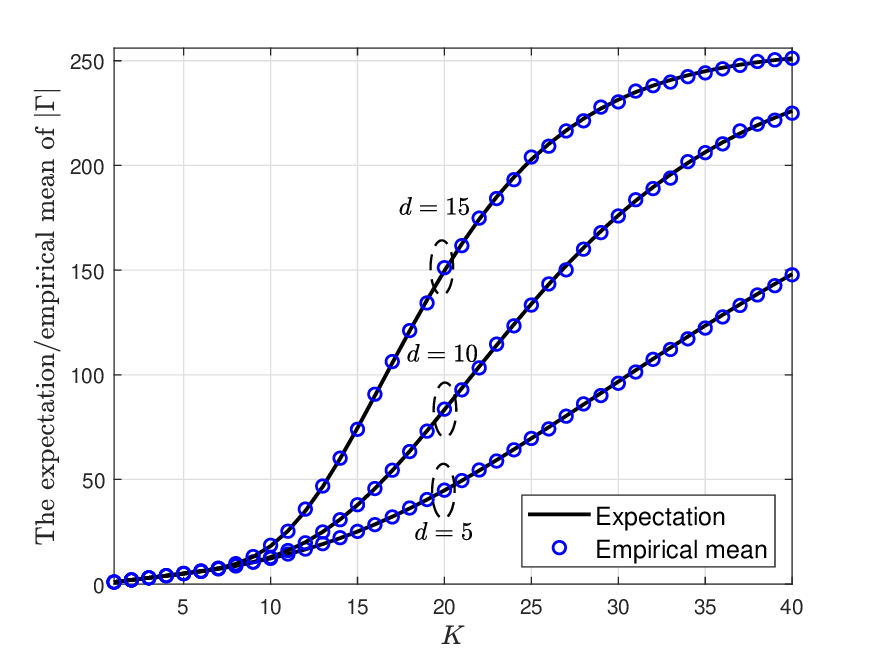}
  \caption{Empirical means and expectations of $|\Gamma|$ for different column degree $d$.}
  \label{fig:exp_gamma}
\end{figure}

Fig.~\ref{fig:exp_gamma} shows empirical means and expectations of $|\Gamma|$ for different column degree $d$, where $\boldsymbol{A} \in \{0,1\}^{100\times 256}$ and $\boldsymbol{x}$ is a Gaussian sparse signal with exactly $K$ nonzero elements. It can be seen that the empirical results match well with their expectations. We also observe that $\mathbf{E}[ |\Gamma| ] \approx K$ when $K$ is relatively small. These observations confirm the effectiveness of removing the redundancy of $\boldsymbol{A}$.

\subsection{Analysis of Confined OMP in the Noiseless Linear System}
We define $spark(\boldsymbol{A})$ as the spark of $\boldsymbol{A}$, where $spark(\boldsymbol{A}) := \min \{ \vecnorm{\boldsymbol{x}}_0 :  \boldsymbol{A}\boldsymbol{x} = \boldsymbol{0}, \boldsymbol{x} \neq  \boldsymbol{0} \}$. The condition $spark(\boldsymbol{A}) > K$ is essential to guarantee that the least squares problem in the proposed algorithm has a unique solution. The spark of $\boldsymbol{A}$ depends heavily on the column degree $d$. For example, when the column degree is set to $d = 1$, we have $spark(\boldsymbol{A}) = 2$. However, the computation of $spark(\boldsymbol{A})$ is NP-hard in general. Fortunately, the following Theorem and Corollary show that when $m$ is sufficiently large, $\boldsymbol{A}$ meets $spark(\boldsymbol{A}) > K$ with high probability.

\begin{theorem}[Theorem~1.2 in~\cite{Ferber2022Singularity}]\label{thm:pro_fullrank}
Fix $\gamma > 0$, and let $d=d(m)$ be any function of $m$ satisfying $\min\{d,m-d\} \geq (1+\gamma)\ln m$. Then, for an $m \times m$ random combinatorial matrix $\boldsymbol{Q}$ with independent rows, where each row is chosen uniformly among the vectors with $d$ ones, we have
\begin{displaymath}
\lim_{m \to \infty} \mathbb{P} \{ \boldsymbol{Q}~\mathrm{is}~\mathrm{singular} \} \to 0.
\end{displaymath}
\end{theorem}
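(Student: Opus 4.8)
Because this statement is quoted directly from \cite{Ferber2022Singularity}, the paper will invoke it as a black box; what follows is the route I would take to reprove it. The plan is to bound the singularity probability by first reducing to a finite field and then controlling the number of potential kernel vectors. Since a matrix that is singular over $\mathbb{R}$ is also singular modulo any prime $p$, we have $\mathbb{P}\{\boldsymbol{Q} \text{ is singular}\} \leq \mathbb{P}\{\boldsymbol{Q} \text{ is singular over } \mathbb{F}_p\}$ for a conveniently chosen prime $p = p(m)$, and over $\mathbb{F}_p$ the candidate kernel vectors range over the \emph{finite} set $\mathbb{F}_p^m \setminus \{\boldsymbol{0}\}$. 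Writing $\boldsymbol{r}_1, \ldots, \boldsymbol{r}_m$ for the independent rows and $q(\boldsymbol{v}) := \mathbb{P}\{\langle \boldsymbol{r}, \boldsymbol{v}\rangle \equiv 0 \ (\mathrm{mod}\ p)\}$ for a single uniform $d$-ones row $\boldsymbol{r}$, independence of the rows gives $\mathbb{P}\{\boldsymbol{Q}\boldsymbol{v} = \boldsymbol{0}\} = q(\boldsymbol{v})^m$, so a union bound yields $\mathbb{P}\{\boldsymbol{Q} \text{ singular over } \mathbb{F}_p\} \leq \sum_{\boldsymbol{v} \neq \boldsymbol{0}} q(\boldsymbol{v})^m$. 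The whole problem is thus to show this sum tends to $0$, which I would do by partitioning the vectors $\boldsymbol{v}$ according to the size of the anti-concentration level $q(\boldsymbol{v})$.

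First I would isolate the \textbf{degenerate regime}, covering the ``local'' causes of singularity and the highly structured vectors $\boldsymbol{v}$. The dominant such event, and the one that pins down the hypothesis, is an all-zero column: the probability that a fixed column is zero in every row is $(1 - d/m)^m \le \e^{-d}$, so a union bound over the $m$ columns gives at most $m\,\e^{-d} \le m^{-\gamma} \to 0$ under $d \ge (1+\gamma)\ln m$. This is exactly the coupon-collector threshold that forces the $(1+\gamma)$ factor, with the symmetric bound from $m - d \ge (1+\gamma)\ln m$ ruling out all-ones columns in the dense regime. Once no column is degenerate, the remaining structured failures (two equal rows, a pair of equal columns, and other sparsely supported kernel vectors) are controlled by crude counting: for example two given rows coincide with probability $1/\binom{m}{d}$, and $\binom{m}{2}/\binom{m}{d} = o(1)$ comfortably.

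The heart of the argument is the \textbf{spread regime}. For vectors $\boldsymbol{v}$ that are not concentrated on few coordinates, I would prove an anti-concentration (Littlewood--Offord--Hal\'{a}sz type) bound showing $q(\boldsymbol{v})$ is small, and then control \emph{how many} vectors can have a given level of $q(\boldsymbol{v})$ via an inverse Littlewood--Offord theorem, so that the count of such vectors times $q(\boldsymbol{v})^m$ still tends to $0$. The complementation symmetry $d \leftrightarrow m-d$ (replacing each row by its bitwise complement) lets the sparse analysis cover the dense case, which is why the hypothesis is stated in terms of $\min\{d, m-d\}$. \textbf{The main obstacle} is precisely this step: the rows are uniform over $d$-ones vectors, so their entries are \emph{not} independent (there is a fixed-weight constraint), and the classical Littlewood--Offord machinery for sums of independent variables does not apply directly. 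I would address this with a switching coupling --- exchanging a $1$-coordinate with a $0$-coordinate of $\boldsymbol{r}$ changes $\langle \boldsymbol{r}, \boldsymbol{v}\rangle$ by $v_j - v_{j'}$ --- to recover enough randomness for an anti-concentration estimate, and then push the inverse theorem through for this dependent model. Calibrating the prime $p$, the net over structured vectors, and the degree threshold so that the degenerate and spread contributions vanish simultaneously is the delicate bookkeeping that completes the proof.
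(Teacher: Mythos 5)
The paper never proves this statement: it is imported verbatim as Theorem~1.2 of \cite{Ferber2022Singularity} and invoked as a black box, exactly as your opening sentence anticipates, so your sketch can only be judged against the proofs in the literature. Your degenerate regime is sound --- the zero-column bound $(1-d/m)^m \le \e^{-d} \le m^{-(1+\gamma)}$, the union bound $m\,\e^{-d} \le m^{-\gamma}$, the equal-rows bound $\binom{m}{2}/\binom{m}{d} = o(1)$, and the $d \leftrightarrow m-d$ complementation are all correct, and they do identify where the hypothesis bites. The genuine gap is the master reduction itself: the target ``show $\sum_{\boldsymbol{v} \in \mathbb{F}_p^m \setminus \{\boldsymbol{0}\}} q(\boldsymbol{v})^m \to 0$'' is unachievable for \emph{every} choice of prime $p$. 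Concretely, since each row $\boldsymbol{r}$ is a nonzero vector of $\mathbb{F}_p^m$, every realization of $\boldsymbol{r}$ determines a hyperplane $\{\boldsymbol{v} : \langle \boldsymbol{r},\boldsymbol{v}\rangle \equiv 0\}$ with exactly $p^{m-1}$ points, so $\sum_{\boldsymbol{v}} q(\boldsymbol{v}) = p^{m-1}$ identically; by convexity of $x \mapsto x^m$ applied over the $p^m-1$ nonzero vectors, $\sum_{\boldsymbol{v} \neq \boldsymbol{0}} q(\boldsymbol{v})^m \geq (p^m-1)\left(\tfrac{p^{m-1}-1}{p^m-1}\right)^m = 1 - o(1)$. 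The sum you propose to kill is the expected number of nonzero kernel vectors over $\mathbb{F}_p$, and it is always essentially at least $1$; heuristically this unit of mass is carried by the unstructured vectors themselves ($q(\boldsymbol{v}) \approx 1/p$, roughly $p^m$ of them), so no inverse Littlewood--Offord refinement of the count of \emph{structured} vectors can remove it. For fixed $p$ the reduction fails even more brutally: if $p \mid d$, every row satisfies $\langle \boldsymbol{r}, \boldsymbol{1}\rangle = d \equiv 0 \pmod p$, hence $\boldsymbol{Q}\boldsymbol{1} \equiv \boldsymbol{0}$ and $\boldsymbol{Q}$ is singular over $\mathbb{F}_p$ with probability $1$, while the theorem over $\mathbb{R}$ remains true.

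The actual proofs restructure the union bound so that this obstruction never arises. One route is row exposure, $\mathbb{P}\{\mathrm{singular}\} \le \sum_{k=0}^{m-1}\mathbb{P}\{\boldsymbol{r}_{k+1} \in \mathrm{span}(\boldsymbol{r}_1,\dots,\boldsymbol{r}_k)\}$, where anti-concentration is applied to a \emph{single} normal vector of the span rather than summed over all of $\mathbb{F}_p^m$; the cited paper follows this general pattern, splitting potential normal vectors into almost-constant ones (handled by counting, as in your degenerate regime) and non-almost-constant ones (handled by an elementary anti-concentration estimate for fixed-weight rows of exactly the switching type you describe --- this is the one ingredient of your spread regime that genuinely survives), and its stated point is to avoid inverse Littlewood--Offord machinery altogether. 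Alternatively, your first-moment frame can be salvaged by exploiting kernel multiplicity: any matrix singular over $\mathbb{F}_p$ has at least $p-1$ nonzero kernel vectors, so $\mathbb{P}\{\mathrm{singular~over~}\mathbb{F}_p\} \le \frac{1}{p-1}\sum_{\boldsymbol{v}\neq\boldsymbol{0}} q(\boldsymbol{v})^m$, and the correct target becomes $\sum_{\boldsymbol{v}\neq\boldsymbol{0}} q(\boldsymbol{v})^m = o(p)$ with $p \to \infty$ chosen coprime to $d$ --- the same flavor of statement as yours, but with the bar moved from $o(1)$ to $o(p)$, which is what makes it provable.
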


\begin{corollary}\label{cor:pro_fullrank}
Fix $\gamma > 0$, and let $d=d(m)$ be any function of $m$ satisfying $ (1+\gamma)\ln m \leq d \leq m/2$. Then, for an $m \times n$ random combinatorial matrix $\boldsymbol{A}$ with $d$ ones per column and a sufficiently large $m$, we have
\end{corollary}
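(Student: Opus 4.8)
The plan is to reduce the full-rank assertion for the wide $m \times n$ matrix $\boldsymbol{A}$ to the nonsingularity of a single $m \times m$ submatrix, and then invoke Theorem~\ref{thm:pro_fullrank} directly. Since $m < n$, saying that $\boldsymbol{A}$ has full rank means $\mathrm{rank}(\boldsymbol{A}) = m$, i.e. full row rank, and this holds precisely when at least one of the $m \times m$ submatrices obtained by selecting $m$ columns of $\boldsymbol{A}$ is nonsingular. It therefore suffices to exhibit one fixed such submatrix and show that it is nonsingular with probability tending to $1$. Concretely, I would take $\boldsymbol{B} = \boldsymbol{A}_{[m]}$, the submatrix formed by the first $m$ columns of $\boldsymbol{A}$ (any fixed choice of $m$ columns serves equally well). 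Because the columns of $\boldsymbol{A}$ are independent and each is drawn uniformly from the length-$m$ binary vectors with exactly $d$ ones, $\boldsymbol{B}$ is an $m \times m$ matrix with independent columns, each uniform among vectors with $d$ ones.

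The key step is to reconcile this column-based model with the row-based hypothesis of Theorem~\ref{thm:pro_fullrank}. Since $\det(\boldsymbol{B}) = \det(\boldsymbol{B}^T)$, the matrix $\boldsymbol{B}$ is nonsingular if and only if $\boldsymbol{B}^T$ is, and $\boldsymbol{B}^T$ is exactly an $m \times m$ matrix with independent \emph{rows}, each uniform among the vectors with $d$ ones — that is, the matrix $\boldsymbol{Q}$ of Theorem~\ref{thm:pro_fullrank}. It then remains only to check the degree hypothesis $\min\{d, m-d\} \ge (1+\gamma)\ln m$: from $d \le m/2$ we obtain $m - d \ge m/2 \ge d$, so $\min\{d, m-d\} = d$, and the standing assumption $d \ge (1+\gamma)\ln m$ supplies the bound. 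Applying Theorem~\ref{thm:pro_fullrank} to $\boldsymbol{Q} = \boldsymbol{B}^T$ yields $\mathbb{P}\{\boldsymbol{B}\text{ singular}\} = \mathbb{P}\{\boldsymbol{B}^T\text{ singular}\} \to 0$, and hence $\mathbb{P}\{\mathrm{rank}(\boldsymbol{A}) = m\} \ge \mathbb{P}\{\boldsymbol{B}\text{ nonsingular}\} \to 1$ as $m \to \infty$, which is the claimed conclusion.

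I do not expect a genuine analytical obstacle: the entire probabilistic content is carried by the cited singularity result, and the corollary is essentially its transpose restricted to a single column block. The only points requiring care are bookkeeping ones. First, one must recognize that full rank of a wide matrix is witnessed by one nonsingular square submatrix, so that no union bound over the $\binom{n}{m}$ possible submatrices is needed; this is why the conclusion depends neither on $n$ nor on the sparsity $K$, and why the $o(1)$ rate of Theorem~\ref{thm:pro_fullrank} passes through unchanged. Second, one must track the row/column convention carefully through the transposition, since the combinatorial model for $\boldsymbol{A}$ places the $d$ ones in each column whereas Theorem~\ref{thm:pro_fullrank} places them in each row.
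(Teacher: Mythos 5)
There is a genuine gap, and it is in the reduction target rather than the mechanics: you prove the wrong conclusion. The corollary's assertion, stated in equation~(\ref{equ:pro_rank}) immediately after the corollary environment, is $\mathbb{P}\left\{ spark(\boldsymbol{A}) > K \right\} = 1 - o(1)$, i.e.\ \emph{every} subset of at most $K$ columns of $\boldsymbol{A}$ is linearly independent; it is not the statement $\mathrm{rank}(\boldsymbol{A}) = m$. Full row rank of a wide matrix is indeed witnessed by a single nonsingular $m \times m$ submatrix, as you argue, but it does not control the spark: a rank-$m$ matrix with $n > m$ columns can still contain two identical columns (spark $2$), and in this model any fixed pair of columns coincides with probability $1/\binom{m}{d}$, so the event you bound says nothing about the $\binom{n}{K}$ column subsets that the spark quantifies over. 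Your closing remark --- that no union bound over submatrices is needed and that ``the conclusion depends neither on $n$ nor on the sparsity $K$'' --- is exactly where the misreading surfaces: the actual conclusion depends on $K$ by definition, and on $n$ through the number of $K$-subsets.

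What survives of your argument is the transposition bookkeeping, which is correct and is the same step the paper makes: since $\det(\boldsymbol{B}) = \det(\boldsymbol{B}^T)$ and $d \leq m/2$ gives $\min\{d, m-d\} = d \geq (1+\gamma)\ln m$, Theorem~\ref{thm:pro_fullrank} applies to the transpose of any $m \times m$ column block of $\boldsymbol{A}$. The repair is to deploy this per subset rather than once: fix any $S \subseteq [n]$ with $|S| = K$, extend $S$ to a set of $m$ column indices; because the columns of $\boldsymbol{A}$ are i.i.d., this $m \times m$ submatrix is distributed exactly as $\boldsymbol{Q}^T$, hence nonsingular with probability $1 - o(1)$, so the $K$ columns indexed by $S$ are linearly independent with probability $1-o(1)$. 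This is in substance the paper's proof, which passes from the nonsingularity of $\boldsymbol{Q}$ to the linear independence of columns of $\boldsymbol{Q}^T$ and then declares that any $K$ columns of $\boldsymbol{A}$ are ``obviously'' independent, yielding $spark(\boldsymbol{A}) > K$. To be fair, the paper itself treats the passage from one fixed $K$-subset to all $\binom{n}{K}$ subsets only informally (supported by the remark that $K \ll m$ makes the per-subset failure probability small); a fully rigorous union bound would require a quantitative singularity estimate that the bare $o(1)$ of Theorem~\ref{thm:pro_fullrank} does not supply. But that looseness is shared with the paper, whereas your proposal stops short of the spark statement altogether.
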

\begin{equation}\label{equ:pro_rank}
\mathbb{P} \left\{ spark(\boldsymbol{A}) > K  \right\} = 1 - o(1).
\end{equation}
\begin{proof}
According to~Theorem~\ref{thm:pro_fullrank}, we know that the $m \times m$ random combinatorial matrix $\boldsymbol{Q}$ is nonsingular with probability $1-o(1)$ for a sufficiently large $m$. In other words, $m$ columns of $\boldsymbol{Q}^T$ are linearly independent with probability $1-o(1)$ for a sufficiently large $m$. For the $m \times n$ random combinatorial matrix $\boldsymbol{A}$ with a sufficiently large $m$, any $K$ columns of $\boldsymbol{A}$ are obviously linearly independent with probability $1-o(1)$, resulting in $spark(\boldsymbol{A}) > K$ with probability $1 - o(1)$. Thus, we get~(\ref{equ:pro_rank}).
\end{proof}

In general, the sparsity level $K$ is much smaller than $m$. Thus, as $m$ increases, the probability $ \mathbb{P} \{ spark(\boldsymbol{A}) > K \}$ converges to 1 more rapidly than the probability $\mathbb{P} \{  \boldsymbol{Q}~\mathrm{is}~\mathrm{nonsingular}\}$ shown in~Theorem~\ref{thm:pro_fullrank}. In summary, when $m$ is sufficiently large and the column degree $d$ satisfies $ (1+\gamma)\ln m \leq d \leq m/2$ for a constant $\gamma > 0$, the spark of $\boldsymbol{A}$ is greater than $K$ with probability $1-o(1)$. In the following theoretical analysis, we will take this condition as a premise.

Define the event $\mathbb{S}_{\mathrm{comp}}=\{$confined OMP exactly recovers the $K$-sparse confined signal $\boldsymbol{x}\}$. In the following, we will present a lower bound on the probability that the confined OMP algorithm exactly recovers the confined signal $\boldsymbol{x}$ for $\boldsymbol{A} \in \{0,1\}^{m\times n}$.

\begin{theorem}[Recovery probability of confined OMP]\label{thm:lowerbound_comp}
Suppose that in~(\ref{equ:linear_mod}), $\boldsymbol{A}$ is an $m \times n$ random combinatorial matrix with $d$ ones per column, and $\boldsymbol{x}$ is a confined signal with exactly $K$ nonzero entries. If $\boldsymbol{A}$ satisfies $spark(\boldsymbol{A}) > K$, then it holds that
\begin{equation}\label{equ:lowerbound_COMP}
\mathbb{P} \{ \mathbb{S}_{\mathrm{comp}} \} \geq \mathbf{E}_{\upsilon} \left[  \left(1 - \frac{\binom{\nu^{(K)}}{d}}{\binom{m}{d}} \right)^{n-K}  \right],
\end{equation}
where $\mathbf{E}_{\upsilon} [ \cdot ]$ is given in~(\ref{equ:exp_y}).
\end{theorem}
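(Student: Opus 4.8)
The plan is to bound $\mathbb{P}\{\mathbb{S}_{\mathrm{comp}}\}$ from below by the probability of the single ``trivial-success'' event $\{\Gamma = \Omega\}$, on which the algorithm is guaranteed to recover $\boldsymbol{x}$ through its preprocessing branch. Since $\boldsymbol{x}$ is a confined signal and $\epsilon \to 0$, Corollaries~\ref{cor:conti} and~\ref{cor:discr} give $\Omega \subseteq \Gamma$ with probability $1$; because $|\Omega| = K$, the event $\{\Gamma = \Omega\}$ coincides almost surely with $\{|\Gamma| = K\}$. On this event the confined OMP algorithm enters the $|\Gamma| = K$ branch and outputs the least-squares solution $\hat{\boldsymbol{x}}_{\Gamma} = \argmin_{\boldsymbol{x}} \vecnorm{\boldsymbol{y} - \boldsymbol{A}_{\Gamma}\boldsymbol{x}}_2$ with $\Gamma = \Omega$. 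As $spark(\boldsymbol{A}) > K$, the $K$ columns $\boldsymbol{A}_{\Omega}$ are linearly independent, so this least-squares problem has the unique solution $\boldsymbol{x}_{\Omega}$, i.e.\ exact recovery. Hence $\mathbb{P}\{\mathbb{S}_{\mathrm{comp}}\} \geq \mathbb{P}\{\Gamma = \Omega\} = \mathbb{P}\{|\Gamma| = K\}$, and it remains to evaluate this last probability.

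To compute $\mathbb{P}\{|\Gamma| = K\}$ I would fix the support $\Omega$ (the bound is invariant under its choice by the i.i.d.\ column structure) and condition on the support block $(\boldsymbol{A}_{\Omega}, \boldsymbol{x}_{\Omega})$. The key structural facts are that $\boldsymbol{y} = \boldsymbol{A}_{\Omega}\boldsymbol{x}_{\Omega}$ depends only on this block, so $\mathcal{E}$ (and therefore $\nu^{(K)} = m - |\mathcal{E}|$) is a function of $(\boldsymbol{A}_{\Omega}, \boldsymbol{x}_{\Omega})$ alone and is independent of the off-support columns $\{\boldsymbol{A}_j : j \in \Omega^c\}$, which are themselves mutually independent by the column-independence hypothesis. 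Given $\Omega \subseteq \Gamma$, we have $\Gamma = \Omega$ precisely when every index $j \in \Omega^c$ lands in $\Gamma^c$, i.e.\ when each such column $\boldsymbol{A}_j$ places at least one of its $d$ ones in a row of $\mathcal{E}$. For a column drawn uniformly among vectors with $d$ ones, the probability that all $d$ ones avoid the $m - \nu^{(K)}$ rows of $\mathcal{E}$ --- equivalently, fall within the $\nu^{(K)}$ remaining rows --- is $\binom{\nu^{(K)}}{d}/\binom{m}{d}$, a quantity depending on $\mathcal{E}$ only through its cardinality. Thus $j \in \Gamma^c$ with probability $1 - \binom{\nu^{(K)}}{d}/\binom{m}{d}$, and by conditional independence across the $n - K$ off-support columns,
\begin{equation*}
\mathbb{P}\{\Gamma = \Omega \mid \boldsymbol{A}_{\Omega}, \boldsymbol{x}_{\Omega}\} = \left(1 - \frac{\binom{\nu^{(K)}}{d}}{\binom{m}{d}}\right)^{n-K}.
\end{equation*}
Taking the expectation over $(\boldsymbol{A}_{\Omega}, \boldsymbol{x}_{\Omega})$ --- equivalently over the law of $\nu^{(K)}$ established in Lemma~\ref{Lemm:pro_mu} --- yields the claimed bound with the outer $\mathbf{E}_{\upsilon}[\cdot]$.

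The main obstacle, and the step requiring the most care, is justifying the conditional-independence structure cleanly: one must verify that conditioning on $\nu^{(K)} = \upsilon$ genuinely decouples the $n-K$ membership events and that the per-column avoidance probability $\binom{\upsilon}{d}/\binom{m}{d}$ is insensitive to which particular rows constitute $\mathcal{E}$. This follows from the exchangeability of rows under the uniform column distribution together with the fact that $\mathcal{E}$ is measurable with respect to the support block alone, but it should be stated explicitly to avoid conflating the random set $\mathcal{E}$ with its size. A secondary point worth a remark is that the bound deliberately discards all success events with $|\Gamma| > K$, where the iterative identification could still succeed; the inequality is therefore generally strict, which is the price of obtaining this clean closed form.
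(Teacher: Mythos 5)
Your proposal is correct and takes essentially the same route as the paper's own proof: lower-bound $\mathbb{P} \{ \mathbb{S}_{\mathrm{comp}} \}$ by $\mathbb{P} \{ |\Gamma| = K \}$, on which the spark condition makes the least-squares step exact, and then evaluate that probability by conditioning on $\nu^{(K)}$ and using the independence of the $n-K$ off-support columns, each landing in $\Gamma$ with probability $\binom{\nu^{(K)}}{d}/\binom{m}{d}$. If anything, your explicit conditioning on the support block $(\boldsymbol{A}_{\Omega}, \boldsymbol{x}_{\Omega})$ to justify the conditional independence of the membership events is stated more carefully than the paper's own chain of equalities, which passes from $\left(1 - \mathbb{P}\{ j \in \Gamma \backslash \Omega \}\right)^{n-K}$ directly to the expectation form without flagging that this requires conditioning before raising to the power $n-K$.
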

\begin{proof}
See~Appendix~\ref{sec:app5}.
\end{proof}

Based on the theoretical results presented in Theorem~\ref{thm:lowerbound_comp}, the following Corollary will clearly reveal the relationships between the various parameters.

\begin{corollary}\label{cor:lower_boundlooser}
Given fixed $m$ and $d$, if $K \leq m/d$, then it holds that
\begin{equation}\label{equ:lower_boundlooser}
\mathbb{P} \{ \mathbb{S}_{\mathrm{comp}} \} \geq \bar{\pi}_K = \left( 1 - \left( \frac{Kd}{m} \right)^d \right)^{n-K}.
\end{equation}
Furthermore, given fixed $m$ and $K$, the lower bound $\bar{\pi}_K$ reaches its maximum value
\begin{equation}
  \bar{\pi}_K^{*} = \left( 1 -   \exp \left(-\frac{m}{K\cdot \e} \right) \right)^{n-K}
\end{equation}
when $d = \frac{m}{K \cdot \e}$.
\end{corollary}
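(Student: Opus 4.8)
The plan is to establish the two assertions separately, both departing from the expectation bound in Theorem~\ref{thm:lowerbound_comp}. The first assertion replaces the random quantity $\nu^{(K)}$ by a deterministic upper bound so as to pull the expectation out, and then estimates a ratio of binomial coefficients; the second is a one-variable calculus optimization over the column degree $d$.

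For the first part, I would begin by recalling that $\nu^{(K)}$ is supported on $\{d, d+1, \ldots, \min\{Kd, m\}\}$. Under the hypothesis $K \leq m/d$ we have $Kd \leq m$, so $\min\{Kd,m\} = Kd$ and hence $\nu^{(K)} \leq Kd$ almost surely. Since $\binom{v}{d}$ is nondecreasing in $v$ for $v \geq d$, this gives the pointwise bound $\binom{\nu^{(K)}}{d} \leq \binom{Kd}{d}$, whence $1 - \binom{\nu^{(K)}}{d}/\binom{m}{d} \geq 1 - \binom{Kd}{d}/\binom{m}{d}$. Because the right-hand side is a constant and the exponent $n-K$ is positive, taking expectations in Theorem~\ref{thm:lowerbound_comp} yields $\mathbb{P}\{\mathbb{S}_{\mathrm{comp}}\} \geq (1 - \binom{Kd}{d}/\binom{m}{d})^{n-K}$. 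It then remains to prove the clean combinatorial inequality $\binom{Kd}{d}/\binom{m}{d} \leq (Kd/m)^d$. I would write the ratio as the product $\prod_{i=0}^{d-1} \frac{Kd - i}{m - i}$ and compare each factor to $Kd/m$. A short cross-multiplication shows that $\frac{Kd-i}{m-i} \leq \frac{Kd}{m}$ is equivalent to $m \geq Kd$, which holds for every $i$ precisely because $K \leq m/d$. Multiplying the $d$ factors gives the inequality, and substituting it into the previous display produces the stated bound $\bar{\pi}_K$.

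For the second part, I would fix $m$ and $K$ and maximize $\bar{\pi}_K$ over $d$. Since $n - K > 0$, $\bar{\pi}_K$ is increasing in its base, so maximizing $\bar{\pi}_K$ is equivalent to minimizing $g(d) := (Kd/m)^d$. Taking logarithms gives $\ln g(d) = d\,\ln(Kd/m)$, and differentiating with respect to $d$ yields $\ln(Kd/m) + 1$, which vanishes exactly at $Kd/m = 1/\e$, i.e. $d = m/(K\e)$. The second derivative equals $1/d > 0$, confirming that $\ln g$ is convex and that this critical point is its unique minimizer. Evaluating $g$ there gives $g = (1/\e)^{m/(K\e)} = \exp(-m/(K\e))$, and substituting into $\bar{\pi}_K$ delivers $\bar{\pi}_K^{*}$.

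I expect the combinatorial inequality in the first part to be the only step requiring genuine care: the term-by-term factorization is clean, but the direction of the inequality hinges essentially on the constraint $K \leq m/d$, and it is worth checking that this single hypothesis is what both guarantees $\nu^{(K)} \leq Kd$ and makes each factor bound hold. The optimization in the second part is routine, the only mild subtlety being that $d = m/(K\e)$ need not be an integer, so in practice one would round to the nearest admissible degree.
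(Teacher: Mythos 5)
Your proof is correct, and it follows the same overall architecture as the paper's: bound $\nu^{(K)}$ by $Kd$ under the hypothesis $K \leq m/d$, estimate the ratio $\binom{Kd}{d}/\binom{m}{d}$ by the factor-by-factor comparison $\prod_{i=0}^{d-1}\frac{Kd-i}{m-i} \leq \left(\frac{Kd}{m}\right)^d$, and then optimize over $d$ by logarithmic differentiation. There is one genuine difference in the first step: the paper first applies Jensen's inequality to pass from $\mathbf{E}_{\upsilon}\bigl[\bigl(1 - \tbinom{\nu^{(K)}}{d}/\tbinom{m}{d}\bigr)^{n-K}\bigr]$ to $\bigl(1 - \mathbf{E}_{\upsilon}\bigl[\tbinom{\nu^{(K)}}{d}/\tbinom{m}{d}\bigr]\bigr)^{n-K}$, and only then bounds the expectation by its maximal value; you instead bound the integrand pointwise (since $\nu^{(K)} \leq Kd$ almost surely and $\binom{v}{d}$ is nondecreasing in $v$) and use that the expectation of a quantity dominating a constant dominates that constant. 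Your route is more elementary, as it needs no convexity, and both routes land on the same intermediate quantity $\bigl(1 - \tbinom{Kd}{d}/\tbinom{m}{d}\bigr)^{n-K}$. In the optimization step you also add something the paper omits: reducing to minimizing $g(d) = (Kd/m)^d$ and checking the second derivative $(\ln g)''(d) = 1/d > 0$ confirms the critical point $d = m/(K\e)$ is the unique minimizer of $g$, hence the maximizer of $\bar{\pi}_K$, whereas the paper merely sets the first derivative to zero and asserts the conclusion. Your cross-multiplication check that $\frac{Kd-i}{m-i} \leq \frac{Kd}{m}$ reduces to $m \geq Kd$ (with equality at $i=0$) is exactly the content the paper leaves implicit in its final inequality, so no gap there either.
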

\begin{proof}
See~Appendix~\ref{sec:app6}.
\end{proof}

{\em Remark~1:} In Corollary~\ref{cor:lower_boundlooser}, $d = \frac{m}{K\cdot \e}$ may not be an integer. Actually, the maximum value of $\bar{\pi}_K$ is achieved when $d$ is set to either $\lfloor \frac{m}{K\cdot \e} \rfloor$ or $ \lceil \frac{m}{K \cdot \e} \rceil$. To enhance readability and more clearly analyze the impact of different parameters on $\bar{\pi}_K$, we allow $d$ to take real values in Corollary~\ref{cor:lower_boundlooser}. Similarly, the same applies to the following theoretical analysis.

Corollary~\ref{cor:lower_boundlooser} shows that a higher probability $\bar{\pi}_K$ can be achieved by increasing measurements $m$ or decreasing the signal dimension $n$. Furthermore, the choice of $d$ impacts the probability $\bar{\pi}_K$ for the different sparse level $K$. While choosing a small $d$ can enhance $\bar{\pi}_K$ for larger $K$, it may also cause $\bar{\pi}_K$ for smaller $K$ to deviate from their maximum values.

\begin{theorem}[Necessary number of measurements]\label{thm:nec_m}
Suppose that in~(\ref{equ:linear_mod}), $\boldsymbol{A}$ is an $m \times n$ random combinatorial matrix with $d$ ones per column and $\boldsymbol{x}$ is a confined signal with exactly $K$ nonzero entries. Given a real number $\beta > 1$ and $K$, if $m = c\frac{K}{\ln \beta} \ln (n - K) < n$ for a constant $c > \beta$ and $d = \frac{m}{K \beta}$ can ensure that $\boldsymbol{A}$ satisfies $spark(\boldsymbol{A}) > K$, then it holds that
\begin{equation}\label{equ:nec_m}
  \mathbb{P} \{ \mathbb{S}_{\mathrm{comp}} \} \geq 1 - (n-K)^{1-c/\beta}.
\end{equation}
\end{theorem}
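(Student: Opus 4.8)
The plan is to deduce the theorem directly from Corollary~\ref{cor:lower_boundlooser} by substituting the prescribed values $m = c\frac{K}{\ln\beta}\ln(n-K)$ and $d = \frac{m}{K\beta}$, and then to close the resulting bound with Bernoulli's inequality. First I would check that the hypothesis of Corollary~\ref{cor:lower_boundlooser} is met: since $d = \frac{m}{K\beta}$ gives $m/d = K\beta$, and $\beta > 1$ forces $K \le K\beta = m/d$, the corollary applies and yields
\[
\mathbb{P}\{\mathbb{S}_{\mathrm{comp}}\} \ge \left(1 - \left(\tfrac{Kd}{m}\right)^{d}\right)^{n-K}.
\]

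The next step is a short algebraic reduction of the two factors. The base simplifies immediately, as $\frac{Kd}{m} = \frac{1}{\beta}$, so $\left(\frac{Kd}{m}\right)^{d} = \beta^{-d}$. To evaluate $\beta^{-d}$ I would insert $m = \frac{cK}{\ln\beta}\ln(n-K)$ into $d = \frac{m}{K\beta}$, obtaining $d = \frac{c}{\beta\ln\beta}\ln(n-K)$, so that $d\ln\beta = \frac{c}{\beta}\ln(n-K)$ and hence $\beta^{-d} = \exp(-d\ln\beta) = \exp\!\left(-\frac{c}{\beta}\ln(n-K)\right) = (n-K)^{-c/\beta}$. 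The lower bound therefore becomes $\left(1 - (n-K)^{-c/\beta}\right)^{n-K}$.

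Finally I would apply Bernoulli's inequality $(1-p)^{t} \ge 1 - tp$, valid for $p \in [0,1]$ and the positive integer exponent $t = n-K$, taking $p = (n-K)^{-c/\beta}$. This gives
\[
\left(1 - (n-K)^{-c/\beta}\right)^{n-K} \ge 1 - (n-K)\cdot (n-K)^{-c/\beta} = 1 - (n-K)^{1-c/\beta},
\]
which is exactly the claimed inequality~(\ref{equ:nec_m}).

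I do not expect a genuine obstacle here, since the argument is essentially bookkeeping built on Corollary~\ref{cor:lower_boundlooser}. The only points that need a little care are confirming that $p = (n-K)^{-c/\beta} \in [0,1]$ so that Bernoulli's inequality applies with the right orientation (this holds because the exponent $-c/\beta$ is negative and $n-K \ge 1$), and noting that the consistency of the choice $d = \frac{m}{K\beta}$ with the standing requirement $spark(\boldsymbol{A}) > K$ is granted by hypothesis rather than proved here. The condition $c > \beta$ is not needed for the inequality itself, but is what makes the exponent $1 - c/\beta$ negative and thus renders the bound $1 - (n-K)^{1-c/\beta}$ close to $1$.
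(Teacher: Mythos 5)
Your proposal is correct and follows essentially the same route as the paper's own proof: both start from Corollary~\ref{cor:lower_boundlooser}, substitute $d = \frac{m}{K\beta}$ so that $\left(\frac{Kd}{m}\right)^d = \beta^{-d}$, and close with Bernoulli's inequality to obtain $1-(n-K)^{1-c/\beta}$; the only difference is the trivial reordering of substituting $m$ before versus after applying Bernoulli. Your explicit checks that $K \leq m/d$ (so the corollary applies) and that the Bernoulli base lies in $[0,1]$ are small points of care the paper leaves implicit.
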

\begin{proof}
See~Appendix~\ref{sec:app7}.
\end{proof}

In Theorem~\ref{thm:nec_m}, (\ref{equ:nec_m}) is simplified to
\begin{equation}
  \mathbb{P} \{ \mathbb{S}_{\mathrm{comp}} \} \geq 1 - \frac{1}{n-K}
\end{equation}
if $m = 2\frac{\beta}{\ln \beta} K \ln (n - K) < n$. Let $g(\beta) = 2\frac{\beta}{\ln \beta} K \ln (n - K)$. A simple derivation can yield that the function $g(\beta)$ reaches its minimum value when $\beta = \e$. Thus, if $\boldsymbol{A}$ satisfies $spark(\boldsymbol{A}) > K$, the minimum measurements $m = 2\e K \ln (n - K)$ are sufficient to guarantee that the recovery probability $\mathbb{P} \{ \mathbb{S}_{\mathrm{comp}} \}$ of the proposed algorithm is no lower than $1 - \frac{1}{n-K}$.

\begin{corollary}\label{cor:asymptotic}
Let the signal dimension $n = m^\tau$ with $\tau > 1$ and the column degree $d = \gamma\ln m$ with $\gamma > 1$. In the asymptotic regime in which both $m$ and $n$ tend to infinity, if the sparsity level $K < \frac{1}{\gamma} \cdot \e^{-\tau/\gamma}  \cdot \frac{m}{\ln m}$, then it holds that
\begin{equation}
  \lim_{\substack{m \to \infty \\ n \to  \infty}} \mathbb{P} \{ \mathbb{S}_{\mathrm{comp}} \} = 1.
\end{equation}
Furthermore, the sparsity level $K$ reaches its maximum value $\left\lfloor \frac{1}{\tau \cdot \e} \frac{m}{\ln m} \right\rfloor$ when $\gamma = \tau$.
\end{corollary}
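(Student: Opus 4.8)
The plan is to obtain the limit from the transparent lower bound $\bar\pi_K$ of Corollary~\ref{cor:lower_boundlooser} rather than from the expectation in Theorem~\ref{thm:lowerbound_comp}. First I would verify the hypotheses. With $d=\gamma\ln m$ and $\gamma>1$, for large $m$ the degree obeys $(1+\gamma')\ln m\le d\le m/2$ with $\gamma'=\gamma-1>0$, so Corollary~\ref{cor:pro_fullrank} guarantees $spark(\boldsymbol{A})>K$ with probability $1-o(1)$. Also $K<\frac{1}{\gamma}\e^{-\tau/\gamma}\frac{m}{\ln m}<\frac{m}{\gamma\ln m}=\frac{m}{d}$ since $\e^{-\tau/\gamma}<1$, so the requirement $K\le m/d$ of Corollary~\ref{cor:lower_boundlooser} holds. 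Thus $\mathbb{P}\{\mathbb{S}_{\mathrm{comp}}\}\ge\bar\pi_K=\big(1-(Kd/m)^d\big)^{n-K}$, and it suffices to show $\bar\pi_K\to1$.

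The crux is to control $p:=(Kd/m)^d$. The hypothesis on $K$ gives $Kd/m<\e^{-\tau/\gamma}$, whence $p<(\e^{-\tau/\gamma})^{\gamma\ln m}=m^{-\tau}=1/n$. This bare bound only yields $\bar\pi_K\gtrsim\e^{-1}$, so the real work is to extract a polynomial margin from the strict inequality. Writing $Kd/m=\e^{-\tau/\gamma-\eta}$ with $\eta>0$ gives $p=m^{-\tau-\gamma\eta}$, hence $(n-K)p<np=m^{-\gamma\eta}$. Provided $\eta$ is bounded below by a positive constant --- the meaning of the strict inequality $K<\frac{1}{\gamma}\e^{-\tau/\gamma}\frac{m}{\ln m}$ in the asymptotic regime (e.g.\ $K\le\rho\cdot\frac{1}{\gamma}\e^{-\tau/\gamma}\frac{m}{\ln m}$ for a fixed $\rho<1$, which gives $\eta=\ln(1/\rho)$) --- one has $(n-K)p\to0$. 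The elementary fact that $(1-p)^{N}\to1$ whenever $Np\to0$ (via $\ln(1-p)^{N}=N\ln(1-p)\sim-Np$) then yields $\bar\pi_K\to1$, and combining this with $\mathbb{P}\{spark(\boldsymbol{A})>K\}=1-o(1)$ gives $\lim\mathbb{P}\{\mathbb{S}_{\mathrm{comp}}\}=1$. I expect this gap argument to be the main obstacle, since the naive estimate $p<1/n$ is by itself too weak (it would only force $\bar\pi_K$ above $\e^{-1}$), and one must convert the strict inequality into genuine polynomial decay.

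For the maximal sparsity I would optimise the threshold over the free parameter $\gamma$. Setting $h(\gamma)=\frac{1}{\gamma}\e^{-\tau/\gamma}$, so that the admissible range reads $K<h(\gamma)\frac{m}{\ln m}$, I compute $\ln h(\gamma)=-\ln\gamma-\tau/\gamma$ and $h'(\gamma)/h(\gamma)=(\tau-\gamma)/\gamma^2$, which is positive for $\gamma<\tau$ and negative for $\gamma>\tau$. Hence $h$ is maximised at $\gamma=\tau$, which is admissible because $\tau>1$ keeps $\gamma>1$ and preserves the spark condition. Evaluating $h(\tau)=\frac{1}{\tau}\e^{-1}=\frac{1}{\tau\e}$ shows that the largest admissible integer sparsity is $\big\lfloor\frac{1}{\tau\e}\frac{m}{\ln m}\big\rfloor$, completing the claim.
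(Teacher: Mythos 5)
Your proof is correct and follows essentially the same route as the paper's: both lower-bound $\mathbb{P}\{\mathbb{S}_{\mathrm{comp}}\}$ by $\bar{\pi}_K$ from Corollary~\ref{cor:lower_boundlooser}, reduce the claim to showing $n\left(Kd/m\right)^d \to 0$ after substituting $d=\gamma\ln m$ and $n=m^\tau$, and arrive at the same threshold $K<\frac{1}{\gamma}\e^{-\tau/\gamma}\frac{m}{\ln m}$. If anything, you are more careful on two points the paper glosses over: you convert the strict inequality into a uniform margin (without which the estimate only gives $\bar{\pi}_K \gtrsim \e^{-1}$; the paper's requirement that $\ln m\left(\tau+\gamma\ln\frac{K\gamma\ln m}{m}\right)\to-\infty$ silently presumes the same margin), and you supply the optimization over $\gamma$ whose proof the paper explicitly omits.
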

\begin{proof}
See Appendix~\ref{sec:app10}
\end{proof}

Corollary~\ref{cor:asymptotic} shows that when both $m$ and $n$ tend to infinity as $n = m^\tau$, any confined signal with the sparsity level $K = o\left(\frac{m}{\ln m}\right)$ can be exactly recovered with probability 1. In order to ensure that $spark(\boldsymbol{A}) > K$ with probability $1 - o(1)$ for a sufficiently large $m$, as shown in~Corollary~\ref{cor:pro_fullrank}, the column degree should satisfy $d = \gamma \ln m$ where $\gamma >1$. Moreover, $d = \tau \ln m$ is the optimal choice for selecting the column degree in the asymptotic regime.

\subsection{Analysis of Confined OMP in the Noisy Linear System}\label{sec:noisy_mod}
In practical applications, measurement noise is non-negligible. Thus, in this subsection, we investigate the robustness of the proposed algorithm to noise in a noisy linear system:
\begin{equation}\label{equ:noise_linear_mod}
\boldsymbol{y}_{\mathrm{eff}}  = \boldsymbol{A} \boldsymbol{x} + \boldsymbol{v},
\end{equation}
where $\boldsymbol{v} \in \mathbb{R}^m$ is a noise vector satisfying $\vecnorm{\boldsymbol{v}}_\infty \leq \eta$ for a real number $\eta > 0$. The following Theorem will give the probability $ \mathbb{P} \{ \Omega \subseteq \Gamma \} $ in a noisy linear system.
\begin{theorem}[Lower bound on $\mathbb{P} \{ \Omega \subseteq \Gamma \}$ with noisy $\boldsymbol{y}_{\mathrm{eff}}$]\label{thm:noise_confset}
Suppose that in (\ref{equ:noise_linear_mod}), $\boldsymbol{A}$ is an $m \times n$ random combinatorial matrix with $d$ ones per column, $\boldsymbol{x}$ is a signal with exactly $K$ nonzero components, and $\boldsymbol{v}$ is a noise vector with $\vecnorm{\boldsymbol{v}}_{\infty} \leq \eta$. Furthermore, these nonzero components of $\boldsymbol{x}$ are i.i.d., with the same CDF $F_X(x)$ and PDF $f_X(x)$. Then, for $\ell \in [K]$, the probability of $\{ \Omega \subseteq \Gamma \}$ is lower bounded by
\begin{equation}\label{equ:conf_noise}
  \mathbb{P} \{ \Omega \subseteq \Gamma \} \geq  1- \mathbf{E}_\upsilon [ \nu^{(K)} ] \cdot \underset{\ell \in [K]}\max \left( F_X^{*\ell}(\epsilon + \eta) - F_X^{*\ell}(-\epsilon - \eta) \right),
\end{equation}
where $\mathbf{E}_\upsilon [ \nu^{(K)} ]$ is given in~(\ref{equ:expy_simple}).
\end{theorem}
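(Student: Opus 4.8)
The plan is to bound the complementary (``bad'') event $\{\Omega \not\subseteq \Gamma\}$ and pass to complements at the very end. First I would translate this event into the language of Definition~\ref{def:conf_set}. Since $\Gamma^c = \bigcup_{i \in \mathcal{E}} \Gamma_i^c$, a support index $j \in \Omega$ lies in $\Gamma^c$ precisely when there is a row $i$ with $A_{i,j} = 1$ and $|y_i| \leq \epsilon$. Writing $S_i = \{\, j \in \Omega : A_{i,j} = 1 \,\}$ for the set of support columns that touch row $i$, the event $\{\Omega \not\subseteq \Gamma\}$ therefore coincides with the event that some row $i$ satisfies both $S_i \neq \emptyset$ and $|y_i| \leq \epsilon$. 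A union bound over the $m$ rows then gives $\mathbb{P}\{\Omega \not\subseteq \Gamma\} \leq \sum_{i=1}^m \mathbb{P}\{\, S_i \neq \emptyset,\ |y_i| \leq \epsilon \,\}$, which reduces the problem to a single-row estimate.

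Next I would analyze one row $i$. Because the columns of $\boldsymbol{A}$ are independent and each entry $A_{i,j}$ equals $1$ with marginal probability $\binom{m-1}{d-1}/\binom{m}{d} = d/m$, the count $|S_i|$ is $\mathrm{Binomial}(K, d/m)$; in particular $\mathbb{P}\{S_i \neq \emptyset\} = 1 - (1 - d/m)^K$. Conditioned on $|S_i| = \ell$, the effective measurement is $y_i = \sum_{j \in S_i} x_j + v_i$, so $|y_i| \leq \epsilon$ forces $\sum_{j \in S_i} x_j$ into the interval $[-\epsilon - v_i,\ \epsilon - v_i]$. Since $|v_i| \leq \eta$, this interval is contained in $[-\epsilon - \eta,\ \epsilon + \eta]$ for every realization of $v_i$, and as the $\ell$ nonzero entries are i.i.d.\ with CDF $F_X$, their sum has CDF $F_X^{*\ell}$. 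Hence $\mathbb{P}\{|y_i| \leq \epsilon \mid |S_i| = \ell\} \leq \rho_\ell$, where $\rho_\ell := F_X^{*\ell}(\epsilon + \eta) - F_X^{*\ell}(-\epsilon - \eta)$. This is exactly the step that upgrades the noiseless estimate of Theorem~\ref{thm:conf} by widening the symmetric window from $\epsilon$ to $\epsilon + \eta$.

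Finally I would assemble the pieces. Conditioning on $S_i \neq \emptyset$, the probability $\mathbb{P}\{|y_i| \leq \epsilon \mid S_i \neq \emptyset\}$ is a weighted average over $\ell \geq 1$ of the quantities $\rho_\ell$, hence is bounded by $\max_{\ell \in [K]} \rho_\ell$. Therefore each row contributes at most $\big(1 - (1-d/m)^K\big)\cdot \max_{\ell \in [K]} \rho_\ell$, and summing over the $m$ rows produces the factor $m\big(1 - (1-d/m)^K\big)$, which is precisely $\mathbf{E}_\upsilon[\nu^{(K)}]$ by~(\ref{equ:expy_simple}). Taking complements then yields~(\ref{equ:conf_noise}).

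I expect the main obstacle to be the per-row probabilistic bookkeeping rather than any deep inequality: one must justify that $|S_i|$ is binomial (using column independence together with the uniform-$d$-ones construction), and, more delicately, that the noise can be absorbed uniformly by enlarging the window to $[-\epsilon-\eta,\ \epsilon+\eta]$ \emph{before} the convolution CDF $F_X^{*\ell}$ is invoked. The passage from the full binomial sum of Theorem~\ref{thm:conf} to the compact $\max$-form is what lets the closed-form constant $\mathbf{E}_\upsilon[\nu^{(K)}]$ emerge, and keeping the conditioning on $\{S_i \neq \emptyset\}$ clean at that point is the part most prone to slips.
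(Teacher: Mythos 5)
Your proposal is correct and follows essentially the same route as the paper's proof: a union bound over rows on the event $\{\Omega \not\subseteq \Gamma\}$, the binomial law for the number of support columns hitting a row, absorption of the noise by widening the window to $[-\epsilon-\eta,\,\epsilon+\eta]$ before invoking $F_X^{*\ell}$, and bounding the binomial mixture by its maximum term so that $m\bigl(1-(1-d/m)^K\bigr) = \mathbf{E}_\upsilon[\nu^{(K)}]$ emerges. Your phrasing of the last step as a conditional average given $S_i \neq \emptyset$ is just a repackaging of the paper's explicit sum over $\ell$.
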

\begin{proof}
See~Appendix~\ref{sec:app8}.
\end{proof}

In contrast to the noiseless linear system with $\epsilon \to 0$, the parameter $\epsilon$ is crucial in the noisy linear system. An excessively large $\epsilon$ may shrink the size of $\Gamma$ and increase the risk of missing columns in the support set $\Omega$. Conversely, as $\epsilon \to 0$, a large number of redundant column indices may emerge in $\Gamma$, e.g., $|\Gamma| = n$. To maximize the elimination of redundancy and facilitate the subsequent analysis, the value of $\epsilon$ should be set to $\eta$ such that noise-induced perturbations remain within the tolerance range $\epsilon = \eta$, resulting in a new set $\mathcal{E}_{\mathrm{eff}} = \{i: |y_i| \leq \eta\}$.

Let $\nu^{(K)}_{\mathrm{eff}} = m - |\mathcal{E}_{\mathrm{eff}}|$ be the number of ``nonzero" entries of $\boldsymbol{y}_{\mathrm{eff}}$ and define $\mathbb{S}_{\mathrm{comp}}^{(\mathrm{supp})}=\{$confined OMP exactly recovers the support of the $K$-sparse confined signal $\boldsymbol{x}$ from a noisy measurements $\boldsymbol{y}_{\mathrm{eff}}$ $\}$. The following Corollary will give the lower bound on $\mathbb{P} \{ \mathbb{S}_{\mathrm{comp}}^{(\mathrm{supp})} \}$.

\begin{corollary}\label{cor:pi_Ksupp}
If $\boldsymbol{A}$ satisfies $spark(\boldsymbol{A}) > K$, then it holds that
\begin{equation}\label{equ:pi_Ksupp}
 \begin{split}
\mathbb{P} \{ \mathbb{S}_{\mathrm{comp}}^{(\mathrm{supp})} \}& \geq \left( 1 - \left( \frac{Kd}{m} \right)^d \right)^{n-K}\cdot\\
&\left( 1- \mathbf{E}_\upsilon [ \nu^{(K)} ] \cdot \underset{\ell \geq 1}\max \left( F_X^{*\ell}(2\eta) - F_X^{*\ell}(-2\eta) \right) \right),
 \end{split}
\end{equation}
where $\mathbf{E}_{\upsilon} [ \nu^{(K)} ]$ is given in~(\ref{equ:expy_simple}).
\end{corollary}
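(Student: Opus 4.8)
The plan is to reduce exact support recovery to the single clean event $\{\Gamma=\Omega\}$ and then bound its probability by the advertised product. First I would note that whenever $\Gamma=\Omega$ we have $|\Gamma|=K$, so the preprocessing branch of Algorithm~\ref{alg:COMP} is triggered and the least squares is solved directly over $\boldsymbol{A}_{\Gamma}=\boldsymbol{A}_{\Omega}$; since $\mathrm{spark}(\boldsymbol{A})>K$ makes the columns of $\boldsymbol{A}_{\Omega}$ linearly independent, this step is well defined and the identified index set is exactly $\Omega$. Hence $\mathbb{P}\{\mathbb{S}_{\mathrm{comp}}^{(\mathrm{supp})}\}\geq\mathbb{P}\{\Gamma=\Omega\}$, and I would decompose $\{\Gamma=\Omega\}=A\cap B$ with $A=\{\Omega\subseteq\Gamma\}$ (no support column is discarded) and $B=\{\Gamma\cap\Omega^{c}=\emptyset\}$ (no out-of-support column survives).

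The central idea is to decouple $A$ and $B$ by conditioning on the randomness that determines the zero set. Taking $\epsilon=\eta$, the set $\mathcal{E}_{\mathrm{eff}}=\{i:|y_{\mathrm{eff},i}|\leq\eta\}$, and therefore both the indicator of $A$ and the count $\nu^{(K)}_{\mathrm{eff}}=m-|\mathcal{E}_{\mathrm{eff}}|$ of surviving rows, are measurable with respect to the triple $(\boldsymbol{A}_{\Omega},\boldsymbol{x}_{\Omega},\boldsymbol{v})$. Because the $n-K$ columns indexed by $\Omega^{c}$ are drawn independently of this triple and uniformly among weight-$d$ vectors, conditioning on it turns $B$ into a product of independent events: a fixed non-support column lies in $\Gamma$ iff all $d$ of its ones fall among the $\nu^{(K)}_{\mathrm{eff}}$ surviving rows, an event of probability $\binom{\nu^{(K)}_{\mathrm{eff}}}{d}\big/\binom{m}{d}$. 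This gives $\mathbb{P}\{B\mid\boldsymbol{A}_{\Omega},\boldsymbol{x}_{\Omega},\boldsymbol{v}\}=\big(1-\binom{\nu^{(K)}_{\mathrm{eff}}}{d}\big/\binom{m}{d}\big)^{n-K}$, exactly the quantity already appearing in the proof of Theorem~\ref{thm:lowerbound_comp}.

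Next I would bound this conditional probability uniformly from below. Every row untouched by a support column has $(\boldsymbol{A}\boldsymbol{x})_i=0$, hence $|y_{\mathrm{eff},i}|=|v_i|\leq\eta$ and that row lies in $\mathcal{E}_{\mathrm{eff}}$; thus the surviving rows are confined to the at most $Kd$ rows hit by $\boldsymbol{A}_{\Omega}$, yielding the deterministic estimate $\nu^{(K)}_{\mathrm{eff}}\leq Kd$. Combined with the monotone bound $\binom{Kd}{d}\big/\binom{m}{d}\leq(Kd/m)^{d}$, valid under $K\leq m/d$ as in Corollary~\ref{cor:lower_boundlooser}, the conditional probability of $B$ is bounded below surely by the nonnegative constant $\big(1-(Kd/m)^{d}\big)^{n-K}$. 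Using the tower property and pulling this constant out of the expectation then gives
\[
\mathbb{P}\{\Gamma=\Omega\}=\mathbb{E}\!\left[\mathbbm{1}_{A}\left(1-\frac{\binom{\nu^{(K)}_{\mathrm{eff}}}{d}}{\binom{m}{d}}\right)^{n-K}\right]\geq\left(1-\left(\frac{Kd}{m}\right)^{d}\right)^{n-K}\mathbb{P}\{A\}.
\]
Finally I would invoke Theorem~\ref{thm:noise_confset} with $\epsilon=\eta$ to replace $\mathbb{P}\{A\}=\mathbb{P}\{\Omega\subseteq\Gamma\}$ by its lower bound, in which $F_X^{*\ell}(\epsilon+\eta)-F_X^{*\ell}(-\epsilon-\eta)$ becomes $F_X^{*\ell}(2\eta)-F_X^{*\ell}(-2\eta)$ and the coefficient is $\mathbf{E}_{\upsilon}[\nu^{(K)}]$, producing the second factor and completing the argument.

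The step I expect to be delicate is the decoupling in the second paragraph: one must argue precisely that conditioning on $(\boldsymbol{A}_{\Omega},\boldsymbol{x}_{\Omega},\boldsymbol{v})$ simultaneously fixes $\mathbbm{1}_{A}$ and the count $\nu^{(K)}_{\mathrm{eff}}$ while leaving the $\Omega^{c}$-columns i.i.d.\ and uniform, so that the conditional law of $B$ factorizes into the stated power and the uniform lower bound can legitimately be extracted from the expectation. The remaining ingredients---the bound $\nu^{(K)}_{\mathrm{eff}}\leq Kd$ and the binomial-ratio estimate---are routine.
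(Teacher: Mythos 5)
Your proposal is correct and follows essentially the same route as the paper: the paper also lower-bounds $\mathbb{P}\{\mathbb{S}_{\mathrm{comp}}^{(\mathrm{supp})}\}$ by $\mathbb{P}\{|\Gamma|=K\,|\,\Omega\subseteq\Gamma\}\cdot\mathbb{P}\{\Omega\subseteq\Gamma\}$ (your $A\cap B=\{\Gamma=\Omega\}$), uses the same deterministic observation $\nu^{(K)}_{\mathrm{eff}}\leq Kd$ under $\epsilon=\eta$ to bound the first factor via~(\ref{equ:lower_boundlooser}), and invokes Theorem~\ref{thm:noise_confset} for the second. Your explicit conditioning on $(\boldsymbol{A}_{\Omega},\boldsymbol{x}_{\Omega},\boldsymbol{v})$ and the tower-property step merely make rigorous the factorization that the paper states without detail.
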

\begin{proof}
The probability $\mathbb{P} \{ \mathbb{S}_{\mathrm{comp}}^{(\mathrm{supp})} \}$ is lower bounded by
\begin{equation}
  \mathbb{P} \{ \mathbb{S}_{\mathrm{comp}}^{(\mathrm{supp})} \} \geq \mathbb{P} \{ |\Gamma| = K | \Omega \subseteq \Gamma \} \cdot \mathbb{P} \{ \Omega \subseteq \Gamma \},
\end{equation}
where the lower bound on $\mathbb{P} \{ \Omega \subseteq \Gamma \}$ is given in~(\ref{equ:conf_noise}). Since $\epsilon$ is set to $\eta$, all indices of ``zero" entries and some indices of ``nonzero" of $\boldsymbol{y}_{\mathrm{eff}}$ are contained in the set $\mathcal{E}_{\mathrm{eff}}$, resulting in $\nu^{(K)}_{\mathrm{eff}} \leq Kd $. Thus, the lower bound on $\mathbb{P} \{ |\Gamma| = K | \Omega \subseteq \Gamma \}$ can be obtained by~(\ref{equ:lower_boundlooser}). Finally, we get~(\ref{equ:pi_Ksupp}).
\end{proof}

It can be seen from~(\ref{equ:pi_Ksupp}) that the robustness of the proposed algorithm to noise is significantly influenced by the probabilistic distribution of the nonzero components of the signal. In the following, we use Gaussian sparse signals as a case in point for illustration.

Assume that the $K$ nonzero components of $\boldsymbol{x}$, denoted by $X_1,X_2,\cdots,X_K$, are i.i.d., with the same Gaussian distribution $\mathcal{N}(\mu , \sigma^2)$. Then, for $\ell = 1,2,\cdots,K$, we have
\begin{equation}
  F_X^{*\ell}(2\eta) - F_X^{*\ell}(-2\eta) = \Phi \left( \frac{2\eta  - \ell \mu}{\sigma \sqrt{\ell}}  \right) - \Phi \left( \frac{- 2\eta  - \ell \mu}{\sigma \sqrt{\ell}}  \right),
\end{equation}
where $\Phi(\cdot)$ is the CDF of the standard normal distribution. According to Corollary~\ref{cor:pi_Ksupp}, the probability $\mathbb{P} \{ \mathbb{S}_{\mathrm{comp}}^{(\mathrm{supp})} \}$ is lower bounded by
\begin{equation}\label{equ:piK_gaussian}
 \begin{split}
\mathbb{P} &\{ \mathbb{S}_{\mathrm{comp}}^{(\mathrm{supp})} \} \geq \left( 1 - \left( \frac{Kd}{m} \right)^d \right)^{n-K}  \cdot\\
&\left( 1- \mathbf{E}_\upsilon [ \nu^{(K)} ] \left( \Phi \left( \frac{2\eta  - \mu}{\sigma}  \right) - \Phi \left( \frac{- 2\eta  - \mu}{\sigma}  \right) \right) \right).
 \end{split}
\end{equation}
Fixed $\eta$ and $\sigma$, increasing $\mu$ can reduce the value of $\Phi \left( \frac{2\eta  - \mu}{\sigma}  \right) - \Phi \left( \frac{- 2\eta  - \mu}{\sigma}  \right)$, as the Gaussian distribution shifts rightward. This makes the second term of~(\ref{equ:piK_gaussian}) approach 1, thereby improving the lower bound on $\mathbb{P} \{ \mathbb{S}_{\mathrm{comp}}^{(\mathrm{supp})} \}$. Specifically, when $\mu$ becomes sufficiently large compared to the noise threshold $\eta$, the probability of noise corrupting the support recovery diminishes exponentially.

\section{Experimental Results}\label{sec:experiments}
This section presents the experimental results in both noiseless and noisy cases, demonstrating the advantage of the confined OMP algorithm in terms of recovery performance and complexity. Furthermore, we use the lower bound on the probability $\mathbb{P} \{ \mathbb{S}_{\mathrm{comp}} \}$ to optimize the column degree $d$. The experimental results were obtained by MATLAB R2023a on a desktop computer with Intel(R) Core(TM) i7-11700 CPU @ 2.50 GHz. In the following simulations, we generate a $K$-sparse signal with exactly $K$ nonzero elements, whose support is chosen at random. In addition, we consider two types of $K$-sparse signals: (a)~Gaussian sparse signals and (b)~flat sparse signals. The nonzero elements of the Gaussian sparse signal are independently and randomly drawn from a Gaussian distribution $\mathcal{N}(\mu,1)$. Furthermore, the support of the flat sparse signal is randomly chosen and the nonzero elements are set to $\theta$, where $\theta$ is a real number. We conduct an experiment using $1000$ Monte-Carlo trials and $\boldsymbol{A}$ is generated randomly in each trial. The relative recovery error is defined as
\begin{equation}
  \frac{\vecnorm{\boldsymbol{x} - \hat{\boldsymbol{x}} }_2}{ \vecnorm{\boldsymbol{x}}_2 },
\end{equation}
where $\hat{\boldsymbol{x}}$ is the recovered signal. If the relative recovery error is less than or equal to $0.001$, we declare this recovery to be perfect.

\begin{figure*}[t]
	\centering
	\subfigure[]{
		\includegraphics[width=7cm]{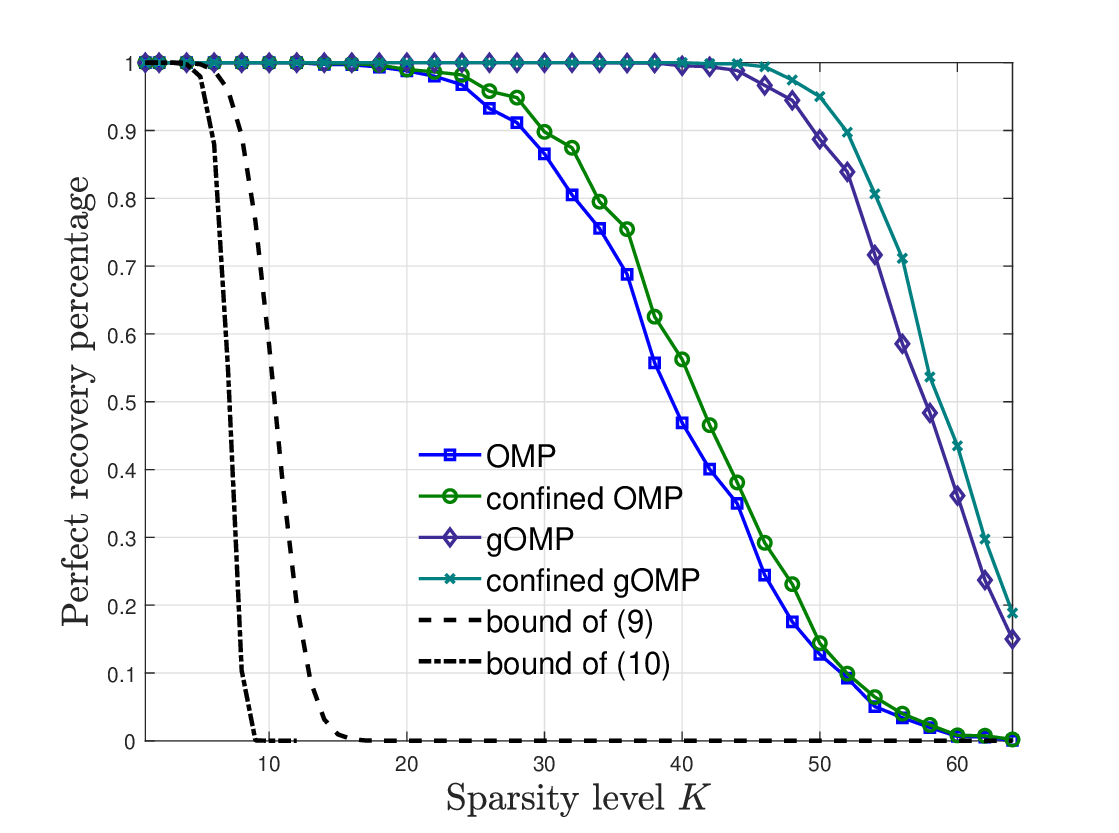}
	}
	\subfigure[]{
		\includegraphics[width=7cm]{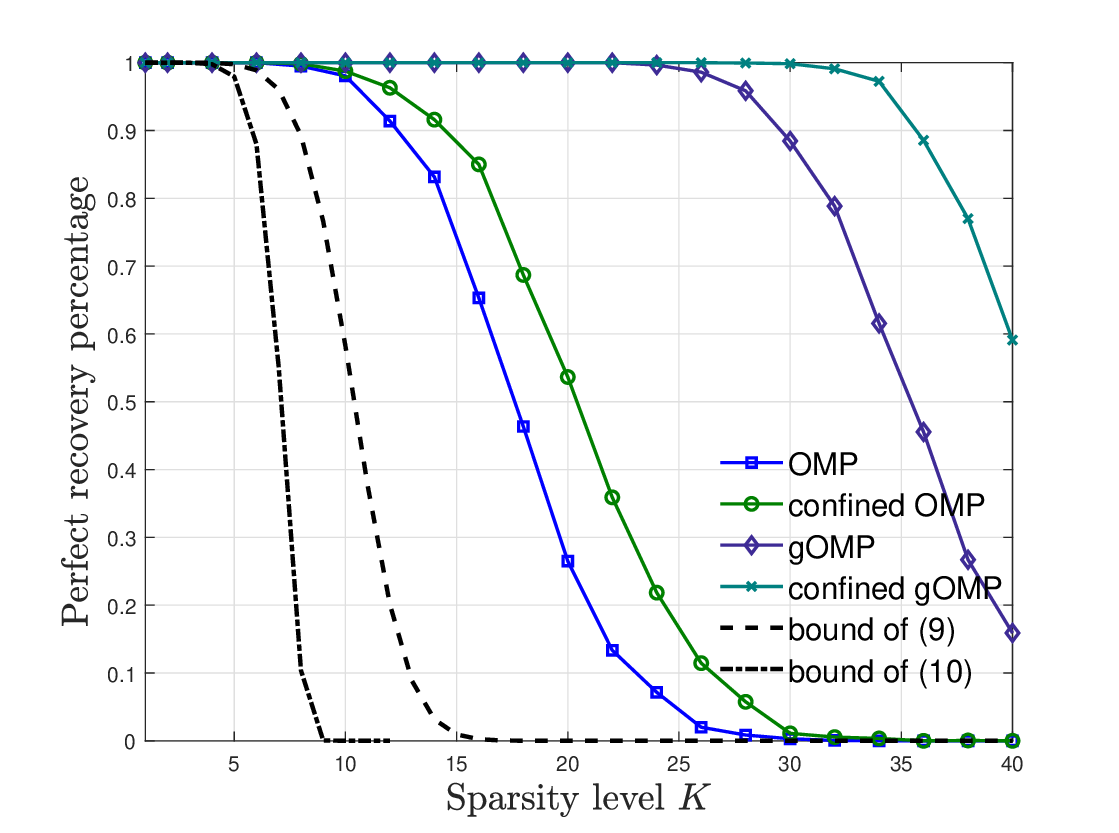}
	}
	\caption{Perfect recovery percentage for $K$-sparse (a)~Gaussian sparse signals and (b)~flat sparse signals versus the sparsity $K$.  }\label{fig:Sim_1}
\end{figure*}

\begin{figure*}[t]\label{fig:Sim_2}
	\centering
	\subfigure[]{
		\includegraphics[width=7cm]{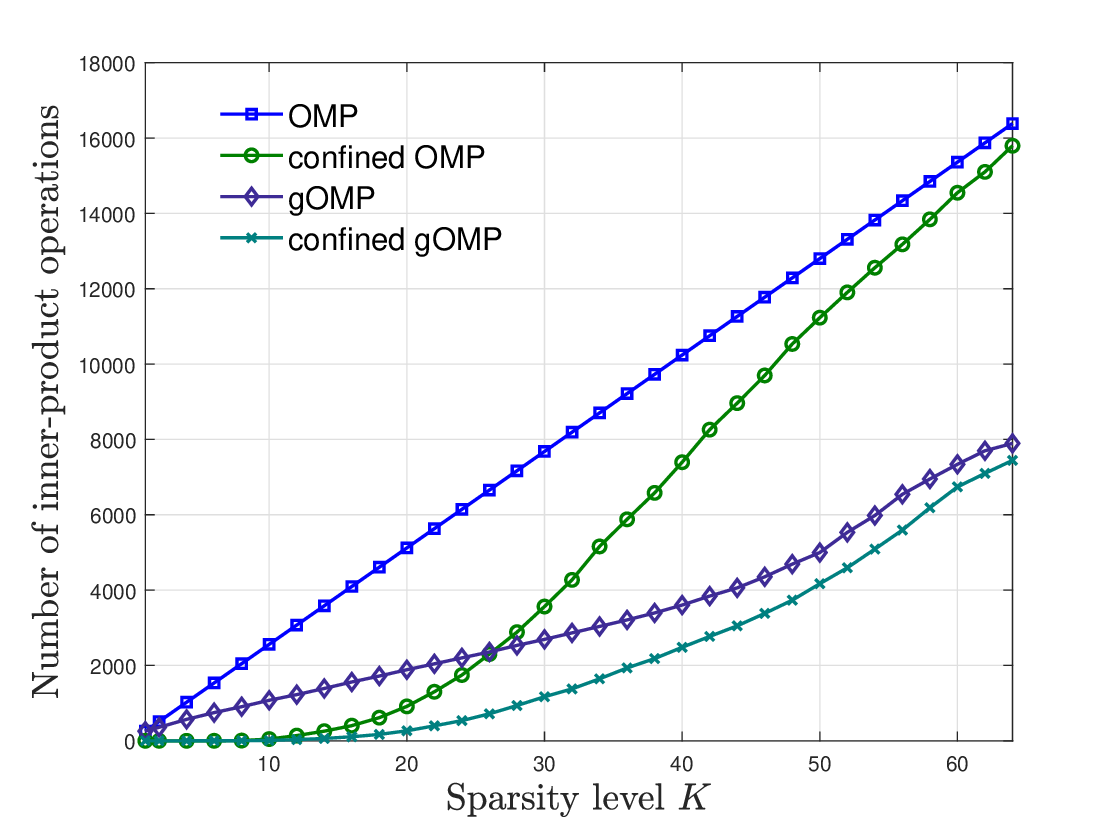}
	}
	\subfigure[]{
		\includegraphics[width=7cm]{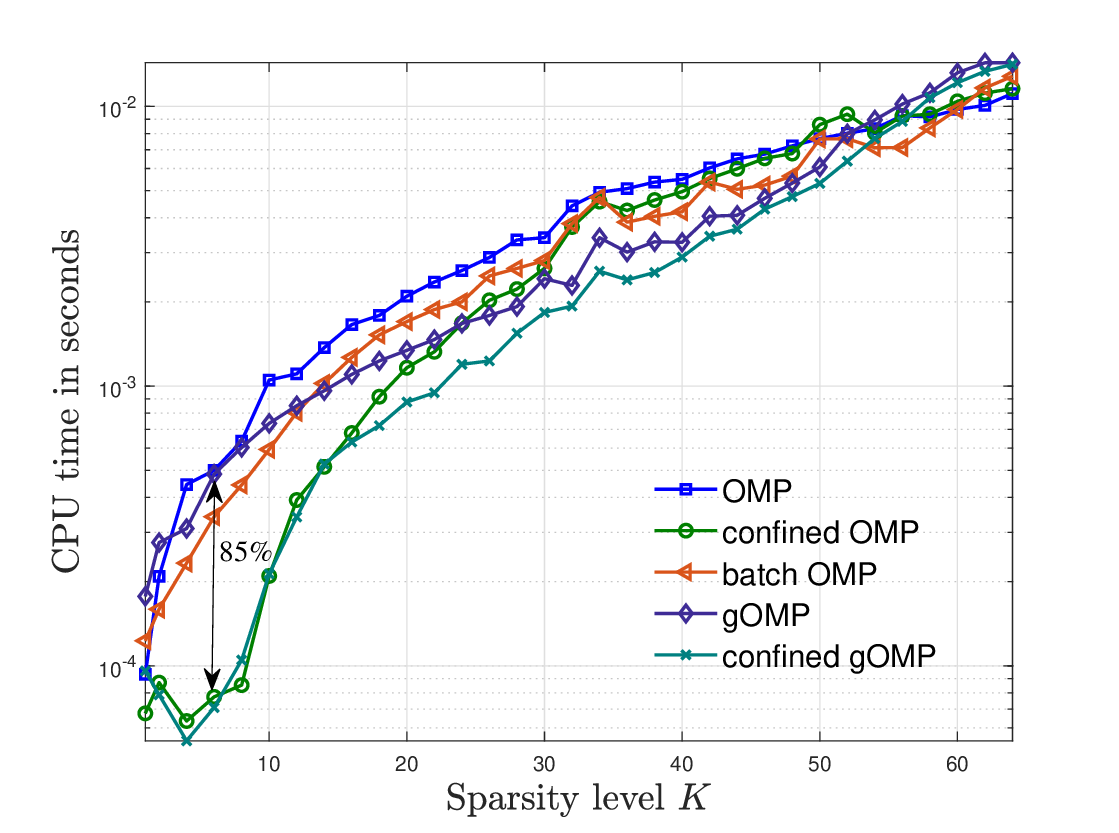}
	}
	\caption{(a)~Average number of inner-product operations in the identification step and (b) average CPU times of different greedy algorithms for Gaussian sparse signals with different $K$.  }\label{fig:Sim_2}
\end{figure*}

\subsection{Recovery Performances for Different Sparsity $K$}
We first show the efficiency and recovery performance of the proposed confined OMP and confined gOMP algorithms for different sparsity $K$ in the noiseless case. In Fig.~\ref{fig:Sim_1} and Fig.~\ref{fig:Sim_2}, each column of the measurement matrix $\boldsymbol{A}\in \{0,1\}^{128 \times 256}$ is chosen independently and uniformly among the vectors with $10$ ones. The setting $N = 3$ is applied to both the gOMP and confined gOMP algorithms and a sufficiently small $\epsilon = 10^{-12}$ is set for both the confined OMP and confined gOMP algorithms. The iterative process for all greedy algorithms terminates if the residual $\vecnorm {\boldsymbol{r}^{(k)}}_2 \leq 10^{-5}$. The signal parameters $\mu = 0$ and $\theta = 1$ are set for Gaussian sparse signals and flat sparse signals, respectively.

In Fig.~\ref{fig:Sim_1}, we can see that both the confined OMP and confined gOMP algorithms achieve performance gains for both Gaussian and flat sparse signals. In particular, the performance gain for the flat sparse signal is appreciable. It is reasonable because when the sparsity $K$ is relatively small, the size of $\Gamma$ is not as large, which enables the confined greedy algorithms~(including the confined OMP and confined gOMP algorithms) to screen out more interference.

Fig.~\ref{fig:Sim_2} shows the average inner-product operations in the identification step and average CPU times of different greedy algorithms for Gaussian sparse signals with different sparsity $K$. Note that the experimental results in Fig.~\ref{fig:Sim_2} and Fig.~\ref{fig:Sim_1}~(a) are obtained from the same experimental run. From~Fig.~\ref{fig:Sim_2}(a), we observe that the number of inner-product operations of OMP algorithm increases linearly with the increase of $K$. The reason is that the OMP algorithm requires $Kn$ inner-product operations in the identification step to recover a $K$-sparse signal. For the confined greedy algorithm, however, the identification is eliminated if $|\Gamma| = K$. Otherwise, at most $K|\Gamma|$ inner-product operations are required to recover a $K$-sparse signal. As shown in~Fig.~\ref{fig:exp_gamma}, the expectation of $|\Gamma|$ is much smaller than $n$ for a relatively small $K$. Thus, in~Fig.~\ref{fig:Sim_2}(a), the confined OMP algorithm achieves a significant reduction in number of inner-product operations for a relatively small $K$. The average CPU time shown in~Fig.~\ref{fig:Sim_2}(b) relates to the number of inner-product operations. Their trends of curves shown in~Fig.~\ref{fig:Sim_2}(a) and (b) are almost the same. In particular, the confined OMP algorithm achieves a reduction of about $85\%$ in average CPU time to recover a $6$-sparse Gaussian sparse signal when compared to that of OMP algorithm. Furthermore, we also present the average CPU time of the batch OMP algorithm proposed in~\cite{Rubinstein2008Efficient} for comparison. It can be seen that the complexity reduction of the confined OMP algorithm is much greater than that of the batch OMP algorithm for a relatively small $K$. For a relatively large sparsity level $K$, however, the average CPU times of the confined greedy algorithms are almost identical to those of their benchmarks, particularly when $K \geq 32$. This is reasonable because the complexity of solving the least squares problem correlates with the sparsity level $K$. Specifically, the complexity of solving the least squares problem in the OMP algorithm is $\mathcal{O}(mK^2)$. When $K$ is relatively large, the complexity of solving the least squares problem dominates.

\emph{Remark~3:} The batch OMP algorithm is an efficient implementation of the OMP algorithm. Their recovery performances are the same. Thus, we omit the performance comparison between the batch OMP algorithm and the proposed algorithm. More details about the batch OMP algorithm can be found in~\cite{Rubinstein2008Efficient}.

\begin{figure*}[t]
	\centering
	\subfigure[]{
		\includegraphics[width=7cm]{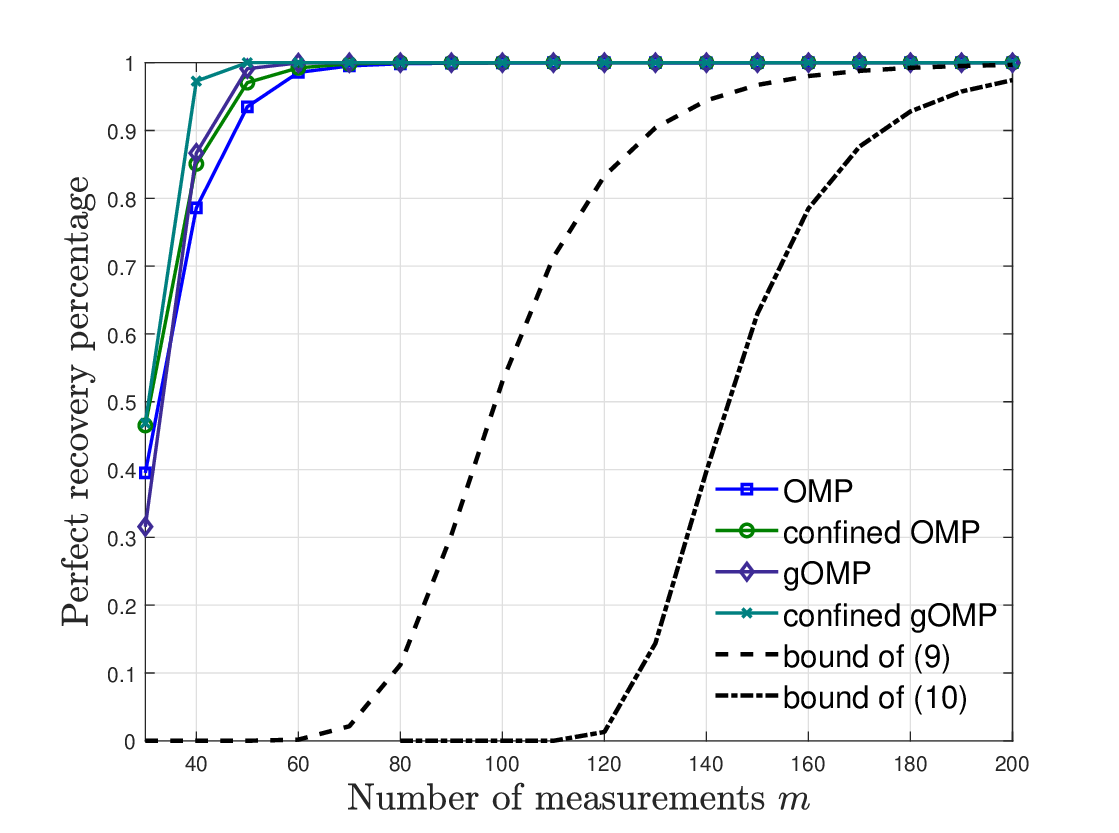}
	}
	\subfigure[]{
		\includegraphics[width=7cm]{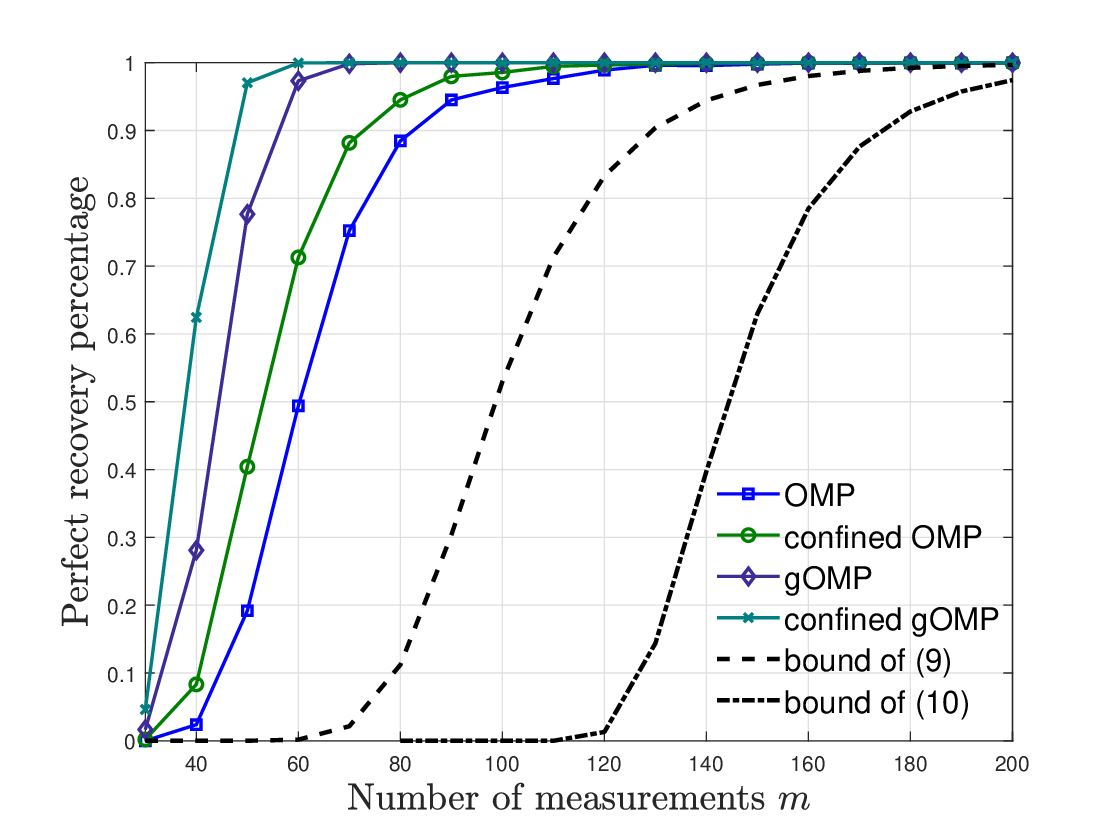}
	}
	\caption{Perfect recovery percentage for $8$-sparse (a)~Gaussian sparse signals and (b)~flat sparse signals versus measurements $m$.  }\label{fig:Sim_3}
\end{figure*}
\begin{figure*}[t]
	\centering
	\subfigure[]{
		\includegraphics[width=7cm]{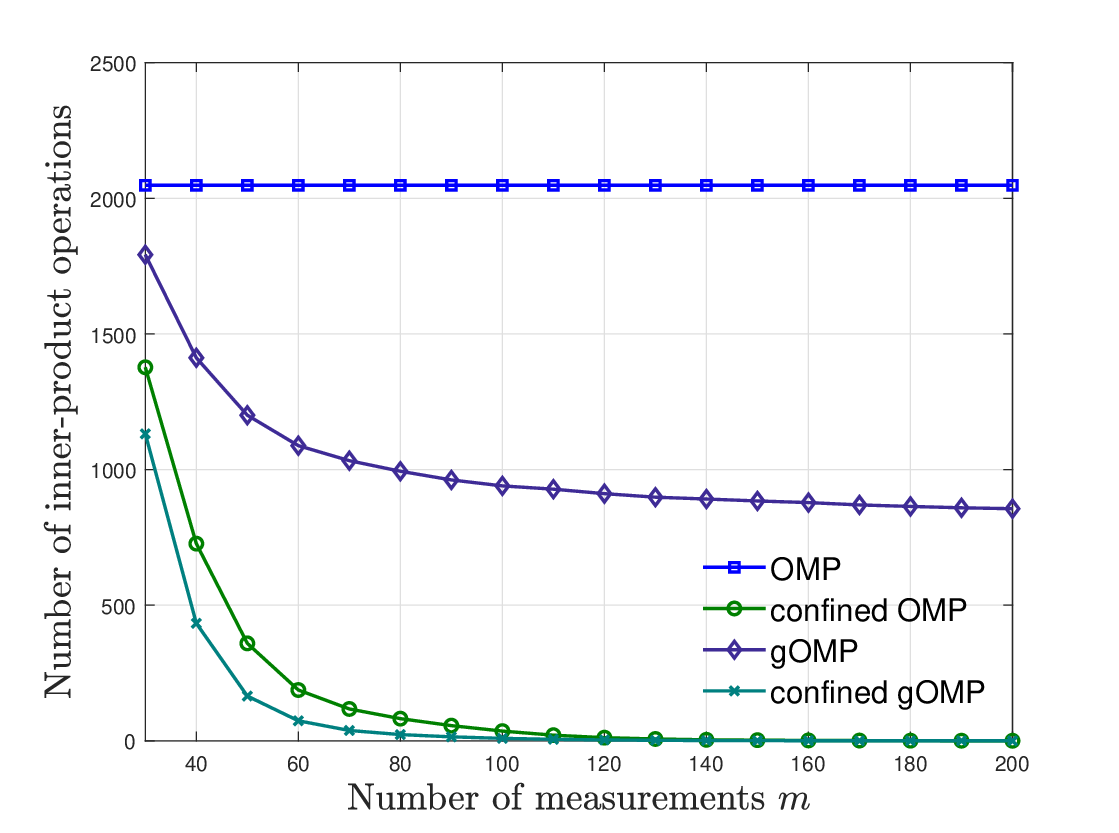}
	}
	\subfigure[]{
		\includegraphics[width=7cm]{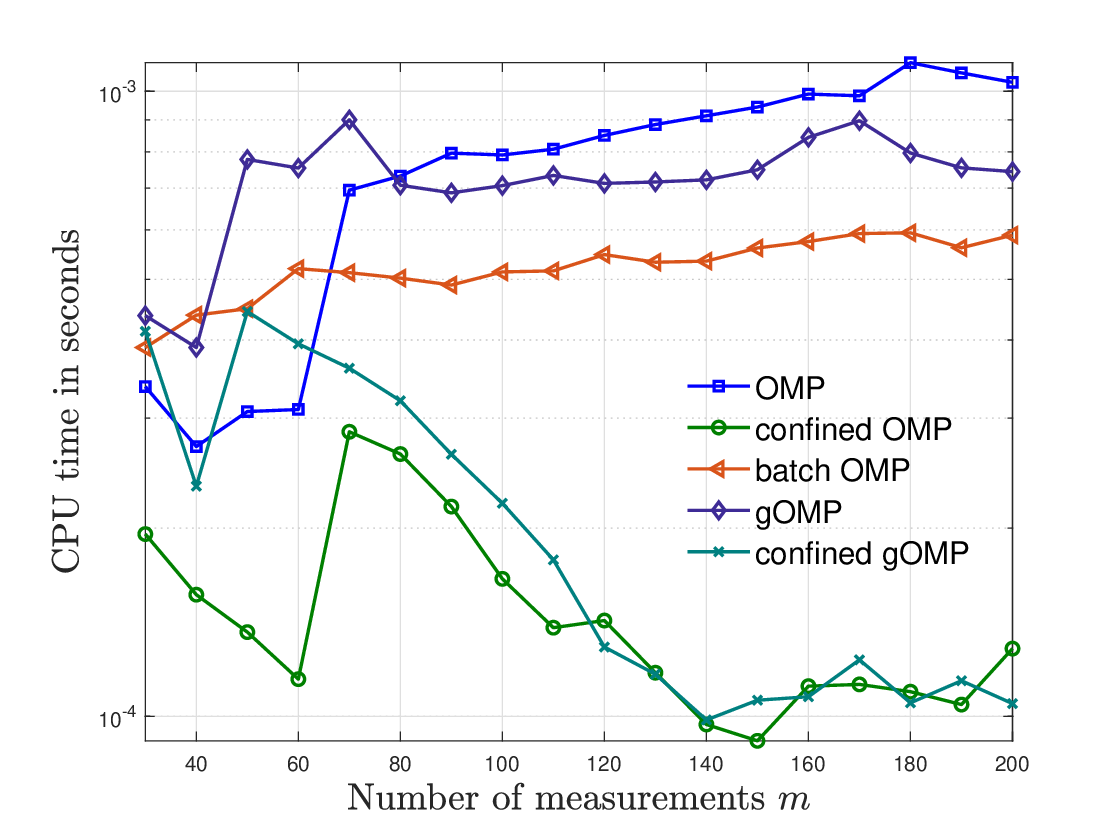}
	}
	\caption{(a)~Average number of inner-product operations in the identification step and (b) average CPU times of different greedy algorithms for Gaussian sparse signals with different measurements $m$.  }\label{fig:Sim_4}
\end{figure*}

\subsection{Recovery Performances for Different Measurements $m$}
We further show the efficiency and recovery performance of the confined OMP algorithm for different measurements $m$ in the noiseless case. In~Fig.~\ref{fig:Sim_3} and~Fig.~\ref{fig:Sim_4}, each column of the measurement matrix $\boldsymbol{A}\in \{0,1\}^{m \times 256}$ is chosen independently and uniformly among the vectors with $10$ ones. The sparsity level $K$ is set to $8$. Other parameters are consistent with those in Fig.~\ref{fig:Sim_1} and Fig.~\ref{fig:Sim_2}.

In~Fig.~\ref{fig:Sim_3}, the recovery performances of confined greedy algorithms are better than those of benchmarks, but performance gains become smaller as $m$ increases. However, we observe in~Fig.~\ref{fig:Sim_4} that the number of inner-product operations for confined greedy algorithms tends to 0 as $m$ increases. These observations indicate that the complexity of the confined greedy algorithms are mainly contributed by the complexity of solving least-squares problems when $m \geq 120$.

\subsection{Sensitivity of Sparsity Level Knowledge}
In many applications, the sparsity level $K$ is unknown in practice. Fortunately, as shown in~(\ref{equ:expy_simple}), the expectation of $\nu^{(K)}$ is related to the sparsity $K$ if the signal is a confined signal. The sparsity $K$ is predictable according to the ``sparsity" of $\boldsymbol{y}$. For the flat sparse signal with amplitude $\theta$, in the absence of noise, the sparsity level $K$ can be accurately detected by $ \vecnorm{\boldsymbol{y}}_1 / (d \cdot |\theta|)$. For other confined signals, a simple way is to calculate $\mathbf{E}_\upsilon [ \nu^{(K)} ]$ for different $K$ first. Then, the estimated sparsity level $\hat{K}$ can be evaluated by
\begin{equation}\label{equ:est_K}
  \hat{K} = \underset{K\geq1} \argmin{ \left| \upsilon - \mathbf{E}_\upsilon [ \nu^{(K)} ] \right| },
\end{equation}
where $\mathbf{E}_\upsilon [ \nu^{(K)} ]$ is given in (\ref{equ:expy_simple}) and $\upsilon = m - |\mathcal{E}|$ is the number of ``nonzero" elements for the received sample $\boldsymbol{y}$.

\begin{figure}[t]
  \centering
  \includegraphics[width=7cm]{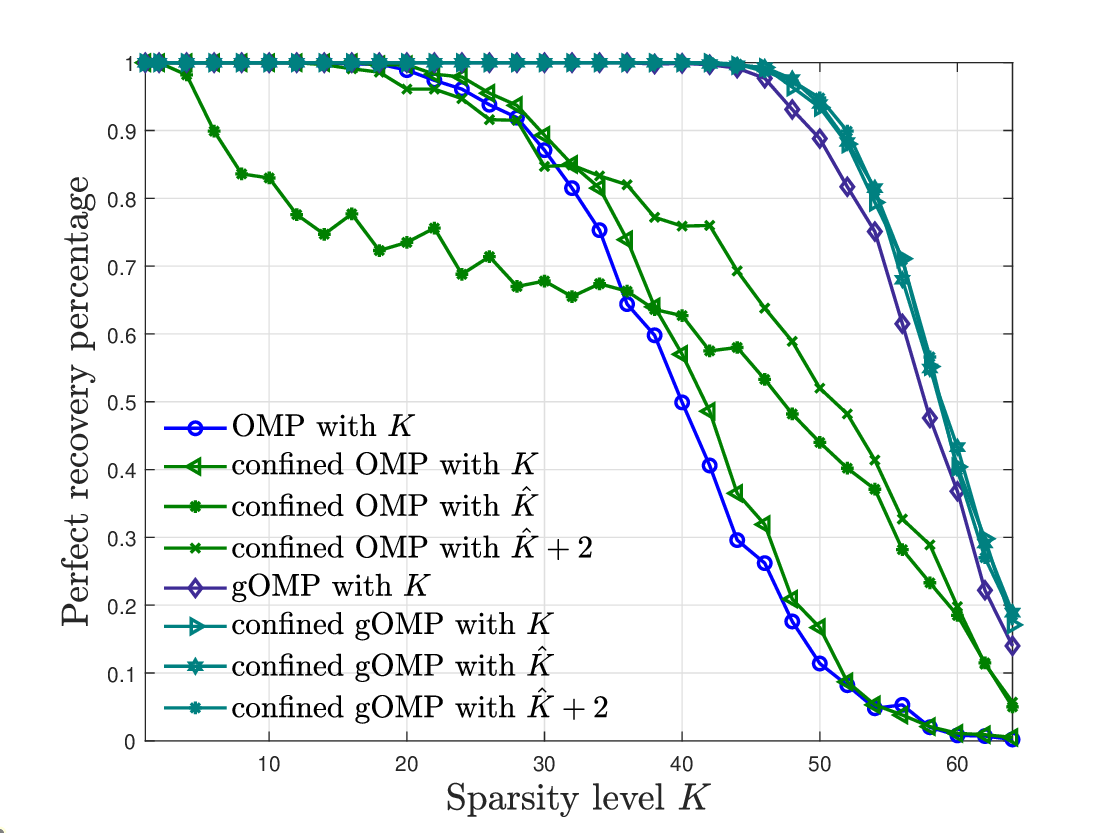}
  \caption{Perfect recovery percentage for Gaussian sparse signals, where the estimated sparsity level $\hat{K}$ is evaluated by (\ref{equ:est_K}).}
  \label{fig:Sim_5}
\end{figure}


Fig.~\ref{fig:Sim_5} shows the sensitivity of the proposed algorithms to the estimated sparsity level $\hat{K}$. The parameters are consistent with those in~Fig.~\ref{fig:Sim_1}~(a). We see that when $K$ is relatively large, the confined OMP algorithm benefits from an overestimated $\hat{K} > K$ on average, resulting in better recovery performance. Whereas when $K$ is small, the algorithm suffers from an underestimated $\hat{K} < K$ on average. Setting $\hat{K} + 2$ can enhance the recovery performance for smaller $K$ by introducing an additional two iterations. Compared with the confined OMP algorithm, the confined gOMP algorithm shows robustness in recovery performance against the estimated $\hat{K}$, which is attributed to its larger estimated support set.


\subsection{Robustness in the Noisy Linear System}
In this subsection, we show the robustness of the proposed algorithm in a noisy linear system. Assume that the noise vector $\boldsymbol{v}$ in~(\ref{equ:noise_linear_mod}) is generated randomly, whose elements are independently and randomly drawn from a uniform distribution $U(-\eta,\eta)$. Here, $\eta$ is a positive real number and $\vecnorm{\boldsymbol{v}}_{\infty} \leq \eta$. To maximize redundancy elimination, the $\epsilon$ is set to $\epsilon = \eta$ so that noise-induced perturbations remain within the tolerance range.

\begin{figure}[t]
	\centering
	\subfigure[]{
		\includegraphics[width=7cm]{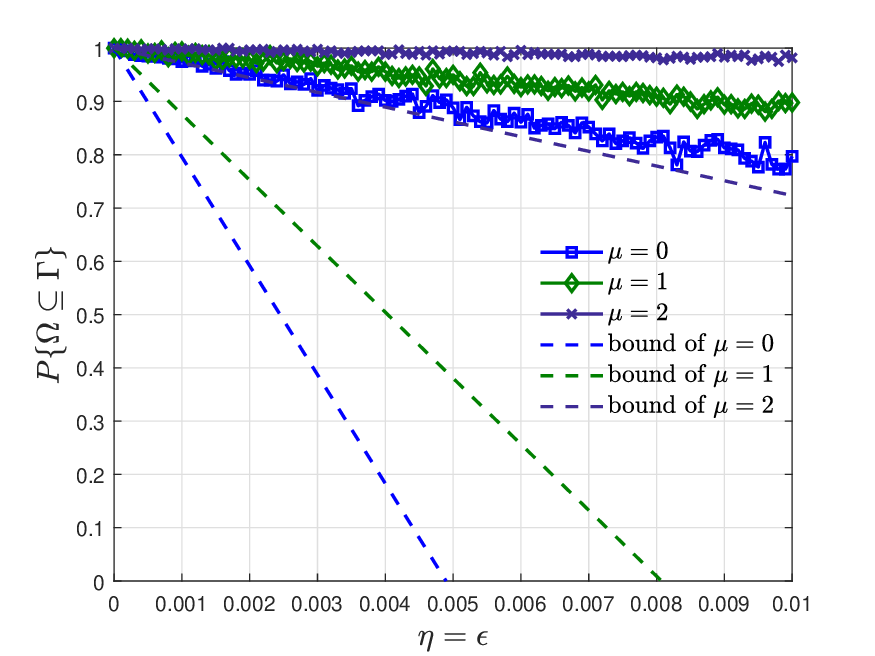}
	}
	\subfigure[]{
		\includegraphics[width=7cm]{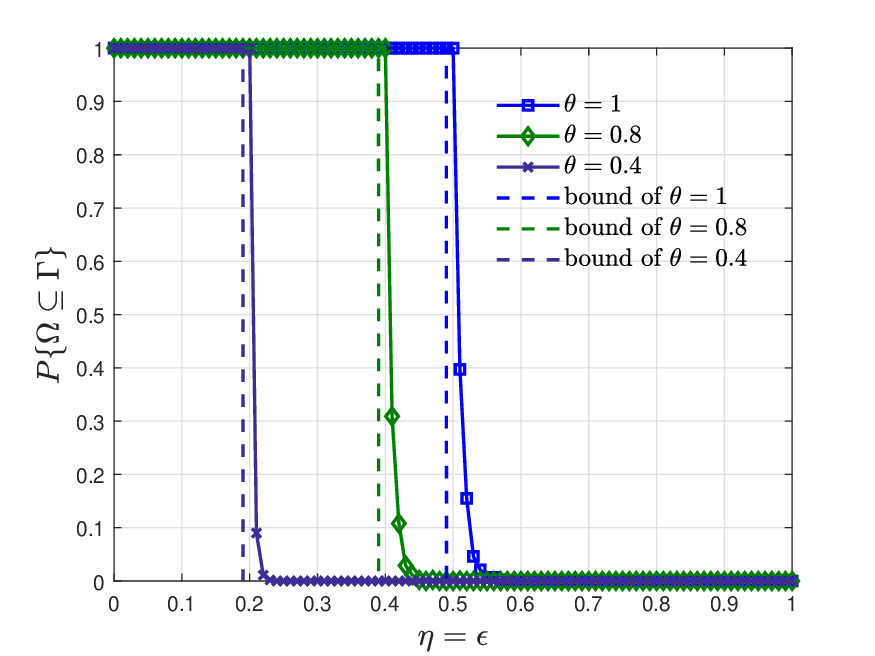}
	}
	\caption{Empirical probability $\mathbb{P} \{ \Omega \subseteq \Gamma \} $ and corresponding lower bounds for (a)~Gaussian sparse signals with means $\mu = \{ 0,1,2 \}$ and (b)~flat sparse signals with amplitudes $\theta = \{0.4,0.8,1 \}$ in a noisy linear system. Note that the bounds are given in~(\ref{equ:conf_noise}). }\label{fig:Sim_6}
\end{figure}

Fig.~\ref{fig:Sim_6} shows the empirical probability $\mathbb{P} \{ \Omega \subseteq \Gamma \} $ for Gaussian and flat sparse signals in a noisy linear system. The parameters are consistent with those in~Fig.~\ref{fig:Sim_1} except for $\mu = \{ 0,1,2 \}$, $\theta = \{0.4,0.8,1 \}$, $\epsilon = \eta$, and $K=10$. It can be seen from Fig.~\ref{fig:Sim_6} that the robustness of the probability $\mathbb{P} \{ \Omega \subseteq \Gamma \} $ to noise is significantly influenced by the probabilistic distribution of the nonzero components of the signal. For the Gaussian sparse signal, increasing the mean $\mu$ can increase the probability $\mathbb{P} \{ \Omega \subseteq \Gamma \} $. The reason can be found in Sec.~\ref{sec:noisy_mod}. Similarly, for the flat sparse signal, increasing the amplitude $\theta$ enhances the robustness of the confined set $\Gamma$ against noise. If $2\eta \leq \theta$, the support of $\boldsymbol{x}$ is a subset of $\Gamma$ with probability 1.

\begin{figure}[t]
  \centering
  \includegraphics[width=7cm]{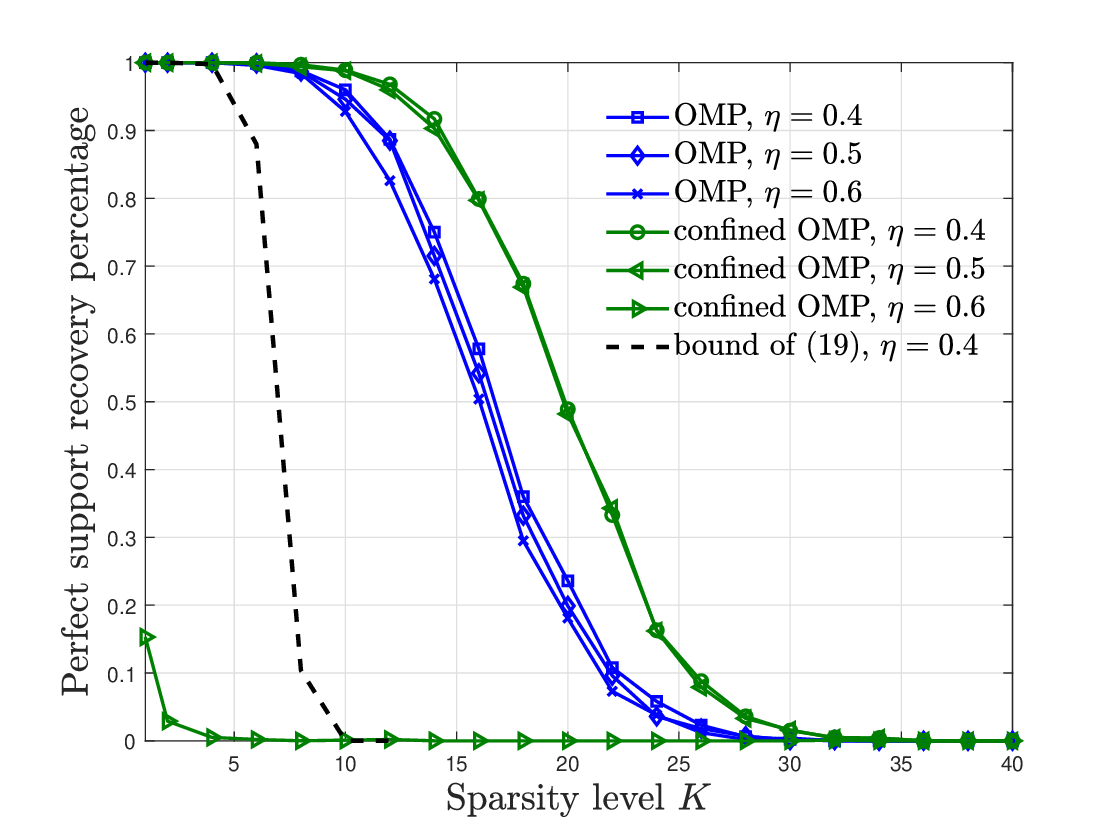}
  \caption{Perfect support recovery percentage for the flat sparse signal with amplitude $\theta = 1$ in a noisy linear system, where the noise vector $\boldsymbol{v}$ satisfies $\vecnorm{\boldsymbol{v}}_{\infty} \leq \eta$. }
  \label{fig:Sim_7}
\end{figure}

Fig.~\ref{fig:Sim_7} shows the perfect support recovery performance for the flat sparse signal with amplitude $\theta = 1$ in a noisy linear system, where $\mathrm{supp}(\hat{\boldsymbol{x}}) = \mathrm{supp}(\boldsymbol{x})$ indicates perfect support recovery. The parameters are consistent with those in~Fig.~\ref{fig:Sim_1}~(b) except for $\epsilon = \eta$. We can see that empirical results shown in Fig.~\ref{fig:Sim_7} are consistent with those in Fig.~\ref{fig:Sim_6}. When $2\eta \leq \theta$, the exact support recovery performance is almost unaffected by noise. Otherwise, the probability $\mathbb{P} \{ \Omega \subseteq \Gamma \} $ drops sharply to 0, and so does the exact support recovery performance.

\subsection{Lower Bounds}
The lower bounds on the probability of $\mathbb{S}_{\mathrm{comp}}$ and $ \mathbb{S}_{\mathrm{comp}}^{(\mathrm{supp})}$ can be obtained by~(\ref{equ:lowerbound_COMP}) and~(\ref{equ:pi_Ksupp}), respectively. In experimental tests, the column degree $d$ of each chosen $\boldsymbol{A}$ satisfies $\ln m < d \leq m/2$ for a sufficiently large $m$. In this case, the spark of $\boldsymbol{A}$ is greater than $K$ with high probability. Thus, the effect of $\mathbb{P} \{ spark(\boldsymbol{A}) > K \}$ is negligible. Furthermore, $\bar{\pi}_K$ shown in~(\ref{equ:lower_boundlooser}) is a looser lower bound on $\mathbb{P} \{ \mathbb{S}_{\mathrm{comp}} \}$. In particular, given $m$ and $d$, the probability $\bar{\pi}_K$ is valid only when $K \leq m/d$. For example, in Fig.~\ref{fig:Sim_1}, the $\bar{\pi}_K$ is valid when $K \leq 128/10$. Similarly, given $K$ and $d$ in Fig.~\ref{fig:Sim_3}, the $\bar{\pi}_K$ is valid when $m \geq K \cdot d = 80$.

The lower bound on $\mathbb{P} \{\mathbb{S}_{\mathrm{comp}} \}$ and the bound $\bar{\pi}_K$ are shown in~Fig.~\ref{fig:Sim_1} and~Fig.~\ref{fig:Sim_3}. It can be seen that there exists a non-negligible gap between theory and practice. The reason is that the bound given in~(\ref{equ:lowerbound_COMP}) is actually the probability $\mathbb{P} \{ |\Gamma| = K \}$. Thus, the gap between the analytical curve and experimental curve is equal to the exact recovery probability conditioned on the event $\{ |\Gamma| > K \}$.

\emph{Remark~4:} In~\cite{Tropp2007Signal} and~\cite{Wen2020Signal}, lower bounds on the exact recovery probability of OMP algorithm over Gaussian matrices also are loose. However, these theoretical results may provide a guide to determine whether the greedy algorithms are appropriate for reconstruction of sparse signals~\cite{Tropp2007Signal}. Otherwise, another sparse recovery algorithm is considered instead.

In Fig.~\ref{fig:Sim_6}, we show the lower bounds of the probability $\mathbb{P} \{ \Omega \subseteq \Gamma \}$ for different confined signals. These bounds, derived from (\ref{equ:conf_noise}), are valid only if they are non-negative. We can see that the trends of these theoretical results are consistent with those of empirical results, which can provide guidance for evaluating the robustness of such confined signals to noise. Fig.~\ref{fig:Sim_7} shows the lower bound on $\mathbb{P} \{ \mathbb{S}_{\mathrm{comp}}^{(\mathrm{supp})} \}$ for the flat sparse signal. Similar to that of (\ref{equ:lower_boundlooser}), this bound, derived from (\ref{equ:pi_Ksupp}), is valid only when $K \leq m/d$. For $\eta = 0.5$ and $\eta = 0.6$, the derived lower bounds are zero and thus not shown in Fig.~\ref{fig:Sim_7}.

\begin{figure}[t]
  \centering
  \includegraphics[width=7cm]{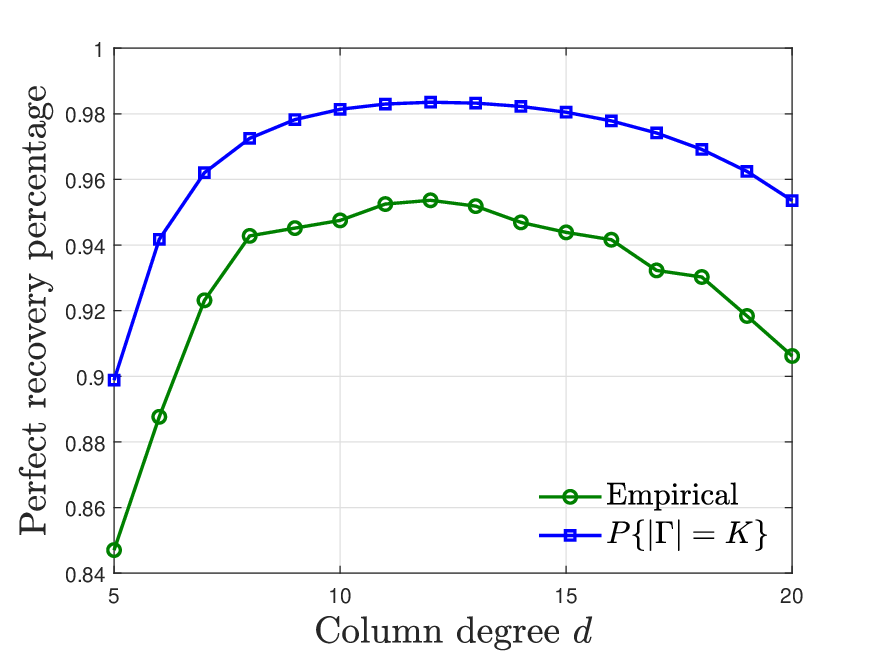}
  \caption{Empirical and theoretical recovery performances of the confined OMP algorithm as a function of column degree $d$.}
  \label{fig:d_opt}
\end{figure}

\subsection{Optimization of Column Degree $d$}
As shown in~(\ref{equ:lowerbound_COMP}), the lower bound on $\mathbb{P} \{\mathbb{S}_{\mathrm{comp}} \}$ is related to the column degree $d$ of $\boldsymbol{A}$. We expect to optimize the recovery performance of confined OMP algorithm by optimizing the lower bound on $\mathbb{P} \{\mathbb{S}_{\mathrm{comp}} \}$. In other words, the optimized target is to increase the probability $\mathbb{P} \{ |\Gamma| = K \}$. In the following experimental simulations, the column degree $d$ of a measurement matrix $\boldsymbol{A} \in \{0,1\}^{100\times 256}$ satisfies $ d > \log m$ such that the probability that the least squares have a unique solution is approximately equal to 1. The experimental and theoretical simulations are based on flat sparse signals with $K=10$ and $K=5$, respectively. It can be seen from~Fig.~\ref{fig:d_opt} that the trend of theoretical curve is almost the same with that of experimental curve. Both of them achieve the best recovery performance when $d = 12$.

\section{Conclusion}
\label{sec:conclusions}
This paper proposed a variant of OMP algorithm, referred to as confined OMP algorithm, to recover a class of sparse signals. We proved that the support of $\boldsymbol{x}$ is contained in the confined set $\Gamma$ if the signal $\boldsymbol{x}$ is defined in~Definition~\ref{def:conf_signal}. We further presented the expectation of $|\Gamma|$ to show that much redundancy of $\boldsymbol{A}$ can be removed, resulting in an improvement of the proposed algorithm in terms of identification efficiency. We also developed lower bounds on the probability $\mathbb{P} \{\mathbb{S}_{\mathrm{comp}} \}$ over sparse random combinatorial matrices. Theoretical results showed that $m = 2\e K \ln (n - K)$ measurements are sufficient to ensure the probability of recovering a $K$-sparse confined signal is at least $1 - \frac{1}{n-K}$. We further investigated the robustness of the proposed algorithm to noise in a noisy linear system. The results indicate that the robustness of the proposed algorithm to noise strongly relies on the distribution of the nonzero elements of the signal. Finally, experimental results demonstrated that confined greedy algorithms outperform their benchmarks in both recovery performance and complexity.

\appendices

\section{The Proof of~Theorem~\ref{thm:conf}}\label{sec:app0}
\begin{proof}
The complement of the event $\left\{ \Omega \subseteq \Gamma \right\}$ is that there exists at least one row $i \in [m]$ such that there exists $j \in \Omega$ with $A_{i,j} = 1$ and $|y_i| \leq \epsilon$. For any row $i \in [m]$, we denote by the event $E_i = \{ \mathrm{there}~\mathrm{exists}~j \in \Omega~\mathrm{such}~\mathrm{that}~A_{i,j} = 1~\mathrm{and}~|y_i| \leq \epsilon \}$. The probability $\mathbb{P} \{ \Omega \nsubseteq \Gamma \}$ is equivalent to the probability $\mathbb{P} \{ \bigcup_{i=1}^m E_i \}$.  Then, by the union bound, we have
\begin{equation}
  \mathbb{P} \{ \Omega \nsubseteq \Gamma \} = \mathbb{P} \left\{ \bigcup_{i=1}^m E_i \right\} \leq \sum_{i = 1}^{m} \mathbb{P} \{ E_i \}.
\end{equation}

Now, we calculate the probability $\mathbb{P} \{ E_i \}$ first. Let $L_i$ denotes the number of $A_{i,j} = 1$  for $j \in \Omega$ in row $i$. By the law of total probability, we have
\begin{equation}
  \mathbb{P} \{ E_i \} = \sum_{\ell = 1}^{K} \mathbb{P} \{ E_i | L_i = \ell \} \mathbb{P} \{ L_i = \ell \}.
\end{equation}
For $j \in \Omega$, the probability $ \mathbb{P} \{ A_{i,j} = 1 \}$ is $d/m$, independent of other columns. Furthermore, for $\ell = 1,2,\cdots, K$, the probability $\mathbb{P} \{ L_i = \ell \}$ follows the binomial distribution with $K$ and $\frac{d}{m}$. Thus, the probability $\mathbb{P} \{ L_i = \ell \} $ is given by
\begin{equation}
\mathbb{P} \{ L_i = \ell \} = \binom{K}{\ell} \left(\frac{d}{m} \right)^{\ell} \left(1 - \frac{d}{m} \right)^{K- \ell}.
\end{equation}

It is known that the $K$ nonzero components of $\boldsymbol{x}$, denoted without loss of generality by $X_1,X_2,\cdots,X_K$, are i.i.d., with the same CDF $F_X(x)$ and PDF $f_X(x)$. Given $L_i = \ell$, the conditional probability $\mathbb{P} \{ E_i | L_i = \ell \}$ is equivalent to the absolute of the sum of $\ell$ nonzero components being less than or equal to $\epsilon$. Without loss of generality, we assume that $Y = X_1 + X_2 + \cdots + X_{\ell}$. Consider the simple case that $\ell = 2$, the CDF of $Y = X_1 + X_2$ is calculated as
\begin{subequations}
 \begin{align}
 F_Y(y) &= \mathbb{P} \{Y \leq y \}=\mathbb{P} \{X_1+X_2 \leq y \}\\
 &= (F_X * F_X)(y) = \int_{-\infty}^{+\infty} f_X(\tau) F_X(y - \tau) \, d \tau.
 \end{align}
\end{subequations}
Similarly, extending $Y$ to the sum of multiple components, its CDF $F_Y(y)$ is given by
\begin{equation}
F_Y(y) = (\underbrace{F_X * F_X * \cdots * F_X}_{\ell \text{ times}})(y) = F_X^{*\ell}(y).
\end{equation}
Once the CDF $F_Y(y)$ is obtained, one can calculate the conditional probability $\mathbb{P} \{ E_i | L_i = \ell \}$:
\begin{equation}
\mathbb{P} \{ E_i | L_i = \ell  \} = \left( F_Y(\epsilon) - F_Y(-\epsilon) \right)
= ( F_X^{*\ell}(\epsilon) - F_X^{*\ell}(-\epsilon) ).
\end{equation}
Thus, the probability $\mathbb{P} \{ E_i \}$ is given by
\begin{equation}\label{equ:Ei_pro}
 \begin{split}
  \mathbb{P} &\{ E_i \} = \\
   &\sum_{\ell = 1}^{K} \binom{K}{\ell} \left( \frac{d}{m} \right)^{\ell} \left( 1 - \frac{d}{m} \right)^{K - \ell} \Big( F_X^{*\ell}(\epsilon) - F_X^{*\ell}(-\epsilon) \Big).
 \end{split}
\end{equation}
The probability $\mathbb{P} \{ \Omega \subseteq \Gamma \}$ is equivalent to the probability $1 - \mathbb{P} \{ \Omega \nsubseteq \Gamma \}$. Thus, we get~(\ref{equ:conf}).
\end{proof}

\section{The Proof of~Lemma~\ref{Lemm:pro_mu}}\label{sec:app1}
\begin{proof}
It is known that $\boldsymbol{y}$ is a linear combination of $K$ columns of $\boldsymbol{A}$. For $K=1$, the value of $\nu^{(1)}$ is always equal to $d$ since the degree of each column of $\boldsymbol{A}$ is $d$. As a result, we have
$$\mathbb{P} \{ \nu^{(1)} = d \} = 1.$$

For $k = 2,3,\cdots,K$, the value of $\nu^{(k)}$ ranges from $d$ to $\min \{kd,m\}$. Here, the value of $\nu^{(k)}$ is equivalent to the number of nonzero elements for a vector resulting from the element-wise OR operation applied to $k$ columns of $\boldsymbol{A}$. It is hard to calculate the probability of $\nu^{(k)}$ directly since it is related to $k$ columns of $\boldsymbol{A}$. Fortunately, it can be modeled as a Markov model. Specifically, by the law of total probability, the probability $\mathbb{P}(\nu^{(k)})$ is
\begin{equation}\label{equ:Markov}
\mathbb{P}(\nu^{(k)}) = \sum_{\nu^{(k-1)}} \mathbb{P}(\nu^{(k)} | \nu^{(k-1)} )\mathbb{P}(\nu^{(k-1)}).
\end{equation}

The transition probability $\mathbb{P}(\nu^{(k)} | \nu^{(k-1)} )$ only depends on $\nu^{(k-1)}$ and $\nu^{(k)}$. This is equivalent to a linear combination of a vector with $\nu^{(k-1)}$ nonzero elements and any column of $\boldsymbol{A}$. If $\nu^{(k-1)} < \max \{\nu^{(k)}-d,d \}$ and $\nu^{(k-1)} > \min \{ \nu^{(k)}, (k-1)d \}$, the transition probability $\mathbb{P}(\nu^{(k)} | \nu^{(k-1)} )$ is obviously 0. Otherwise, the transition probability is
\begin{equation}\label{equ:tran_pro}
\mathbb{P}(\nu^{(k)} | \nu^{(k-1)} ) = \frac{ \binom{\nu^{(k-1)}}{\nu^{(k)}-\nu^{(k-1)}} \binom{m-\nu^{(k-1)}}{d-\nu^{(k)}+\nu^{(k-1)}} }{ \binom{m}{d} },
\end{equation}
where $\nu^{(k)} \in \{ d,d+1,\cdots, \min \{kd,m\} \}$ and $\nu^{(k-1)} \in \{ \max \{\nu^{(k)}-d,d\}, \max \{\nu^{(k)} -d,d \}+1,\cdots,\min \{ \nu^{(k)}, (k-1)d \} \}$. Here,~ (\ref{equ:tran_pro}) represents the probability that any column of $\boldsymbol{A}$ shares exactly $\nu^{(k)}-\nu^{(k-1)}$ nonzero positions with a vector having $\nu^{(k-1)}$ nonzero elements.

The probability $\mathbb{P}(\nu^{(k-1)})$ can be calculated by~(\ref{equ:Markov}) recursively. As a consequence, we get~(\ref{equ:pro_nu}).
\end{proof}

\section{The Proof of Theorem~\ref{thm:exp_Gamma}}\label{sec:app2}
\begin{proof}
As shown in~Theorem~\ref{thm:conf}, the support of $\boldsymbol{x}$ is a subset of the confined set $\Gamma$ with probability 1. Thus, there are at least $K$ out of $n$ columns whose indices are in $\Gamma$. That is, we have $\mathbf{E}[|\Gamma|] \geq K$.

Assume that $|\Gamma| > K$ and there exists a column $\boldsymbol{A}_j$ for $j \in \Gamma \backslash \Omega$. Given $\nu^{(K)} = \upsilon$, the probability of the event $\left\{ j \in \Gamma \backslash \Omega \big| \nu^{(K)} = \upsilon \right\}$ is given by
\begin{equation}\label{equ:gamma_pro}
\mathbb{P} \left\{ j \in \Gamma \backslash \Omega \Big| \nu^{(K)} = \upsilon \right\} = \frac{\binom{\upsilon}{d}}{\binom{m}{d}}.
\end{equation}
Here, (\ref{equ:gamma_pro}) indicates the probability that the column $\boldsymbol{A}_j$ shares exactly $d$ nonzero positions with $\boldsymbol{y}$ having $\upsilon$ nonzero elements. If this is not the case, then $j \notin \Gamma$ since the $i$-th element of $\boldsymbol{A}_j$ must be 0 for $i \in \mathcal{E}$.

Following from~Lemma~\ref{Lemm:pro_mu}, the value of $\upsilon$ ranges from $d$ to $\min \{Kd,m\}$. Thus, by the law of total probability, we have
\begin{equation}\label{equ:gamma_pro_app}
 \begin{split}
\mathbb{P} & \left\{ j \in \Gamma \backslash \Omega \right\} = \\
&\sum_{\upsilon = d}^{\min \{Kd,m\}} \mathbb{P} \left\{ j \in \Gamma \backslash \Omega \Big| \nu^{(K)} = \upsilon \right\} \mathbb{P} \left\{ \nu^{(K)} = \upsilon \right\}\\
&= \sum_{\upsilon = d}^{\min \{Kd,m\}} \frac{\binom{\upsilon}{d}}{\binom{m}{d}} \mathbb{P} \left\{ \nu^{(K)} = \upsilon \right\}\\
&= \mathbf{E}_{\upsilon} \left[ \frac{\binom{\nu^{(K)}}{d}}{\binom{m}{d}} \right],
 \end{split}
\end{equation}
where $\mathbb{P} \left\{ \nu^{(K)} = \upsilon \right\}$ is given in~(\ref{equ:pro_nu}) and $\mathbf{E}_{\upsilon} [ \cdot ]$ is given in~(\ref{equ:exp_y}).

In summary, there are $K$ columns whose indices are in $\Omega \subseteq \Gamma$.  For the remaining $n-K$ column indices, each of them belongs to $\Gamma \backslash \Omega$ with probability $\mathbf{E}_{\upsilon} \left[ \frac{\binom{\nu^{(K)}}{d}}{\binom{m}{d}} \right]$. As a result, we get $\mathbf{E} [|\Gamma|]$ shown in~(\ref{equ:exp_Gamma}).
\end{proof}

\section{The proof of Theorem~\ref{thm:lowerbound_comp}}\label{sec:app5}
\begin{proof}
By the law of total probability, we have
\begin{subequations}
\begin{align}
\mathbb{P} & \{ \mathbb{S}_{\mathrm{comp}} \} = \mathbb{P} \left\{ \mathbb{S}_{\mathrm{comp}} \big| |\Gamma| = K \right\} \mathbb{P} \left\{ |\Gamma| = K \right\} \nonumber\\
&+ \mathbb{P} \left\{ \mathbb{S}_{\mathrm{comp}} \big| |\Gamma| > K \right\} \mathbb{P} \left\{ |\Gamma| > K \right\}\nonumber\\
&= 1 \cdot \mathbb{P} \left\{ |\Gamma| = K \right\} + \mathbb{P} \left\{ \mathbb{S}_{\mathrm{comp}} \big| |\Gamma| > K \right\} \mathbb{P} \left\{ |\Gamma| > K \right\}\label{equ:app5_1a}\\
&\geq \mathbb{P} \left\{ |\Gamma| = K \right\}.\label{equ:app5_1b}
\end{align}
\end{subequations}
As shown in~Algorithm~\ref{alg:COMP}, the identification is already done if $|\Gamma| = K$. Thus, the probability $\mathbb{P} \left\{ \mathbb{S}_{\mathrm{comp}} \big| |\Gamma| = K \right\}$ is reduced to the probability that the least squares has a unique solution. Since $\boldsymbol{A}$ satisfies $spark(\boldsymbol{A}) > K$, any $K$ columns of $\boldsymbol{A}$ are linearly independent. In other words, $\boldsymbol{A}_{\Lambda^{(K)}}^{T} \boldsymbol{A}_{\Lambda^{(K)}}$ is of full rank. Thus, we have $\mathbb{P} \left\{ \mathbb{S}_{\mathrm{comp}} \big| |\Gamma| = K \right\} = 1$. In~(\ref{equ:app5_1a}), calculating $\mathbb{P} \left\{ \mathbb{S}_{\mathrm{comp}} \big| |\Gamma| > K \right\}$ is complicated. Hence, we only consider the contribution of the first term of~(\ref{equ:app5_1a}) for simplicity, resulting in~(\ref{equ:app5_1b}).

Given any $j \in \Omega^c$, the probability $\mathbb{P} \left\{ |\Gamma| = K \right\}$ is equivalent to the probability $\mathbb{P} \{ j \notin \Gamma \}^{|\Omega^c|}$. It is known that the value of $\upsilon$ ranges from $d$ to $\min \{Kd,m\}$. Given $\nu^{(K)} = \upsilon$, the probability $\mathbb{P} \left\{ j \in \Gamma \backslash \Omega \big| \nu^{(K)} = \upsilon \right\}$ is given in~(\ref{equ:gamma_pro}). Then, we have
\begin{subequations}
\begin{align}
\mathbb{P} \{ |\Gamma| = K \} &= \mathbb{P} \{ j \notin \Gamma \}^{|\Omega^c|}\nonumber \\
&= \left( 1 - \mathbb{P} \left\{ j \in \Gamma \backslash \Omega \right\} \right)^{n-K}\\
&= \sum_{\upsilon =d}^{\min \{Kd,m \}} \left( 1 - \frac{\binom{\upsilon}{d}}{\binom{m}{d}} \right)^{n-K} \mathbb{P} \{ \nu^{(K)} = \upsilon \},\\
&= \mathbf{E}_{\upsilon} \left[  \left(1 - \frac{\binom{\nu^{(K)}}{d}}{\binom{m}{d}} \right)^{n-K}  \right],
\end{align}
\end{subequations}
where $\mathbb{P} \{ \nu^{(K)} = \upsilon \}$ is given in~(\ref{equ:pro_nu}) and $\mathbf{E}_{\upsilon} [ \cdot ]$ is given in~(\ref{equ:exp_y}). Thus, we get (\ref{equ:lowerbound_COMP}).
\end{proof}

\section{The proof of Corollary~\ref{cor:lower_boundlooser}}\label{sec:app6}
\begin{proof}
Since $K \leq m/d$, the value of $\nu^{(K)}$ ranges from $d$ to $Kd$. With the Jensen’s inequality, the probability $\mathbb{P} \{ \mathbb{S}_{\mathrm{comp}} \}$ is lower bounded by
\begin{subequations}
\begin{align}
\mathbb{P} \{ \mathbb{S}_{\mathrm{comp}} \} &\geq  \left( 1 - \mathbf{E}_{\upsilon} \left[ \frac{\binom{\nu^{(K)}}{d}}{\binom{m}{d}} \right]  \right)^{n-K} \\
&\geq \left( 1 - \frac{\binom{Kd}{d}}{\binom{m}{d}} \right)^{n-K} \\
&= \left( 1 - \prod_{z=0}^{d-1} \frac{Kd - z}{m -z} \right)^{n-K} \\
&\geq \left( 1 - \left( \frac{Kd}{m} \right)^d \right)^{n-K} =\bar{\pi}_K.
\end{align}
\end{subequations}

With all parameter in $\bar{\pi}_K$ fixed except for $d$, we derive the maximum value of $\bar{\pi}_K(d)$ with respect to $d$. Let $h(d) = \left( \frac{Kd}{m} \right)^d$. We first need to perform differentiation on $h(d)$. Before that, we respectively take the logarithm of both sides, i.e.,
\begin{equation}
\ln h(d) = d \ln\left( \frac{Kd}{m} \right).
\end{equation}
Then, we have
\begin{subequations}
\begin{align}
\left( \ln h(d) \right)'  &= \left( d \ln\left( \frac{Kd}{m} \right) \right)',\\
\frac{h'(d)}{h(d)}  &=  \ln\left( \frac{Kd}{m} \right) + d \cdot \frac{m}{Kd} \cdot \frac{K}{m},\\
h'(d) &= h(d) \cdot \left( \ln\left( \frac{Kd}{m} \right) + 1 \right).
\end{align}
\end{subequations}
Now, we perform differentiation on $\bar{\pi}_K(d)$. Similarly, we have
\begin{subequations}
\begin{align}
\ln \bar{\pi}_K(d) &= (n-K)\ln \left( 1 - h(d) \right),\\
\left( \ln \bar{\pi}_K(d) \right)' &= \left( (n-K)\ln \left( 1 - h(d) \right) \right)',\\
\bar{\pi}_K'(d) &= -\bar{\pi}_K(d) \cdot (n-K) \frac{h'(d)}{1-h(d)},\\
\bar{\pi}_K'(d) &= -\bar{\pi}_K(d) \cdot (n-K) \frac{h(d)}{1-h(d)} \left( \ln\left( \frac{Kd}{m} \right) + 1 \right).
\end{align}
\end{subequations}
Let $\bar{\pi}_K'(d) = 0$. Since $\bar{\pi}_K(d) > 0$, $h(d) > 0$, and $n-K > 0$, $\bar{\pi}_K'(d) = 0$ is equivalent to $ \ln\left( \frac{Kd}{m} \right) + 1 = 0$. A simple derivation can yield that $\bar{\pi}_K(d)$ reaches its maximum value when $d= \frac{m}{K \cdot \e}$. By substituting $d = \frac{m}{K\cdot \e}$ into $\bar{\pi}_K(d)$, we get $\bar{\pi}_K^{*} = \left( 1 -  \e^{-\frac{m}{K\cdot \e}} \right)^{n-K}$.
\end{proof}

\section{The proof of Theorem~\ref{thm:nec_m}}\label{sec:app7}
\begin{proof}
As indicated in~(\ref{equ:lower_boundlooser}), the recovery probability $\mathbb{P} \{ \mathbb{S}_{\mathrm{comp}} \}$ has a lower bound of $\mathbb{P} \{ \mathbb{S}_{\mathrm{comp}} \} \geq \bar{\pi}_K$. By substituting $d = \frac{m}{K \beta}$ into $\bar{\pi}_K(d)$, we obtain the following function:
\begin{equation}
  \bar{\pi}_K(\beta) = \left( 1 - \beta^{-\frac{m}{K \beta}} \right)^{n-K},
\end{equation}
where the term $\beta^{-\frac{m}{K \beta}}$ is strictly less than 1. With the Bernoulli’s inequality, we have
\begin{equation}
  \bar{\pi}_K(\beta) \geq 1 - \left( n-K \right) \beta^{-\frac{m}{K \beta}}.
\end{equation}
For any constant $c > 0$, let $m = c K \log_{\beta}(n-K)$. Then, we have
\begin{equation}
 \bar{\pi}_K(\beta) \geq 1 - \left( n-K \right) \beta^{-\frac{c K \log_{\beta}(n-K)}{K \beta}} = 1 - (n-K)^{1-c/\beta}.
\end{equation}
To ensure $\bar{\pi}_K(\beta) > 0$, the inequality $c > \beta$ must be satisfied. In other words, the measurements $m$ should satisfy
\begin{equation}
  m = c K \log_{\beta}(n-K) > \frac{\beta}{\ln \beta} K \ln (n - K).
\end{equation}
\end{proof}

\section{The proof of Corollary~\ref{cor:asymptotic}}\label{sec:app10}
\begin{proof}
With the Bernoulli’s inequality, we have
\begin{equation}
  \bar{\pi}_K \geq 1 - \left( n-K \right) \left( \frac{Kd}{m} \right)^d \geq 1 - n\left( \frac{Kd}{m} \right)^d.
\end{equation}
By substituting $d = \gamma \ln m$ and $n = m^\tau$ into the above inequality, we obtain
\begin{subequations}
\begin{align}
\bar{\pi}_K &\geq 1 - m^\tau \left( \frac{K\gamma \ln m}{m} \right)^{\gamma \ln m}\\
& = 1 - \e^{\tau \ln m} \cdot \e^{\gamma \ln m \ln \frac{K\gamma \ln m}{m}}\\
& = 1 - \e^{\tau \ln m + \gamma \ln m \ln \frac{K\gamma \ln m}{m}}.
\end{align}
\end{subequations}
To ensure $\bar{\pi}_K$ converges to 1 as $m$ goes to infinity, we have
\begin{equation}
  \underset{m \to \infty}\lim \ln m \left( \tau + \gamma \ln \frac{K\gamma \ln m}{m} \right) \to -\infty.
\end{equation}
That is,
\begin{equation}
   \tau + \gamma \ln \frac{K\gamma \ln m}{m} < 0.
\end{equation}
Finally, after a simple derivation, we can obtain
\begin{equation}
   K < \frac{1}{\gamma} \cdot \e^{-\tau/\gamma}  \cdot \frac{m}{\ln m}.
\end{equation}
The proof of finding the maximum value is omitted.

\end{proof}

\section{The proof of Theorem~\ref{thm:noise_confset}}\label{sec:app8}
\begin{proof}
Without loss of generality, we denote by $X_1,X_2,\cdots,X_K$ the $K$ nonzero components of $\boldsymbol{x}$, respectively. For any row $i \in [m]$, we denote by the event $E_i = \{ \mathrm{there}~\mathrm{exists}~j \in \Omega~\mathrm{such}~\mathrm{that}~A_{i,j} = 1~\mathrm{and}~|y_i| \leq \epsilon \}$. Let $L_i$ denotes the number of $A_{i,j} = 1$  for $j \in \Omega$ in row $i$. Then, we have
\begin{equation}
  \mathbb{P} \{ E_i \} = \sum_{\ell = 1}^{K} \mathbb{P} \{ E_i | L_i = \ell \} \mathbb{P} \{ L_i = \ell \}.
\end{equation}
The proof process is similar to that of Theorem~\ref{thm:conf}. The main difference is that the conditional probability $\mathbb{P} \{ E_i | L_i = \ell \}$ is equivalent to the absolute of the sum of $\ell$ nonzero components plus the noise being less than or equal to $\epsilon$. That is, we have
\begin{equation}
  \mathbb{P} \{ E_i | L_i = \ell \} = \mathbb{P} \{ |X_1+X_2+\cdots+X_\ell + v_i| \leq \epsilon \}.
\end{equation}
With $\vecnorm{\boldsymbol{v}}_{\infty} \leq \eta$, the probability $\mathbb{P} \{ E_i | L_i = \ell \}$ is less than or equal to the probability $\mathbb{P} \{ |X_1+X_2+\cdots+X_\ell| \leq \epsilon + \eta \}$. With~(\ref{equ:Ei_pro}) and~(\ref{equ:expy_simple}), we have (\ref{equ:app8_1}) below.

\begin{figure*}[b]
    \centering
    \vspace*{8pt}
    \hrulefill
    \vspace*{8pt}
\begin{subequations}
\begin{align}
\mathbb{P} \{ E_i \} &\leq \sum_{\ell = 1}^{K} \binom{K}{\ell} \left( \frac{d}{m} \right)^{\ell} \left( 1 - \frac{d}{m} \right)^{K - \ell} \Big( F_X^{*\ell}(\epsilon + \eta) - F_X^{*\ell}(-\epsilon - \eta) \Big) \label{equ:bound_noisyy} \\
&\leq \sum_{\ell = 1}^{K} \binom{K}{\ell} \left( \frac{d}{m} \right)^{\ell} \left( 1 - \frac{d}{m} \right)^{K - \ell} \cdot \underset{\ell \in [K]} \max \left( F_X^{*\ell}(\epsilon + \eta) - F_X^{*\ell}(-\epsilon - \eta) \right) \\
&= \left( 1 - \left( 1 - \frac{d}{m} \right)^K \right) \cdot \underset{\ell \in [K]}\max \left( F_X^{*\ell}(\epsilon + \eta) - F_X^{*\ell}(-\epsilon - \eta) \right) \label{equ:app8_1}
\end{align}
\end{subequations}
    \vspace*{8pt}
    \hrulefill
\end{figure*}

With $\mathbb{P} \{ \Omega \subseteq \Gamma \} \geq 1 - \sum_{i = 1}^{m} \mathbb{P} \{ E_i \}$, we get~(\ref{equ:conf_noise}).
\end{proof}

\ifCLASSOPTIONcaptionsoff

\newpage
\fi

\bibliographystyle{IEEEtran}
\bibliography{IEEEabrv,references}

\begin{thebibliography}{10}
\providecommand{\url}[1]{#1}
\csname url@samestyle\endcsname
\providecommand{\newblock}{\relax}
\providecommand{\bibinfo}[2]{#2}
\providecommand{\BIBentrySTDinterwordspacing}{\spaceskip=0pt\relax}
\providecommand{\BIBentryALTinterwordstretchfactor}{4}
\providecommand{\BIBentryALTinterwordspacing}{\spaceskip=\fontdimen2\font plus
\BIBentryALTinterwordstretchfactor\fontdimen3\font minus
  \fontdimen4\font\relax}
\providecommand{\BIBforeignlanguage}[2]{{%
\expandafter\ifx\csname l@#1\endcsname\relax
\typeout{** WARNING: IEEEtran.bst: No hyphenation pattern has been}%
\typeout{** loaded for the language `#1'. Using the pattern for}%
\typeout{** the default language instead.}%
\else
\language=\csname l@#1\endcsname
\fi
#2}}
\providecommand{\BIBdecl}{\relax}
\BIBdecl

\bibitem{Candes2005Decoding}
E.~Cand\`{e}s and T.~Tao, ``Decoding by linear programming,'' \emph{IEEE Trans.
  Inf. Theory}, vol.~51, no.~12, pp. 4203--4215, 2005.

\bibitem{Donoho2006Compressed}
D.~Donoho, ``Compressed sensing,'' \emph{IEEE Trans. Inf. Theory}, vol.~52,
  no.~4, pp. 1289--1306, 2006.

\bibitem{Foucart2013A}
S.~Foucart and H.~Rauhut, \emph{A mathematical introduction to compressive
  sensing}, 1st~ed.\hskip 1em plus 0.5em minus 0.4em\relax Basel, Switzerland:
  Birkh\"{a}user, 2013.

\bibitem{Candes2005Robust}
E.~Cand\`{e}s, J.~Romberg, and T.~Tao, ``Robust uncertainty principles: exact
  signal reconstruction from highly incomplete frequency information,''
  \emph{IEEE Trans. Inf. Theory}, vol.~52, no.~2, pp. 489--509, 2006.

\bibitem{Wainwright2009Sharp}
M.~J. Wainwright, ``Sharp thresholds for high-dimensional and noisy sparsity
  recovery using $\ell _{1}$ -constrained quadratic programming ({L}asso),''
  \emph{IEEE Trans. Inf. Theory}, vol.~55, no.~5, pp. 2183--2202, 2009.

\bibitem{Chen2001Atomic}
S.~S. Chen, D.~L. Donoho, and M.~A. Saunders, ``Atomic decomposition by basis
  pursuit,'' \emph{SIAM Rev.}, vol.~43, no.~1, pp. 129--159, 2001.

\bibitem{Bradley2004Least}
B.~Efron, T.~Hastie, I.~Johnstone, and R.~Tibshirani, ``{Least angle
  regression},'' \emph{Ann. Statist.}, vol.~32, no.~2, pp. 407 -- 499, 2004.

\bibitem{Tropp2007Signal}
J.~A. Tropp and A.~C. Gilbert, ``Signal recovery from random measurements via
  orthogonal matching pursuit,'' \emph{IEEE Trans. Inf. Theory}, vol.~53,
  no.~12, pp. 4655--4666, 2007.

\bibitem{Needell2010ROMP}
D.~Needell and R.~Vershynin, ``Signal recovery from incomplete and inaccurate
  measurements via regularized orthogonal matching pursuit,'' \emph{IEEE J.
  Sel. Top. Signal Process.}, vol.~4, no.~2, pp. 310--316, 2010.

\bibitem{Dai2009SP}
W.~Dai and O.~Milenkovic, ``Subspace pursuit for compressive sensing signal
  reconstruction,'' \emph{IEEE Trans. Inf. Theory}, vol.~55, no.~5, pp.
  2230--2249, 2009.

\bibitem{NEEDELL2009CoSaMP}
D.~Needell and J.~Tropp, ``Co{S}a{M}{P}: Iterative signal recovery from
  incomplete and inaccurate samples,'' \emph{Appl. Comput. Harmon. Anal.},
  vol.~26, no.~3, pp. 301--321, 2009.

\bibitem{Rubinstein2008Efficient}
R.~Rubinstein, M.~Zibulevsky, and M.~Elad, ``Efficient implementation of the
  k-svd algorithm using batch orthogonal matching pursuit,'' \emph{Technical
  Report Computer Science Department, Technion}, vol.~40, no.~8, pp. 1--15,
  2008.

\bibitem{Liu2012OMMP}
E.~Liu and V.~N. Temlyakov, ``The orthogonal super greedy algorithm and
  applications in compressed sensing,'' \emph{IEEE Trans. Inf. Theory},
  vol.~58, no.~4, pp. 2040--2047, 2012.

\bibitem{Wang2012GOMP}
J.~Wang, S.~Kwon, and B.~Shim, ``Generalized orthogonal matching pursuit,''
  \emph{IEEE Trans. Signal Process.}, vol.~60, no.~12, pp. 6202--6216, 2012.

\bibitem{Kwon2014MMP}
S.~Kwon, J.~Wang, and B.~Shim, ``Multipath matching pursuit,'' \emph{IEEE
  Trans. Inf. Theory}, vol.~60, no.~5, pp. 2986--3001, 2014.

\bibitem{Wen2021BMP}
J.~Wen and H.~Li, ``Binary sparse signal recovery with binary matching
  pursuit*,'' \emph{Inverse Probl.}, vol.~37, no.~6, p. 065014, jun 2021.

\bibitem{Thomas2009Iterative}
T.~Blumensath and M.~E. Davies, ``Iterative hard thresholding for compressed
  sensing,'' \emph{Appl. Comput. Harmon. Anal.}, vol.~27, no.~3, pp. 265--274,
  2009.

\bibitem{Foucart2011Hard}
S.~Foucart, ``Hard thresholding pursuit: An algorithm for compressive
  sensing,'' \emph{SIAM J. Numer. Anal.}, vol.~49, no.~6, pp. 2543--2563, 2011.

\bibitem{Tanner2013Normalized}
J.~Tanner and K.~Wei, ``Normalized iterative hard thresholding for matrix
  completion,'' \emph{SIAM J. Sci. Comput.}, vol.~35, no.~5, pp. S104--S125,
  2013.

\bibitem{Donoho1995Denoising}
D.~Donoho, ``De-noising by soft-thresholding,'' \emph{IEEE Trans. Inf. Theory},
  vol.~41, no.~3, pp. 613--627, 1995.

\bibitem{Beck2009A}
A.~Beck and M.~Teboulle, ``A fast iterative shrinkage-thresholding algorithm
  for linear inverse problems,'' \emph{SIAM J. Imag. Sci.}, vol.~2, no.~1, pp.
  183--202, 2009.

\bibitem{Wen2017A}
J.~Wen, Z.~Zhou, J.~Wang, X.~Tang, and Q.~Mo, ``A sharp condition for exact
  support recovery with orthogonal matching pursuit,'' \emph{IEEE Trans. Signal
  Process.}, vol.~65, no.~6, pp. 1370--1382, 2017.

\bibitem{Tropp2004Greed}
J.~Tropp, ``Greed is good: algorithmic results for sparse approximation,''
  \emph{IEEE Trans. Inf. Theory}, vol.~50, no.~10, pp. 2231--2242, 2004.

\bibitem{Wen2020Signal}
J.~Wen, R.~Zhang, and W.~Yu, ``Signal-dependent performance analysis of
  orthogonal matching pursuit for exact sparse recovery,'' \emph{IEEE Trans.
  Signal Process.}, vol.~68, pp. 5031--5046, 2020.

\bibitem{Baraniuk2008A}
R.~Baraniuk, M.~Davenport, R.~DeVore, and M.~Wakin, ``A simple proof of the
  restricted isometry property for random matrices,'' \emph{Constr. Approx.},
  vol.~28, pp. 253--263, 2008.

\bibitem{berinde2008sparse}
R.~Berinde and P.~Indyk, ``Sparse recovery using sparse random matrices,''
  \emph{MIT-CSAIL Tech. Rep.}, 2008.

\bibitem{Jafarpour2009Efficient}
S.~Jafarpour, W.~Xu, B.~Hassibi, and R.~Calderbank, ``Efficient and robust
  compressed sensing using optimized expander graphs,'' \emph{IEEE Trans. Inf.
  Theory}, vol.~55, no.~9, pp. 4299--4308, 2009.

\bibitem{Khajehnejad2011Sparse}
M.~A. Khajehnejad, A.~G. Dimakis, W.~Xu, and B.~Hassibi, ``Sparse recovery of
  nonnegative signals with minimal expansion,'' \emph{IEEE Trans. Signal
  Process.}, vol.~59, no.~1, pp. 196--208, 2011.

\bibitem{Dimakis2012LDPC}
A.~G. Dimakis, R.~Smarandache, and P.~O. Vontobel, ``Ldpc codes for compressed
  sensing,'' \emph{IEEE Trans. Inf. Theory}, vol.~58, no.~5, pp. 3093--3114,
  2012.

\bibitem{Liu2017Reconstruction}
X.-J. Liu, S.-T. Xia, and F.-W. Fu, ``Reconstruction guarantee analysis of
  basis pursuit for binary measurement matrices in compressed sensing,''
  \emph{IEEE Trans. Inf. Theory}, vol.~63, no.~5, pp. 2922--2932, 2017.

\bibitem{Lu2018Binary}
W.~Lu, T.~Dai, and S.-T. Xia, ``Binary matrices for compressed sensing,''
  \emph{IEEE Trans. Signal Process.}, vol.~66, no.~1, pp. 77--85, 2018.

\bibitem{Zhao2025Sparse}
X.~Zhao and X.~Ma, ``Sparse random combinatorial matrices for {IRSA} over the
  binary adder channel,'' \emph{IEEE Commun. Lett.}, vol.~29, no.~9, pp.
  2193--2197, 2025.

\bibitem{Gilbert2010Sparse}
A.~Gilbert and P.~Indyk, ``Sparse recovery using sparse matrices,'' \emph{Proc.
  IEEE}, vol.~98, no.~6, pp. 937--947, 2010.

\bibitem{Asilomar2008Group}
A.~C. Gilbert, M.~A. Iwen, and M.~J. Strauss, ``Group testing and sparse signal
  recovery,'' in \emph{2008 42nd Asilomar Conference on Signals, Systems and
  Computers}, 2008, pp. 1059--1063.

\bibitem{Parvaresh2008Recovering}
F.~Parvaresh, H.~Vikalo, S.~Misra, and B.~Hassibi, ``Recovering sparse signals
  using sparse measurement matrices in compressed dna microarrays,'' \emph{IEEE
  J. Sel. Top. Signal Process}, vol.~2, no.~3, pp. 275--285, 2008.

\bibitem{Duarte2008Single}
M.~F. Duarte, M.~A. Davenport, D.~Takhar, J.~N. Laska, T.~Sun, K.~F. Kelly, and
  R.~G. Baraniuk, ``Single-pixel imaging via compressive sampling,'' \emph{IEEE
  Signal Process Mag.}, vol.~25, no.~2, pp. 83--91, 2008.

\bibitem{Lotfi2020Compressed}
M.~Lotfi and M.~Vidyasagar, ``Compressed sensing using binary matrices of
  nearly optimal dimensions,'' \emph{IEEE Trans. Signal Process.}, vol.~68, pp.
  3008--3021, 2020.

\bibitem{Ferber2022Singularity}
A.~Ferber, M.~Kwan, and L.~Sauermann, ``Singularity of sparse random matrices:
  simple proofs,'' \emph{Comb. Probab. Comput.}, vol.~31, no.~1, p. 21–28,
  2022.

\bibitem{aigner2022sparse}
E.~Aigner-Horev and Y.~Person, ``On sparse random combinatorial matrices,''
  \emph{Discrete Math.}, vol. 345, no.~11, p. 113017, 2022.

\end{thebibliography}
	
\end{document}